\newcommand{\be}{\begin{equation}}
\newcommand{\ee}{\end{equation}}
\newcommand{\bed}{\begin{displaymath}}
\newcommand{\eed}{\end{displaymath}}
\newcommand{\dC}{{\mathbb{C}}}
\newcommand{\dR}{{\mathbb{R}}}
\newcommand{\cB}{{\mathcal B}}
\newcommand{\cC}{{\mathcal C}}
\newcommand{\cD}{{\mathcal D}}
\newcommand{\cE}{{\mathcal E}}
\newcommand{\cG}{{\mathcal G}}
\newcommand{\cH}{{\mathcal H}}
\newcommand{\cK}{{\mathcal K}}
\newcommand{\cL}{{\mathcal L}}
\newcommand{\cN}{{\mathcal N}}
\newcommand{\cP}{{\mathcal P}}
\newcommand{\cS}{{\mathcal S}}
\newcommand{\sH}{{\mathfrak H}}
\newcommand{\sS}{{\mathfrak S}}
\newcommand{\f}{\frac}
\def\senki{{\lbrack\negthinspace [\bot ]\negthinspace\rbrack}}
\def\senki+{{\lbrack\negthinspace [+] \negthinspace\rbrack}}
\renewcommand{\Re}{\mathop\mathrm{Re}}
\renewcommand{\Im}{\mathop\mathrm{Im}}
\renewcommand{\ge}{\geqslant}
\DeclareMathOperator{\dom}{dom}
\DeclareMathOperator{\ran}{ran}
\DeclareMathOperator{\sgn}{sgn}
\DeclareMathOperator*{\slim}{s-lim}
\DeclareMathOperator*{\wlim}{w-lim}
\DeclareMathOperator{\essinf}{ess\,inf}
\newcommand{\imag}{\operatorname{Im}}
\numberwithin{equation}{section}
\newtheorem{theorem}{Theorem}[section]
\newtheorem{proposition}[theorem]{Proposition}
\newtheorem{lemma}[theorem]{Lemma}
\newtheorem{corollary}[theorem]{Corollary}
\newtheorem{definition}[theorem]{Definition}
\theoremstyle{remark}
\newtheorem{remark}[theorem]{Remark}
\begin{document}

\title{{Scattering matrices and Dirichlet-to-Neumann maps}}

\author[J.\ Behrndt]{Jussi Behrndt}
\address{Institut f\"ur Numerische Mathematik, Technische Universit\"at
Graz, Steyrergasse 30, 8010 Graz, Austria}
\email{behrndt@tugraz.at}
\urladdr{www.math.tugraz.at/~behrndt/}

\author[M.\,M. Malamud]{Mark M. Malamud}
\address{Institute of Applied Mathematics and Mechanics,
National Academy of Science of Ukraine,
Dobrovol's'kogo Str. 1,
84100 Slavyansk, 
Donetsk region,
Ukraine}
\email{mmm@telenet.dn.ua}

\author[H.\ Neidhardt]{Hagen Neidhardt}
\address{Weierstrass Institute for Applied Analysis and Stochastics,
Mohrenstr. 39, 10117 Berlin, Germany}
\email{neidhard@wias-berlin.de}
\urladdr{www.wias-berlin.de/~neidhard/}

\maketitle
\begin{abstract}
 A general representation formula for the scattering matrix of a scattering system consisting of two self-adjoint operators in terms of an abstract operator valued Titchmarsh-Weyl $m$-function
is proved. This result is applied to scattering problems for different self-adjoint realizations of
Schr\"{o}din\-ger operators on unbounded domains, Schr\"{o}d\-in\-ger operators with singular potentials supported on hypersurfaces, and orthogonal
couplings of Schr\"{o}dinger operators. In these applications
the scattering matrix is expressed in an explicit form with the help of Dirichlet-to-Neumann maps.
\end{abstract}

\section{Introduction}

Let $A$ and $B$ be self-adjoint operators in a Hilbert space $\sH$ and assume that the resolvent difference
\begin{equation}\label{rdintro}
 (B-\lambda)^{-1}-(A-\lambda)^{-1},\qquad\lambda\in\rho(A)\cap\rho(B),
\end{equation}
belongs to the ideal $\sS_1(\sH)$ of trace class operators. It is well known that in this situation the wave
operators $W_\pm(A,B)$ of the pair $\{A,B\}$
exist and are complete, and the scattering operator $S(A,B)=W_+(A,B)^*W_-(A,B)$ is unitarily equivalent to a multiplication operator induced by a
family $\{S(A,B;\lambda)\}_{\lambda\in\dR}$ of unitary operators $S(A,B;\lambda)$ in the spectral representation of the absolutely continuous part of $A$.
This family is called the scattering matrix of the scattering
system $\{A,B\}$ and is one of the most important quantities in the analysis of scattering
processes; we refer the reader to the monographs \cite{BW83,K76,RS79,W03,Y92} for more details.

The main objective of this paper is to express the scattering matrix of $\{A,B\}$ in terms of an abstract operator valued Titchmarsh-Weyl $m$-function, and to apply
this result to scattering problems for Schr\"{o}dinger operators. In order to explain our main abstract result Theorem~\ref{th:3.1}
consider the closed symmetric operator $S=A\cap B$ and note that $S$ has infinite defect
numbers whenever the resolvent difference of $A$ and $B$ in \eqref{rdintro} is infinite dimensional. The closure of the
operator $T=A\,\widehat + B$, where $\widehat +$ denotes the sum of subspaces in
$\sH\times\sH$, coincides with $S^*$ and clearly $A$ and $B$ are self-adjoint restrictions of $T$. This setting can be fitted in  the framework
of ($B$-)generalized boundary triples and their Weyl functions from \cite{DM95} and allows to introduce boundary maps $\Gamma_0$ and $\Gamma_1$ on $\dom (T)$,
which can be viewed as abstract analogs of the Dirichlet and Neumann trace operators (see also \cite{BL07,BL12,DHMS06,DHMS12}). For $\lambda\in\dC\setminus\dR$ one
defines the Weyl function $M$ via
\begin{equation*}
 M(\lambda)\Gamma_0 f_\lambda=\Gamma_1 f_\lambda,\qquad f_\lambda\in\ker(T-\lambda),
\end{equation*}
see Section~\ref{sec1} for the details.
In PDE applications $M(\lambda)$ is usually the Dirichlet-to-Neumann map (or its inverse, the Neumann-to-Dirichlet map) acting in some boundary space. Roughly speaking our main abstract
result states that the scattering matrix of $\{A,B\}$ is of the form
\begin{equation*}
 S(A,B;\lambda)=I-2i\sqrt{\Im M(\lambda+i0)}\,M(\lambda+i0)^{-1}\sqrt{\Im M(\lambda+i0)}
\end{equation*}
for a.e. $\lambda\in\dR$. This representation is a highly nontrivial generalization of a similar result from \cite{BMN08}, where the special case that the resolvent difference
in \eqref{rdintro} is a finite rank operator was treated in the context of ordinary boundary triples and their Weyl functions from \cite{DM91,DM95}, see also \cite{AP86},
\cite[Chapter 4]{AK99}, \cite[Chapter~3,~$\S$1]{Y92}, and \cite{BMN09} for related results and simple examples.
In contrast to the earlier results in the finite rank case the present representation formula is applicable
to scattering problems for Schr\"{o}dinger operators (or more general elliptic second order differential operators) on unbounded domains,
which we shall explain in more detail next.

In fact, our main motivation for establishing the general representation formula for the scattering matrix in Section~\ref{scatsec} in an abstract extension theory framework
is the applicability to scattering problems for Schr\"{o}dinger operators
with Dirichlet, Neumann, and Robin boundary conditions on exterior domains in $\dR^2$ and $\dR^3$ in Section~\ref{nrsec},
and orthogonal couplings of Schr\"{o}dinger operators, and Schr\"{o}dinger operators with singular potentials supported on curves and
hypersurfaces in $\dR^2$ and $\dR^3$ in Section~\ref{deltasec}. Let us first explain the situation for a scattering system consisting of a Neumann and a Robin realization;
for more details and a slightly more general situation see Section~\ref{nrsec_N-Roben}.
Denote the Dirichlet and Neumann trace operators by $\gamma_D$ and $\gamma_N$, respectively, and consider the self-adjoint operators
\begin{equation*}
 A f=-\Delta f +Vf,\qquad \dom (A) =\bigl\{f\in H^2(\Omega):\gamma_N f=0\bigr\},
\end{equation*}
and
\begin{equation*}
 B f=-\Delta f +Vf,\qquad \dom (B)=\bigl\{f\in H^2(\Omega):\alpha\gamma_D f=\gamma_N f\bigr\},
\end{equation*}
where $\alpha\in C^2(\partial\Omega)$ is real, the potential $V$ is real and bounded, and the domain  $\Omega$ is the complement of
a bounded set with a $C^\infty$-smooth boundary in $\dR^2$ or $\dR^3$. In this situation it is known from \cite{BLLLP10,G11}
that the resolvent difference of $A$ and $B$ satisfies the trace class condition \eqref{rdintro}.
If $\cN(\lambda)$, $\lambda\in\dC\setminus\dR$, denotes the Neumann-to-Dirichlet map, that is,
\begin{equation*}
 \cN(\lambda)\gamma_N f_\lambda=\gamma_D f_\lambda,\qquad -\Delta f_\lambda+Vf_\lambda=\lambda f_\lambda,
\end{equation*}
we obtain in Theorem~\ref{nrthm} that the scattering matrix of the scattering system $\{A,B\}$ admits the form
\begin{equation*}
 S(A,B;\lambda)=I_{\cG_\lambda}+2i\sqrt{\Im \cN(\lambda+i0)}\bigl(I-\alpha\cN(\lambda+i0)\bigr)^{-1}\alpha  \sqrt{\Im \cN(\lambda+i0)}
 \end{equation*}
 for a.e. $\lambda\in\dR$.
Here the space $L^2(\dR,d\lambda,\cG_\lambda)$, where $\cG_\lambda=\overline{\ran(\Im \cN(\lambda+i0))}$ for a.e. $\lambda\in\dR$,
forms a spectral representation of the absolutely continuous part of the Neumann operator $A_N$ and the limits $\Im \cN(\lambda+i0)$ and
$(I-\alpha\cN(\lambda+i0))^{-1}$ have to be interpreted in suitable operator topologies; cf. Theorem~\ref{nrthm} for details.
A similar result is proved in Theorem~\ref{th:4.1} for the pair consisting of the Dirichlet realization of $-\Delta+V$ and the Robin operator $B$ in $L^2(\dR^2)$;
here the trace class property \eqref{rdintro} for $n=2$ is due to Birman \cite{Bir62}.  
For some recent work on related spectral problems for Schr\"{o}dinger operators we refer the reader to \cite{AP04,BR16,BMNW08,GM08,GM09a,GM09,GM11,GMZ07,MalNei14,M04,P08,P15} and for more
general partial elliptic differential operators to \cite{AGW14,BL07,BL12,BLL13-2,BLLR16,BR15,BGW09,G08,G11a,G11b,G11,M10,MPS15,MPS16,P16,PR09}.

Our second set of examples in Section~\ref{deltasec} is a bit more involved. Here scattering systems consisting of the free Schr\"{o}dinger operator
\begin{equation}\label{aq1}
 A f=-\Delta f +Vf,\qquad \dom (A)=H^2(\dR^n),
\end{equation}
and orthogonal couplings of Schr\"{o}dinger operators with Dirichlet and Neumann boundary conditions, or
Schr\"{o}dinger operators with singular $\delta$-potentials of strength $\alpha\in L^\infty(\cC)$ supported on hypersurfaces $\cC$ which split $\dR^2$ or $\dR^3$
into a bounded smooth domain $\Omega_+$ and a smooth exterior domain $\Omega_-$ are studied. The latter operator is of the form
\begin{equation}\label{bq1}
\begin{split}
 B f&=-\Delta f +Vf,\\
 \dom (B)&=\left\{f=\begin{pmatrix}f_+\\f_-\end{pmatrix}\in H^{3/2}_\Delta(\dR^n\setminus\cC):
 \begin{matrix}\gamma_D^+f_+=\gamma_D^-f_-,\qquad\quad\!\\
\alpha\gamma_D^\pm f_\pm=\gamma_N^+ f_++\gamma_N^-f_-\end{matrix}\right\};
\end{split}
 \end{equation}
here  $H^{3/2}_\Delta(\dR^n\setminus\cC)$ is a subspace of $H^{3/2}(\Omega_+)\times H^{3/2}(\Omega_-)$ and
$\gamma_D^\pm$ and $\gamma_N^\pm$ denote the Dirichlet and Neumann trace operators on the interior and exterior domain; cf. Section~\ref{deltasec3} for the details.
Schr\"{o}dinger operators with  $\delta$-potentials play an important role in various physically relevant problems and have therefore attracted
a lot of attention. We refer the interested reader to the review paper \cite{E08}, to e.g. \cite{AKMN13,AGS87,BLL13,BEKS94,EI01,EK03,EK05,EY02} and the monographs \cite{AGHH05,AK99}
for more details and further references. We shall briefly discuss the scattering matrix for the pair of operators in \eqref{aq1}--\eqref{bq1};
for the pairs consisting of $A$ in \eqref{aq1} and the orthogonal sum of the Dirichlet or the Neumann realizations of $-\Delta+V$ on $\Omega_+$ and $\Omega_-$
see Theorem~\ref{Th_Scat_Mat_for_Dir-Free} and Theorem~\ref{nfthm}, respectively.
It follows from \cite{BLL13} that the above choice of $A$ and $B$ satisfies the trace class condition \eqref{rdintro} in dimensions $n=2$ and $n=3$ and we show in this situation in Theorem~\ref{dddthm}
that the scattering matrix is given by
\begin{equation*}
 S(A,B;\lambda)=I_{\cG_\lambda}+2i\sqrt{\Im \cE(\lambda+i0)}\bigl(I-\alpha\cE(\lambda+i0)\bigr)^{-1}\alpha  \sqrt{\Im \cE(\lambda+i0)},
 \end{equation*}
where the function $\cE$ is defined as
\begin{equation*}
 \cE(\lambda)=\bigl(\cD_+(\lambda)^{-1}+\cD_-(\lambda)^{-1}\bigr)^{-1},\qquad\lambda\in\dC\setminus\dR,
\end{equation*}
and $\cD_\pm(\lambda)$ denote the Dirichlet-to-Neumann maps corresponding to $-\Delta+V$ on the domains $\Omega_\pm$.
In this context we also refer the reader to related work by B.S. Pavlov and coauthors in \cite{BMPPY05,MPP07,PA05}, where scattering problems for certain couplings of
Schr\"{o}dinger operators were considered.

\subsection*{Notation} Throughout the paper $\sH$ and $\cG$ denote separable Hilbert spaces with scalar
product $(\cdot,\cdot)$. The linear space of bounded linear operators defined from $\sH$ to $\cG$ is denoted by
$\cB(\sH,\cG)$. For brevity we write $\cB(\sH)$ instead of $\cB(\sH,\sH)$. The ideal of compact operators is denoted by $\sS_\infty(\sH,\cG)$ and $\sS_\infty(\sH)$.
For $p>0$ the Schatten-von Neumann ideals are denoted by $\sS_p(\sH,\cG)$ and $\sS_p(\sH)$; they consist of all
compact operators $T$ with $p$-summable singular values $s_j(T)$ (i.e. eigenvalues of $(T^*T)^{1/2}$).
We shall also work with the operator ideals
\begin{equation*}
\cS_p(\sH,\cG)=\bigl\{T\in \sS_\infty(\sH,\cG) \,|\,  s_j(T) =O(j^{-1/p}) \, \text{as}\,
j\to \infty\bigr\},\quad p>0,
\end{equation*}
and we recall that
    \begin{equation}\label{idealipp}
\cS_p(\sH,\cG)\cdot \cS_q(\sH,\cG) = \cS_r(\sH,\cG), \quad \text{where} \quad \frac{1}{p}+\frac{1}{q}=\frac{1}{r}.
        \end{equation}

The resolvent set and the spectrum of a linear operator $A$ is denoted by $\rho(A)$ and
$\sigma(A)$, respectively. The domain, kernel and range of a linear operator $A$ are denoted by
$\dom(A)$, $\ker(A)$, and $\ran (A)$, respectively. By $\mathfrak B(\dR)$ we denote the Borel sets of $\dR$. The Lebesgue
measure on $\mathfrak B(\dR)$ is denoted by $d\lambda$.

A holomorphic function $M(\cdot): \dC_+ \longrightarrow \cB(\cH)$ is a Nevanlinna (or Herglotz or $R$-function) if its imaginary part
$\imag(M(z)) := \frac{1}{2i}(M(z) - M(z)^*)$, $z\in\dC_+$,  is a non-negative operator. Nevanlinna functions are extended to $\dC_-$ by $M(z):=M(\bar z)^*$, $z\in\dC_-$.
The class of $\cB(\cH)$-valued Nevanlinna functions is denoted by $R[\cH]$. A Nevanlinna function satisfying $\ker(\imag(M(z)) = \{0\}$ ($0\in\rho(\imag(M(z))$) for some,
and hence for all, $z\in\dC_+$, is said to be strict (uniformly strict, respectively). These subclasses are denoted by
$R^s[\cH]$ and $R^u[\cH]$, respectively.

\section{Self-adjoint extensions of symmetric operators and abstract Titchmarsh-Weyl $m$-functions}\label{sec1}

In the preparatory Section~\ref{sec1.1} we recall the notion of boundary triples and
their Weyl functions from extension theory of symmetric operators, and we introduce the
concept of $\sS_p$-regular Weyl functions in Section~\ref{sec2.2}. This notion is
important and useful for our purposes since it is directly related (and
in some situations equivalent) to the $\sS_p$-property of the resolvent difference of certain
self-adjoint extensions.

\subsection{$B$-generalized  boundary triples and their Weyl functions}\label{sec1.1}

In this subsection we review the notion of generalized (or $B$-generalized) and ordinary boundary triples
from extension theory of symmetric operators, and we introduce a new concept, the so-called double $B$-generalized boundary
triples in Definition~\ref{B-Generalized_BT} below. We refer the reader to \cite{BMN02,BGP08,DHMS06,DM91,DM95,GG,Sch12}
for more details on ordinary and $B$-generalized boundary triples, see also \cite{BL07,BL12,Cal39} for related notions.

In the following  $S$  denotes a densely defined, closed, symmetric operator in a
separable Hilbert space $\sH$.

  \begin{definition}[\cite{DM95}]\label{B-Generalized_BT}
A triple $\Pi = \{\cH,\Gamma_0,\Gamma_1\}$ is called a {\em $B$-generalized boundary
triple} for $S^*$ if $\cH$ is a Hilbert space and for some operator $T$ in $\sH$ such
that $\overline{T} = S^*$, the linear mappings $\Gamma_0,\Gamma_1: \dom(T)
\longrightarrow \cH$ satisfy the abstract Green's identity
\begin{equation}\label{Green_formula_for_BGBtrip}
(Tf,g)-(f,Tg)=(\Gamma_1 f,\Gamma_0 g)-(\Gamma_0 f,\Gamma_1 g), \quad f,g\in\dom (T),
\end{equation}
the operator $A_0:=T\upharpoonright\ker(\Gamma_0)$ is  self-adjoint  in $\mathfrak H$, and $\ran(\Gamma_0)=\cH$ holds.

If, in addition, the  operator $A_1:=T\upharpoonright\ker(\Gamma_1)$ is  self-adjoint in
$\mathfrak H$ and $\ran(\Gamma_1)=\cH$, then the triple $\Pi =
\{\cH,\Gamma_0,\Gamma_1\}$ is called a {\em double $B$-generalized boundary triple} for
$S^*$.
\end{definition}
We note that a $B$-generalized boundary triple for $S^*$ exists if and only if $S$ admits self-adjoint extensions in $\sH$, that is, the deficiency
indices of $S$ coincide. Furthermore, if $\Pi = \{\cH,\Gamma_0,\Gamma_1\}$ is a $B$-generalized boundary triple for $S^*$ then
\begin{equation*}
 \dom (S)=\ker(\Gamma_0)\cap\ker(\Gamma_1)
\end{equation*}
holds, the mappings $\Gamma_0,\Gamma_1: \dom(T) \longrightarrow \cH$ are closable when viewed as linear operators from $\dom S^*$ equipped with the graph norm
to $\cH$, and $\ran(\Gamma_1)$ turns out to be dense in $\cH$; cf. \cite[Section 6]{DM95}

The notion of double $B$-generalized boundary triples is inspired by the fact that the mappings in the so-called transposed
triple $\Pi^\top:=\{\cH,\Gamma_1,-\Gamma_0\}$
satisfy the abstract Green's identity but since in general neither $A_1=T\upharpoonright\ker(\Gamma_1)$ is self-adjoint nor $\ran(\Gamma_1)=\cH$ holds the transposed triple $\Pi^\top$
is not a $B$-generalized boundary triple in general. In fact, a $B$-generalized boundary triple $\Pi= \{\cH,\Gamma_0,\Gamma_1\}$ for $S^*$ 
is a double $B$-generalized boundary triple for $S^*$  if and only if
the transposed triple $\Pi^\top=\{\cH,\Gamma_1,-\Gamma_0\}$ is also a $B$-generalized boundary triple for $S^*$.

In some of the proofs of the results in Section~\ref{sec2.2} we shall also make use of the notion of ordinary boundary triples, which
we recall here for the convenience of the reader.

\begin{definition}
 A triple $\Pi = \{\cH,\Gamma_0,\Gamma_1\}$ is called an {\em ordinary boundary triple} for $S^*$
if $\cH$ is a Hilbert space, the linear mappings
$\Gamma_0,\Gamma_1: \dom(S^*) \longrightarrow \cH$ satisfy the abstract Green's identity
\begin{equation}\label{Green_formula_for_BGBtripOBT}
(S^*f,g)-(f,S^*g)=(\Gamma_1 f,\Gamma_0 g)-(\Gamma_0 f,\Gamma_1 g), \quad f,g\in\dom (S^*),
\end{equation}
and the mapping $\Gamma=(\Gamma_0,\Gamma_1)^\top:\dom(S^*)\rightarrow\cH\times\cH$ is surjective.
\end{definition}

Observe that any ordinary boundary triple is automatically a double $B$-genera\-lized
boundary triple;  the converse is not true in general. Ordinary boundary
triples are an efficient tool in extension theory of symmetric operators. In particular,
if $\Pi = \{\cH,\Gamma_0,\Gamma_1\}$ is an ordinary boundary triple for $S^*$, then all
closed proper extensions $\widetilde S\subset S^*$ of $S$ in $\sH$ can be
parametrized by means of the set  of closed linear relations in
$\cH$ via
\begin{equation}\label{para}
 \widetilde S \mapsto \Theta :=\bigl\{\{\Gamma_0 f,\Gamma_1 f\}:f\in\dom(\widetilde S)\bigr\}
\subset \cH\times \cH 
\end{equation}
We write $\widetilde S = S_\Theta$. If $\Theta$ is an operator then \eqref{para}
takes the form
$$
S_\Theta =S^*\upharpoonright \ker(\Gamma_1 - \Theta\Gamma_0)
$$
One verifies $(S_\Theta)^*=S_{\Theta^*}$ and hence the self-adjoint extensions of $S$ in
$\sH$ correspond to the self-adjoint relations  $\Theta$ in $\cH$. We shall use that
$\Theta$ in \eqref{para} is an operator (and not a multivalued linear relation) if and
only if the extension $S_\Theta$ and $A_0=S^*\upharpoonright\ker(\Gamma_0)$ are
disjoint, that is, $A_0\cap S_\Theta=S$.

Next we recall the notions and some important properties of $\gamma$-fields and Weyl functions.
For an ordinary boundary triple they go back to \cite{DM85,DM91}, for $B$-genera\-lized
boundary triples we refer the reader to \cite{DM95}. In the following let $\{\cH,\Gamma_0,\Gamma_1\}$ be a
$B$-generalized  boundary triple for $S^*$; the special case of an ordinary boundary triple is then covered as well.
Observe first  that  for each
$z\in\rho(A_0)$, $A_0 = T \upharpoonright \ker(\Gamma_0)$,  the following direct sum decomposition  holds
\begin{equation}\label{eq:2.8}
  \dom (T) = \dom (A_0)\,\dot +\,\ker(T-z)
  = \ker(\Gamma_0)\,\dot +\,\ker(T-z).
\end{equation}
Hence the restriction of the mapping $\Gamma_0$ to $\ker(T-z)$ is injective.
\begin{definition}[\cite{DM95}]\label{def_weyl_function}
Let $\Pi = \{\cH,\Gamma_0,\Gamma_1\}$ be a $B$-generalized  boundary triple. The {\em
$\gamma$-field} $\gamma(\cdot)$ and the {\em Weyl function} $M(\cdot)$ corresponding to
$\Pi$ are defined by
\begin{equation*}
  \gamma(z):=\bigl(\Gamma_0\upharpoonright\ker(T-z)\bigr)^{-1}\quad\text{and}\quad  M(z) := \Gamma_1 \gamma(z),
  \quad z \in \rho(A_0),
\end{equation*}
respectively.
   \end{definition}
It follows from \eqref{eq:2.8} that for $z\in\rho(A_0)$ the values $\gamma(z)$ of the $\gamma$-field and the values $M(z)$ of the Weyl
function are both well defined linear operators on $\ran(\Gamma_0) = \cH$. Moreover, $\gamma(z)\in\cB(\cH, \sH)$
maps onto $\ker(T- z)\subset\ker(S^*-z)\subset\sH$ and for all $z,\xi\in\rho(A_0)$ the relations
\begin{equation}\label{gform1}
 \gamma(z) = \bigl(I + (z - \xi)(A_0 - z)^{-1}\bigr)\gamma(\xi)=(A_0-\xi)(A_0-z)^{-1}\gamma(\xi)
\end{equation}
and
\begin{equation}\label{gstar}
 \gamma(z)^*=\Gamma_1(A_0-\bar{z})^{-1}\in\cB(\sH, \cH)
\end{equation}
hold. In particular, $\ran(\gamma(z)^*)=\ran(\Gamma_1\upharpoonright\dom (A_0))$ does not
depend on the point $z\in\rho(A_0)$ and
\begin{equation*}
 \bigl(\ran\gamma(z)^*\bigr)^\bot=\ker\gamma(z)=\{0\}
\end{equation*}
shows that $\ran(\gamma(z)^*)$ is dense in $\cH$ for all $z\in\rho(A_0)$. Furthermore, it follows from \eqref{gform1}
that $\gamma(\cdot)$ is holomorphic on $\rho(A_0)$.

The values of the Weyl function $M(\cdot)$ are operators in  $\cB(\cH)$ and $M(z)$ maps $\cH$ into the dense
subspace $\ran(\Gamma_1)\subset\cH$.
The Weyl function and the $\gamma$-field are related by the identity
\begin{equation}\label{gutgut}
 M(z) - M(\xi)^* =  (z-\bar\xi)\gamma(\xi)^*\gamma(z), \quad z,\xi\in\rho(A_0),
\end{equation}
and, in particular, $M(\bar z)=M(z)^*$ for all $z\in\rho(A_0)$.
It follows from \eqref{gform1} and  \eqref{gutgut} that $M(\cdot)$ is holomorphic on $\rho(A_0)$.
Setting $\xi=z$ in \eqref{gutgut} one gets
\begin{equation}\label{imm}
\Im M(z) = \frac{1}{2i}(M(z) - M(z)^*) =  (\Im z)\,\gamma(z)^*\gamma(z)
\end{equation}
and hence $\Im M(z)\geq 0$ for $z\in\dC_+$.
This identity also yields
$$\ker (\Im M(z)) = \ker (\gamma(z)) = \{0\},\qquad  z\in \mathbb C_{\pm},$$
and together
with the holomorphy of $M(\cdot)$ on $\rho(A_0)$ we conclude that  $M(\cdot)$ is a so-called {\em strict} Nevanlinna function with values in $\cB(\cH)$; we shall denote this by
$M(\cdot)\in R^s[\cH]$.
If $\Pi$ is a double $B$-generalized boundary triple then the Weyl function corresponding to the transposed $B$-generalized
boundary triple $\Pi^\top=\{\cH,\Gamma_1,-\Gamma_0\}$ is given
by $-M(\cdot)^{-1}$ and also belongs to the class $R^s[\cH]$, in particular, for $z\in\rho(A_0)\cap\rho(A_1)$ the values $M(z)$ of the Weyl function
of a double $B$-generalized boundary triple are bounded and boundedly invertible operators.

If $\Pi$ is an ordinary boundary triple
then the operators $\gamma(z)$ are boundedly invertible when viewed as operators from $\cH$ onto $\ker(S^*-z)$. In this case it follows from \eqref{imm}
that $\Im M(z)$ is a uniformly positive operator for $z\in\dC_+$, and hence the Weyl function corresponding to an ordinary boundary triple belongs to the class  $R^u[\cH]$
of the so-called {\em uniformly strict} Nevanlinna functions with values in $\cB(\cH)$; cf. \cite{DHMS06}.

\subsection{Resolvent comparability  and $\sS_p$-regular Weyl functions}\label{sec2.2}

Let $\Pi = \{\cH,\Gamma_0,\Gamma_1\}$ be a $B$-generalized boundary triple for $S^*$
with the corresponding Weyl function $M(\cdot)$, and let $A_0 =
S^*\upharpoonright\ker(\Gamma_0)$ and $A_1=S^*\upharpoonright\ker(\Gamma_1)$. It is
important to characterize the property of the resolvent comparability of the operators
$A_0$  and $A_1$  in terms of the Weyl function $M(\cdot)$.  To this end we  introduce
the notion of $\sS_p$-regular Nevanlinna functions in the next definition.
\begin{definition}\label{sregm}
A Nevanlinna function $M(\cdot)\in R[\cH]$ is called $\sS_p$-regular for some $p\in(0,\infty]$
if it admits a  representation
\be\label{eq:2.8a}
M(z) = C + K(z), \qquad K(\cdot): \dC_+ \longrightarrow \sS_p(\cH), \quad z \in \dC_+,
\ee
where $C\in\cB(\cH)$ is a self-adjoint operator such that $0\in\rho(C)$ and $K(\cdot)$ is a strict Nevanlinna function with values in $\cB(\cH)$, that is, $K(\cdot) \in R^s[\cH]$.
The class of $\sS_p$-regular Nevanlinna functions is denoted by $R^{\rm reg}_{\sS_p}[\cH]$.
\end{definition}
In other words, a Nevanlinna function is $\sS_p$-regular if it differs from a strict Nevanlinna function with values in $\sS_p$ by a bounded and boundedly invertible
self-adjoint constant.

\begin{lemma}\label{lem:2.6}
If $M(\cdot) \in R^{\rm reg}_{\sS_p}[\cH]$ for some $p\in(0,\infty]$, then
$-M(\cdot)^{-1} \in R^{\rm reg}_{\sS_p}[\cH]$.
\end{lemma}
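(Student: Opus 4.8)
The plan is to build an $\sS_p$-regular representation of $-M(\cdot)^{-1}$ directly from the given representation $M(z)=C+K(z)$, using $-C^{-1}$ as the new constant part. Before anything else I must make sense of $-M(z)^{-1}$, i.e. show that $M(z)$ is boundedly invertible for every $z\in\dC_+$. Injectivity is immediate from strictness: since $\Im M(z)\ge 0$ with $\ker(\Im M(z))=\{0\}$, the relation $(\Im M(z)f,f)=\Im(M(z)f,f)$ shows that $M(z)f=0$ forces $(\Im M(z))^{1/2}f=0$, hence $f\in\ker(\Im M(z))=\{0\}$. Bounded invertibility, however, does not follow from strictness alone; here is where the structural hypothesis enters. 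Because $K(z)\in\sS_p(\cH)\subset\sS_\infty(\cH)$ is compact and $C$ is boundedly invertible, $M(z)=C+K(z)$ is Fredholm of index zero, and an injective Fredholm operator of index zero is bijective with bounded inverse. I expect this to be the main obstacle, or rather the one genuinely substantive point: it is precisely the $\sS_p$-regular structure, not mere strictness, that guarantees $M(z)^{-1}\in\cB(\cH)$ on all of $\dC_+$.

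Next I would record the candidate decomposition. Set $\widetilde C:=-C^{-1}$; since $C$ is self-adjoint with $0\in\rho(C)$, so is $C^{-1}$, whence $\widetilde C$ is self-adjoint with $0\in\rho(\widetilde C)$. A direct computation gives
\[
-M(z)^{-1}-\widetilde C=C^{-1}-M(z)^{-1}=C^{-1}\bigl(M(z)-C\bigr)M(z)^{-1}=C^{-1}K(z)M(z)^{-1}=:\widetilde K(z).
\]
Since $C^{-1},M(z)^{-1}\in\cB(\cH)$ and $K(z)\in\sS_p(\cH)$, the two-sided ideal property of the Schatten class $\sS_p$ yields $\widetilde K(z)\in\sS_p(\cH)$ for every $z\in\dC_+$.

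It then remains to check that $\widetilde K(\cdot)\in R^s[\cH]$. Holomorphy of $\widetilde K(\cdot)$ on $\dC_+$ is inherited from $z\mapsto M(z)^{-1}$. Because adding the self-adjoint constant $C^{-1}$ alters neither the imaginary part nor the Nevanlinna property, it suffices to treat $-M(\cdot)^{-1}$; the identity $M(z)^{-1}-(M(z)^{-1})^*=M(z)^{-1}\bigl(M(z)^*-M(z)\bigr)(M(z)^*)^{-1}$ gives
\[
\Im\bigl(-M(z)^{-1}\bigr)=M(z)^{-1}\,\Im M(z)\,(M(z)^*)^{-1}\ge 0,\qquad z\in\dC_+ ,
\]
so $\Im\widetilde K(z)=\Im(-M(z)^{-1})\ge 0$. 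Strictness follows from the same formula: writing $g:=(M(z)^*)^{-1}f$ and using $(M(z)^{-1})^*=(M(z)^*)^{-1}$, one has $(\Im\widetilde K(z)f,f)=(\Im M(z)\,g,g)$, which vanishes only if $g=0$ by strictness of $M(\cdot)$, i.e. only if $f=0$; hence $\ker(\Im\widetilde K(z))=\{0\}$. This exhibits $-M(z)^{-1}=\widetilde C+\widetilde K(z)$ as an $\sS_p$-regular representation, and therefore $-M(\cdot)^{-1}\in R^{\rm reg}_{\sS_p}[\cH]$, as claimed.
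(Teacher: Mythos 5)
Your proof is correct and follows essentially the same route as the paper's: the same decomposition $-M(z)^{-1}=-C^{-1}+C^{-1}K(z)M(z)^{-1}$, bounded invertibility of $M(z)$ via injectivity plus the Fredholm alternative for the compact perturbation $C+K(z)$, and strictness of the $\sS_p$-valued part deduced from the identity $\Im\bigl(-M(z)^{-1}\bigr)=M(z)^{-1}\,\Im M(z)\,(M(z)^*)^{-1}$. The only differences are cosmetic (you use strictness of $\Im M=\Im K$ directly for injectivity, where the paper splits into real and imaginary parts).
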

\begin{proof}
Since $M(\cdot) \in R^{\rm reg}_{\sS_p}[\cH]$ for some $p\in(0,\infty]$, there exists a
boundedly invertible self-adjoint operator $C$ and a strict Nevanlinna function
$K(\cdot)\in R^s[\cH]$ such that
\begin{equation}\label{mck}
M(z)=C+K(z),\qquad z\in\dC_+.
\end{equation}
Observe first that $\ker (M(z))=\{0\}$ holds for all $z\in\dC_+$. In fact, $M(z)\varphi=0$ yields
$((C+\Re K(z))\varphi,\varphi)=0$ and $(\Im K(z)\varphi,\varphi)=0$, and as $K(\cdot)$ is strict we conclude $\varphi=0$ from the latter.
Furthermore, as $0\in \rho(C)$ and $K(z)\in\sS_p(\cH)$ it follows from the Fredholm alternative (see, e.g. \cite[Corollary to Theorem~VI.14]{RS72}) that
$0\in \rho(M(z))$ for all $z \in \dC_+$. It is clear that
\begin{equation}\label{mdl}
-M(z)^{-1}=D+L(z),\qquad z\in\dC_+,
\end{equation}
holds with $L(z):=C^{-1}-M(z)^{-1}$, $z\in\dC_+$ and the boundedly invertible self-adjoint operator $D:=-C^{-1}$. Since
\bed
L(z)=C^{-1} - M(z)^{-1} = C^{-1}K(z)M(z)^{-1}, \quad z \in \dC_+,
\eed
and $K(z)\in\sS_p(\cH)$ we conclude $L(z) \in \sS_p(\cH)$, $z \in \dC_+$. Moreover, as
$C^{-1}$ is a bounded self-adjoint operator one gets
\begin{equation*}
 \begin{split}
  \imag L(z) = \imag \bigl(-M(z)^{-1}\bigr) 
 = (M(z)^*)^{-1}\bigl(\imag K(z)\bigr)M(z)^{-1}, \quad z\in\dC_+,
 \end{split}
\end{equation*}
%
%
where in the last equality we have used \eqref{mck}. As $K(\cdot) \in
R^s[\cH]$ by assumption we have $\ker(\Im K(z)) = \{0\}$ and this yields $\ker(\Im L(z))
= \{0\}$ for all $z \in \dC_+$. We have shown that
$L(\cdot):\dC_+\longrightarrow\sS_p(\cH)$ is a strict Nevanlinna function, $L(\cdot) \in
R^s[\cH]$, and hence it follows from \eqref{mdl} that $-M^{-1}(\cdot) \in R^{\rm
reg}_{\sS_p}[\cH]$.
\end{proof}

The assertions in the next lemma on the boundary values of $\sS_1$-regular Nevanlinna functions follow from well-known results due
to Birman and {\`E}ntina \cite{BirEnt67}, de Branges \cite{dB62}, and Naboko \cite{1987Naboko}; cf. \cite[Theorem 2.2]{GMN99}.

\begin{lemma}\label{lemmilein}
Let $M(\cdot)$ be an $\sS_1$-regular Nevanlinna function, $M(\cdot) \in R^{\rm reg}_{\sS_1}[\cH]$. Then the following assertions hold.
\begin{itemize}
 \item [{\rm (i)}] $M(\lambda+i0)  = \lim_{\varepsilon \rightarrow  +0} M(\lambda + i\varepsilon)$ exists for a.e. $\lambda\in\dR$ in the norm of $\cB(\cH)$;
 \item [{\rm (ii)}] $M(\lambda+i0)$ is boundedly invertible in $\cH$ for a.e. $\lambda\in\dR$;
 \item [{\rm (iii)}] $M(\lambda + i\varepsilon) - M(\lambda + i0) \in \sS_p(\cH)$ for $p\in(1,\infty]$, $\varepsilon>0$ and a.e. $\lambda\in\dR$, and
   \begin{equation*}
\lim_{\varepsilon \rightarrow +0} \|M(\lambda+ i\varepsilon) - M(\lambda+i0)\|_{\sS_p(\cH)} =0;
  \end{equation*}
 \item [{\rm (iv)}] $\Im M(\lambda+i0) = \lim_{\varepsilon \rightarrow  +0} \Im M(\lambda + i\varepsilon)$ exists for a.e. $\lambda\in\dR$ in the
 $\sS_1$-norm.
\end{itemize}
\end{lemma}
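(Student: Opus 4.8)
The plan is to reduce everything to the $\sS_1$-valued part of $M(\cdot)$ and then invoke the boundary value theory for trace class valued Nevanlinna functions quoted just before the statement. Since $M(\cdot)\in R^{\rm reg}_{\sS_1}[\cH]$, I would first fix a representation $M(z)=C+K(z)$, $z\in\dC_+$, with $C\in\cB(\cH)$ self-adjoint and boundedly invertible and $K(\cdot)\in R^s[\cH]$ taking values in $\sS_1(\cH)$. The key elementary observation is that $C$ is a self-adjoint \emph{constant}, so that $\Im M(z)=\Im K(z)$ for all $z\in\dC_+$, and moreover $M(\lambda+i\varepsilon)-M(\lambda+i\varepsilon')=K(\lambda+i\varepsilon)-K(\lambda+i\varepsilon')$. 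Thus all four assertions about $M(\cdot)$ are equivalent to the corresponding assertions about the $\sS_1$-valued function $K(\cdot)$, up to the additive constant $C$.

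Next I would apply the results of Birman--{\`E}ntina, de Branges and Naboko in the form of \cite[Theorem~2.2]{GMN99} to the $\sS_1$-valued Nevanlinna function $K(\cdot)$. These give, for a.e. $\lambda\in\dR$: the existence of the norm limit $K(\lambda+i0)=\lim_{\varepsilon\to+0}K(\lambda+i\varepsilon)$ in $\cB(\cH)$; the existence of $\Im K(\lambda+i0)$ as an $\sS_1$-limit; and the fact that $K(\lambda+i\varepsilon)-K(\lambda+i0)\in\sS_p(\cH)$ with $\|K(\lambda+i\varepsilon)-K(\lambda+i0)\|_{\sS_p(\cH)}\to 0$ for every $p\in(1,\infty]$. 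Setting $M(\lambda+i0):=C+K(\lambda+i0)$ then yields (i); since $\Im M=\Im K$, the $\sS_1$-convergence of the imaginary parts yields (iv); and since the constant $C$ cancels in differences, the $\sS_p$-statement for $K$ yields (iii) verbatim.

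For (ii) I would argue by passing through the inverse. By Lemma~\ref{lem:2.6} the function $-M(\cdot)^{-1}$ again belongs to $R^{\rm reg}_{\sS_1}[\cH]$, so part~(i), now applied to $-M(\cdot)^{-1}$, shows that the norm limit $-M(\lambda+i0)^{-1}:=\lim_{\varepsilon\to+0}\bigl(-M(\lambda+i\varepsilon)^{-1}\bigr)$ exists for a.e. $\lambda\in\dR$. For $\varepsilon>0$ one has the identities $M(\lambda+i\varepsilon)\bigl(-M(\lambda+i\varepsilon)^{-1}\bigr)=-I=\bigl(-M(\lambda+i\varepsilon)^{-1}\bigr)M(\lambda+i\varepsilon)$, and since both factors converge in $\cB(\cH)$-norm as $\varepsilon\to+0$ their products converge to the products of the limits. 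Hence $M(\lambda+i0)\bigl(-M(\lambda+i0)^{-1}\bigr)=-I=\bigl(-M(\lambda+i0)^{-1}\bigr)M(\lambda+i0)$ for a.e. $\lambda\in\dR$, which is exactly the bounded invertibility claimed in (ii), with $M(\lambda+i0)^{-1}=-\lim_{\varepsilon\to+0}M(\lambda+i\varepsilon)^{-1}$.

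The genuine analytic content is entirely contained in the cited boundary value theorem, so the only points requiring care in assembling the proof are bookkeeping ones: intersecting the various a.e.\ conclusions (for $K$ and for $-M^{-1}$) into a single set of full measure on which all four statements hold simultaneously, and checking that the additive self-adjoint constant $C$ drops out of $\Im M$ and of the differences $M(\lambda+i\varepsilon)-M(\lambda+i0)$ so that the $\sS_1$- and $\sS_p$-statements transfer from $K$ to $M$ without loss. I expect no real obstacle beyond invoking \cite[Theorem~2.2]{GMN99} correctly and justifying the limit in the product identity, which is immediate from joint norm convergence of the two factors.
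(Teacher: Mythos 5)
Your proposal is correct and follows essentially the same route as the paper's proof: decompose $M(z)=C+K(z)$, transfer assertions (i), (iii), (iv) to the $\sS_1$-valued part $K(\cdot)$ via the Birman--{\`E}ntina/de Branges/Naboko boundary value results quoted in \cite[Theorem 2.2]{GMN99}, and obtain (ii) by applying Lemma~\ref{lem:2.6} to $-M(\cdot)^{-1}$ and passing to the norm limit in the identity $M(\lambda+i\varepsilon)M(\lambda+i\varepsilon)^{-1}=I_\cH$. The only difference is presentational: you spell out the cancellation of the constant $C$ and the intersection of the a.e.\ sets, which the paper leaves implicit.
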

\begin{proof}
By assumption there exists a Nevanlinna function $K(\cdot)$ with values in $\sS_1(\cH)$ such that $M(z)=C+K(z)$, $z\in\dC_+$, holds with some bounded and
boundedly invertible self-adjoint operator $C$. It follows from \cite{BirEnt67,dB62,1987Naboko} (see, e.g. \cite[Theorem 2.2]{GMN99}) that
the limit $K(\lambda+i0)$ exists for a.e. $\lambda\in\dR$ in the $\sS_p$-norm for all $p>1$, and that the limit $\Im K(\lambda+i0)$ exists for a.e. $\lambda\in\dR$
in the $\sS_1$-norm. This yields assertions (i), (iii), and (iv).

In order to prove (ii) we recall that $-M(\cdot)^{-1}$ is $\sS_1$-regular by
Lemma \ref{lem:2.6} and hence the boundary values $M(\lambda+i0)^{-1}$ exist for a.e. $\lambda\in\dR$ in the operator norm. Hence (ii) follows from the
identity
\bed
M(\lambda + i\varepsilon)M(\lambda + i\varepsilon)^{-1} = M(\lambda + i\varepsilon)^{-1}M(\lambda + i\varepsilon) =  I_{\cH}, \quad \lambda \in \dR,
\eed
after passing to the limit $\varepsilon\rightarrow +0$ in the operator norm.
\end{proof}

In the next lemma we investigate $B$-generalized boundary triples with $\sS_p$-regular Weyl functions. In particular, it turns out that the symmetric
extension $A_1=T\upharpoonright\ker(\Gamma_1)$ is self-adjoint and a Krein type resolvent formula is obtained; cf. \cite{BL12,BLL13-2,DM91,DM95}.

%
%
\begin{proposition}\label{prop:2.4}
Let $\Pi = \{\cH,\Gamma_0,\Gamma_1\}$ be a  $B$-generalized boundary triple  for $S^*$ such that
the corresponding Weyl function  $M(\cdot)$ is $\sS_p$-regular for some $p\in(0,\infty]$.
Then the following assertions hold.

\begin{itemize}
  \item [{\rm (i)}]
$\Pi$ is a double $B$-generalized boundary triple for $S^*$;

  \item [{\rm (ii)}]
The Weyl function corresponding to the transposed $B$-generalized boundary triple $\Pi^\top= \{\cH,\Gamma_1,-\Gamma_0\}$ is $\sS_p$-regular;

  \item [{\rm (iii)}]
The operators $A_0$ and $A_1$ are $\sS_p$-resolvent comparable and
 \begin{equation}\label{p-resol_compar_cond_new}
(A_1- z)^{-1} - (A_0 - z)^{-1}=-\gamma(z)M(z)^{-1}\gamma(\bar z)^*\in\sS_p(\sH)
\end{equation}
holds for all $z\in \rho(A_0)\cap\rho(A_1)$.
\end{itemize}
\end{proposition}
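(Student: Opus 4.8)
The plan is to build the proof around the explicit candidate resolvent
\[
R(z):=(A_0-z)^{-1}-\gamma(z)M(z)^{-1}\gamma(\bar z)^*,\qquad z\in\dC_\pm .
\]
Since $M(\cdot)$ is $\sS_p$-regular we may write $M(z)=C+K(z)$ with $0\in\rho(C)$ and $K(z)\in\sS_p(\cH)\subset\sS_\infty(\cH)$, so the Fredholm alternative (exactly as in the proof of Lemma~\ref{lem:2.6}) yields $0\in\rho(M(z))$ for every $z\in\dC_\pm$, and $R(z)$ is a well-defined bounded operator. First I would verify, using only the abstract relations, that $R(z)=(A_1-z)^{-1}$. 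For $h\in\sH$ set $f:=R(z)h$. From $(A_0-z)^{-1}h\in\ker(\Gamma_0)$, the identities $\Gamma_0\gamma(z)=I_\cH$ and $\Gamma_1\gamma(z)=M(z)$, and \eqref{gstar} in the form $\Gamma_1(A_0-z)^{-1}=\gamma(\bar z)^*$, a short computation gives $\Gamma_1 f=\gamma(\bar z)^*h-M(z)M(z)^{-1}\gamma(\bar z)^*h=0$, so $f\in\ker(\Gamma_1)=\dom(A_1)$; moreover $(T-z)f=h$ because $\ran\gamma(z)\subset\ker(T-z)$. Hence $(A_1-z)R(z)=I_\sH$. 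Green's identity \eqref{Green_formula_for_BGBtrip} shows immediately that $A_1$ is symmetric, so $A_1-z$ is injective for $z\in\dC_\pm$ and $R(z)$ is in fact a two-sided bounded inverse; consequently $A_1$ is closed with $\ran(A_1-z)=\sH$ for $z\in\dC_+$ and $z\in\dC_-$, which forces $A_1$ to be self-adjoint. This already establishes the resolvent formula \eqref{p-resol_compar_cond_new} for non-real $z$ and the self-adjointness of $A_1$ needed for (i).

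To complete (i) I would note that $\ran(\Gamma_1)=\cH$: since $M(z)$ is boundedly invertible, $\cH=\ran(M(z))=\ran(\Gamma_1\gamma(z))\subseteq\ran(\Gamma_1)\subseteq\cH$. Thus $\Pi$ is a double $B$-generalized boundary triple. Assertion (ii) then follows at once: the transposed triple $\Pi^\top=\{\cH,\Gamma_1,-\Gamma_0\}$ is a $B$-generalized boundary triple whose Weyl function is $-M(\cdot)^{-1}$ (for $f_z\in\ker(T-z)$ with $\Gamma_0 f_z=\varphi$ and $\Gamma_1 f_z=M(z)\varphi$, the transposed $\gamma$-field maps $M(z)\varphi$ to $f_z$ and the transposed Weyl function maps it to $-\varphi=-M(z)^{-1}(M(z)\varphi)$), and $-M(\cdot)^{-1}\in R^{\rm reg}_{\sS_p}[\cH]$ by Lemma~\ref{lem:2.6}.

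For the full statement of (iii) I must reach arbitrary $z\in\rho(A_0)\cap\rho(A_1)$ and prove the $\sS_p$-membership. I would first propagate the decomposition $M(z)=C+K(z)$, $K(z)\in\sS_p(\cH)$, from $\dC_+$ to all of $\rho(A_0)$: fixing $\xi\in\dC_+$, \eqref{gutgut} together with \eqref{gform1} gives $K(z)=K(\xi)^*+(z-\bar\xi)\gamma(\xi)^*(A_0-\xi)(A_0-z)^{-1}\gamma(\xi)$, while \eqref{imm} shows $\gamma(\xi)^*\gamma(\xi)=(\Im\xi)^{-1}\Im K(\xi)\in\sS_p(\cH)$, whence $\gamma(\xi)\in\sS_{2p}$ and the right-hand side lies in $\sS_p$ by the H\"older-type ideal inclusion $\sS_{2p}\cdot\sS_{2p}\subseteq\sS_p$; via \eqref{gform1} one also gets $\gamma(z)\in\sS_{2p}$ for all $z\in\rho(A_0)$. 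Now fix $z\in\rho(A_0)\cap\rho(A_1)$. If $M(z)\varphi=0$ then $\gamma(z)\varphi\in\ker(T-z)\cap\ker(\Gamma_1)=\ker(A_1-z)=\{0\}$, so $\varphi=\Gamma_0\gamma(z)\varphi=0$; hence $M(z)$ is injective, and the Fredholm alternative (once more $M(z)=C+K(z)$ with $K(z)$ compact) gives $0\in\rho(M(z))$. The computation of the first paragraph applies verbatim, yielding $R(z)=(A_1-z)^{-1}$ and thus \eqref{p-resol_compar_cond_new}; finally $\gamma(z)M(z)^{-1}\gamma(\bar z)^*\in\sS_{2p}\cdot\sS_{2p}\subseteq\sS_p(\sH)$, so $A_0$ and $A_1$ are $\sS_p$-resolvent comparable. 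I expect the main obstacle to be the self-adjointness step in the first paragraph — confirming that the abstractly defined $R(z)$ is a genuine two-sided resolvent and that surjectivity of $A_1-z$ on both half-planes, together with symmetry, upgrades to self-adjointness — and, secondarily, the bookkeeping that carries the $\sS_p$-decomposition (and with it $0\in\rho(M(z))$) to the real points of $\rho(A_0)\cap\rho(A_1)$.
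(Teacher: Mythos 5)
Your proof is correct and takes essentially the same route as the paper's: the same Krein-type computation showing $(A_1-z)R(z)=I_\sH$ delivers both the self-adjointness of $A_1$ and the resolvent formula, $\ran(\Gamma_1)=\ran(M(z))=\cH$ completes (i), Lemma~\ref{lem:2.6} gives (ii), and the Schatten factorization $\gamma(z)\in\sS_{2p}$, $\gamma(\bar z)^*\in\sS_{2p}$ together with $M(z)^{-1}\in\cB(\cH)$ gives the $\sS_p$-membership in (iii). The only substantive difference is that you explicitly justify the passage to real points $z\in\rho(A_0)\cap\rho(A_1)$ — propagating the decomposition $M(z)=C+K(z)$ to all of $\rho(A_0)$ via \eqref{gutgut}--\eqref{gform1} and deriving $0\in\rho(M(z))$ from $\ker(A_1-z)=\{0\}$ plus the Fredholm alternative — a step the paper compresses into the phrase ``and hence for all $z\in\rho(A_0)\cap\rho(A_1)$''.
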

\begin{proof}
(i) Since the Weyl function $M(\cdot)$ is $\sS_p$-regular by assumption,
 Lemma~\ref{lem:2.6} implies, in particular, that $M(z)^{-1}\in\cB(\cH)$ for all
$z\in\dC\setminus\dR$. This yields $\ran(\Gamma_1)=\ran (M(z))=\cH$. Next we check that
$A_1=T\upharpoonright\ker(\Gamma_1)$ is self-adjoint in $\sH$. First of all it follows
from the abstract Green's identity \eqref{Green_formula_for_BGBtrip} that $A_1$ is
symmetric. Let $z\in\dC\setminus\dR$, fix $f\in\sH$ and consider
\begin{equation*}
 h:=(A_0-z)^{-1}f-\gamma(z)M(z)^{-1}\gamma(\bar z)^*f.
\end{equation*}
From Definition~\ref{def_weyl_function} and \eqref{gstar} we obtain
\begin{equation*}
\Gamma_1 h=\Gamma_1(A_0-z)^{-1}f-\Gamma_1\gamma(z)M(z)^{-1}\gamma(\bar z)^*f= 0
\end{equation*}
and hence $h\in\dom (A_1)$. Since $\ran \gamma(z)\subset  \ker(T-z))$  one gets
$$
(A_1-z)h=(T-z)\bigl((A_0-z)^{-1}f-\gamma(z)M(z)^{-1}\gamma(\bar z)^*f\bigr) =f
$$
and we conclude the Krein type resolvent formula \eqref{p-resol_compar_cond_new}
in (iii) and $\ran(A_1-z)=\cH$ for $z\in\dC\setminus\dR$. Hence the symmetric operator
$A_1$ is self-adjoint in $\sH$ and it follows that $\Pi$ is a double $B$-generalized
boundary triple for $S^*$.

(ii) The Weyl function corresponding to the transposed
$B$-generalized boundary triple $\Pi^\top = \{\cH,\Gamma_1, -\Gamma_0\}$ is given by
\begin{equation}
 M^\top(z)=-M(z)^{-1},\qquad z\in\rho(A_0)\cap\rho(A_1),
\end{equation}
which is $\sS_p$-regular by Lemma \ref{lem:2.6}.

(iii)  Since $M(\cdot)$ is $\sS_p$-regular it follows that $\Im M(z)\in\sS_p(\cH)$ for $z \in \dC\setminus\dR$
and hence $\gamma(z)^*\gamma(z)\in \sS_p(\cH)$ by \eqref{imm}. This implies $\gamma(z)\in\sS_{2p}(\cH,\sH)$
and $\gamma(z)^*\in\sS_{2p}(\sH,\cH)$ for $z\in\dC\setminus\dR$, and the resolvent formula in \eqref{p-resol_compar_cond_new}
together with $0\in \rho(M(z))$,  $z \in \dC\setminus\dR$, yields the $\sS_p$-property of the resolvent difference in \eqref{p-resol_compar_cond_new}
for $z\in\dC\setminus\dR$, and hence for all $z\in\rho(A_0)\cap\rho(A_1)$.
\end{proof}

Proposition~\ref{prop:2.4}~(iii)  admits the following  useful improvement.
   \begin{corollary}\label{lem:strong_res_compar_cond}
Let $\Pi = \{\cH,\Gamma_0,\Gamma_1\}$ be a  $B$-generalized boundary triple  for $S^*$ such that
the corresponding Weyl function  $M(\cdot)$ is $\sS_\infty$-regular and assume that $\Im M(z)\in\sS_p(\cH)$ for some $p\in(0,\infty)$ and $z\in\dC_+$. Then
 \begin{equation}\label{p-resol_compar_cond_third}
(A_1 - z)^{-1} - (A_0 - z)^{-1}\in\sS_p(\sH), \quad
z\in \rho(A_0)\cap\rho(A_1).
 \end{equation}
   \end{corollary}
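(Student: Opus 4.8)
The plan is to reduce the claim to the Krein-type resolvent formula already obtained in Proposition~\ref{prop:2.4}, and to use the extra hypothesis on $\Im M(z)$ only to upgrade the Schatten class of the $\gamma$-field. First I would apply Proposition~\ref{prop:2.4} with $p=\infty$: since $M(\cdot)$ is $\sS_\infty$-regular by assumption, that proposition already furnishes the full structural package, namely that $\Pi$ is a double $B$-generalized boundary triple, that $A_1=T\upharpoonright\ker(\Gamma_1)$ is self-adjoint, that $M(z)^{-1}\in\cB(\cH)$ for $z\in\dC\setminus\dR$, and, most importantly, that
\begin{equation*}
 (A_1- z)^{-1} - (A_0 - z)^{-1}=-\gamma(z)M(z)^{-1}\gamma(\bar z)^*
\end{equation*}
holds for all $z\in\rho(A_0)\cap\rho(A_1)$. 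At this stage only $\sS_\infty$-membership of the difference is guaranteed; everything hinges on sharpening this to $\sS_p$.

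Next I would extract the improved estimate for $\gamma(\cdot)$ from the additional assumption. By \eqref{imm} we have $\Im M(z)=(\Im z)\,\gamma(z)^*\gamma(z)$ for $z\in\dC_+$, so $\Im M(z)\in\sS_p(\cH)$ at one point $z\in\dC_+$ forces $\gamma(z)^*\gamma(z)\in\sS_p(\cH)$. Exactly as in the proof of Proposition~\ref{prop:2.4}~(iii), the relation $s_j(\gamma(z))^2=s_j(\gamma(z)^*\gamma(z))$ between singular values then gives $\gamma(z)\in\sS_{2p}(\cH,\sH)$ and $\gamma(z)^*\in\sS_{2p}(\sH,\cH)$. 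To secure this at every relevant spectral point, I would propagate it through the $\gamma$-field identity \eqref{gform1}, which writes $\gamma(w)=(A_0-\xi)(A_0-w)^{-1}\gamma(\xi)$ as a bounded operator applied to $\gamma(\xi)$; hence $\gamma(w)\in\sS_{2p}(\cH,\sH)$ for \emph{every} $w\in\rho(A_0)$ once it holds at a single point.

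Finally, substituting $\gamma(z)\in\sS_{2p}(\cH,\sH)$ and $\gamma(\bar z)^*\in\sS_{2p}(\sH,\cH)$ together with $M(z)^{-1}\in\cB(\cH)$ into the resolvent formula, and using the standard multiplication property of the Schatten ideals (multiplication by a bounded operator preserves the class, and $\sS_{2p}\cdot\sS_{2p}=\sS_p$ since $\tfrac{1}{2p}+\tfrac{1}{2p}=\tfrac1p$), yields $(A_1- z)^{-1} - (A_0 - z)^{-1}\in\sS_p(\sH)$, first for $z\in\dC\setminus\dR$ and then, by the propagation above and the validity of the resolvent formula on all of $\rho(A_0)\cap\rho(A_1)$, for every such $z$. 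This is precisely \eqref{p-resol_compar_cond_third}.

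The only genuinely delicate point is the clean separation of the two roles played by the hypotheses: compactness ($\sS_\infty$-regularity) alone must supply self-adjointness of $A_1$ and the resolvent formula, while the finite summability exponent $p$ enters \emph{solely} through $\Im M(z)$ and must be transported from one point $z\in\dC_+$ to all of $\rho(A_0)\cap\rho(A_1)$ via \eqref{gform1}. I expect no obstacle beyond bookkeeping the exponents correctly and confirming that the $p=\infty$ case of Proposition~\ref{prop:2.4} indeed delivers the resolvent formula with a bounded $M(z)^{-1}$.
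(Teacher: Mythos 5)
Your proof is correct and takes essentially the same route as the paper's own argument: the paper likewise uses \eqref{imm} to deduce $\gamma(z)^*\gamma(z)\in\sS_p(\cH)$ and hence $\gamma(z)\in\sS_{2p}(\cH,\sH)$, and then feeds this into the Krein-type formula \eqref{p-resol_compar_cond_new} (available because $\sS_\infty$-regularity makes Proposition~\ref{prop:2.4} applicable with bounded $M(z)^{-1}$) to get \eqref{p-resol_compar_cond_third} on $\dC_+$ and then on all of $\rho(A_0)\cap\rho(A_1)$. The only difference is that you spell out explicitly the propagation of the $\sS_{2p}$-property via \eqref{gform1} and the ideal multiplication $\sS_{2p}\cdot\sS_{2p}=\sS_p$, which the paper leaves implicit.
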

\begin{proof}
The assumption $\Im M(z)\in\sS_p(\cH)$ for some $p\in(0,\infty)$ and $z\in\dC_+$ together with \eqref{imm}
yields  $\gamma(z)^* \gamma(z) \in\sS_p(\cH)$, and hence
$\gamma(z) \in\sS_{2p}(\cH,\sH)$. The Krein type formula in \eqref{p-resol_compar_cond_new} implies \eqref{p-resol_compar_cond_third} for $z\in\dC_+$,
and hence also for all $z\in\rho(A_0)\cap\rho(A_1)$.
  \end{proof}
  Next we show that the $p$-resolvent comparability condition \eqref{p-resol_compar_cond_new}
guarantees  the  existence of  a $B$-generalized boundary triple such that the corresponding
Weyl function is $\sS_p$-regular.
\begin{proposition}\label{prop:2.5}
Let $A$ and $B$ be self-adjoint operators in  $\sH$ and assume that the closed symmetric
operator $S=A\cap B$ is densely defined. Then
$$
\dom(A)+\dom(B)
$$
is dense in $\dom (S^*)$ with respect to the graph norm and the following
assertions hold.

\begin{itemize}
  \item [{\rm (i)}]
 There is a  $B$-generalized boundary triple $\Pi = \{\cH,\Gamma_0,\Gamma_1\}$ for $S^*$ such that
   \begin{equation}\label{Covering_cond}
A = T\upharpoonright\ker(\Gamma_0) = A_0 \qquad \text{and} \qquad B = T\upharpoonright\ker(\Gamma_1) = A_1.
   \end{equation}

\item [{\rm (ii)}]  If for some $z\in\dC\setminus\dR$ and some $p\in (0,\infty]$ the condition
 \begin{equation}\label{p-resol_compar_cond}
  (B- z)^{-1}-(A - z)^{-1}\in\sS_p(\sH)
 \end{equation}
is satisfied, then there exists a double $B$-generalized boundary triple $\Pi=
\{\cH,\Gamma_0,\Gamma_1\}$ such that \eqref{Covering_cond} holds and the  corresponding
Weyl function $M(\cdot)$ is $\sS_p$-regular.
\end{itemize}
\end{proposition}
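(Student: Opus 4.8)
The plan is to realize the pair $\{A,B\}$ inside the extension-theoretic framework by identifying $S^*$ with the closure of $T=A\,\widehat{+}\,B$ and then constructing the boundary maps around a single non-real point; the Schatten behaviour in (ii) is then read off from the resolvent difference. For the density assertion I would invoke the duality for intersections of linear relations, $(A\cap B)^*=\overline{A^*\,\widehat{+}\,B^*}$. Since $A=A^*$ and $B=B^*$ this yields $S^*=(A\cap B)^*=\overline{A\,\widehat{+}\,B}=\overline{T}$, and because the graph norm on $\dom(S^*)$ is exactly the norm induced on the graph as a subspace of $\sH\times\sH$, the equality $\overline{T}=S^*$ is precisely the statement that $\dom(A)+\dom(B)=\dom(T)$ is graph-norm dense in $\dom(S^*)$.

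For (i) I fix $z_0\in\dC_+$ and, arguing as for \eqref{eq:2.8}, record the direct sum decomposition $\dom(T)=\dom(A)\,\dot+\,\ker(T-z_0)$. Writing $f=f_A+f_B\in\ker(T-z_0)$ and using $(A-z_0)f_A=-(B-z_0)f_B$, a short computation identifies the defect subspace with the range of the resolvent difference,
\begin{equation*}
 \ker(T-z_0)=\ran\bigl((B-z_0)^{-1}-(A-z_0)^{-1}\bigr).
\end{equation*}
This identity is the structural heart of the construction. It allows me to factor the bounded operator $(B-z_0)^{-1}-(A-z_0)^{-1}$ and thereby define a $\gamma$-field $\gamma(z_0)$ mapping a suitable boundary Hilbert space $\cH$ bijectively onto $\ker(T-z_0)$, together with a fixed bounded, boundedly invertible self-adjoint operator on $\cH$ which will serve as the (real) constant part of the Weyl function. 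I then set $\Gamma_0$ equal to the inverse of $\gamma(z_0)$ on $\ker(T-z_0)$ and to $0$ on $\dom(A)$, so that $\ker(\Gamma_0)=\dom(A)$ and $\ran(\Gamma_0)=\cH$, and I define $\Gamma_1$ on the two summands accordingly so that Green's identity \eqref{Green_formula_for_BGBtrip} holds; because the two resolvents enter symmetrically, the construction can be arranged so that $\ker(\Gamma_1)=\dom(B)$. The identity \eqref{imm} shows that the resulting Weyl function is strict, so $\Pi=\{\cH,\Gamma_0,\Gamma_1\}$ is a $B$-generalized boundary triple with $A_0=A$ and $A_1=B$.

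For (ii) I exploit that $(B-z_0)^{-1}-(A-z_0)^{-1}\in\sS_p(\sH)$. Since the $\gamma$-field is obtained by factorizing this resolvent difference, whose range is $\ker(T-z_0)$, one has $\gamma(z_0)\in\sS_{2p}(\cH,\sH)$, and by \eqref{gform1} then $\gamma(z)\in\sS_{2p}(\cH,\sH)$ for every $z\in\rho(A_0)$. Hence \eqref{imm} gives $\Im M(z_0)=(\Im z_0)\,\gamma(z_0)^*\gamma(z_0)\in\sS_p(\cH)$, exactly as in the proof of Proposition~\ref{prop:2.4}(iii), and inserting this into \eqref{gutgut}, written as $M(z)-M(z_0)^*=(z-\bar z_0)\,\gamma(z_0)^*\gamma(z)$, the Schatten--Hölder inequality $\sS_{2p}\cdot\sS_{2p}\subseteq\sS_p$ yields $M(z)-M(z_0)^*\in\sS_p(\cH)$ for all $z\in\rho(A_0)$. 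Recalling that the construction is arranged so that $C:=\Re M(z_0)$ is boundedly invertible and that $M(z_0)^*=C-i\,\Im M(z_0)$ with $\Im M(z_0)\in\sS_p(\cH)$, we obtain $M(z)=C+K(z)$ with $K(z):=M(z)-C\in\sS_p(\cH)$; here $K(\cdot)$ is strict since $\Im K=\Im M$ and $M\in R^s[\cH]$. Thus $M(\cdot)\in R^{\rm reg}_{\sS_p}[\cH]$, and Proposition~\ref{prop:2.4} upgrades $\Pi$ to a double $B$-generalized boundary triple with $A_0=A$, $A_1=B$ and $\sS_p$-regular Weyl function.

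The main obstacle is the construction in (i): choosing the factorization of the resolvent difference and the regularizing operator so that, simultaneously, $\ker(\Gamma_0)=\dom(A)$ with $\Gamma_0$ surjective, $\ker(\Gamma_1)=\dom(B)$, Green's identity holds, and the constant part $C$ of the Weyl function is boundedly invertible. Once this is in place part (ii) is comparatively routine, the only delicate point being the transfer of the $\sS_p$-property from the resolvent difference to $\Im M(z_0)$ through the factorization of $\gamma(z_0)$.
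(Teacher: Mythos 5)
Your reduction of the density statement to the duality $(A\cap B)^*=\overline{A\,\widehat{+}\,B}$ for graphs of self-adjoint operators is correct, and it is a genuinely different (and arguably cleaner) route than the paper's direct argument, which shows that a graph-orthogonal $h$ satisfies $(I+SS^*)h=0$. Likewise your identity $\ker(T-z_0)=\ran\bigl((B-z_0)^{-1}-(A-z_0)^{-1}\bigr)$ is true. Everything after that, however, is a genuine gap: the construction of the boundary maps --- which you yourself call ``the main obstacle'' --- is never carried out, and it cannot be dispatched by the phrase ``because the two resolvents enter symmetrically, the construction can be arranged so that $\ker(\Gamma_1)=\dom(B)$''. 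The situation is in fact rigid, not symmetric: once $\Gamma_0$ is fixed (with $\ker(\Gamma_0)=\dom(A)$ and $\ran(\Gamma_0)=\cH$), any two maps $\Gamma_1$ compatible with Green's identity \eqref{Green_formula_for_BGBtrip} differ by $E\Gamma_0$ with $E$ an everywhere defined symmetric, hence bounded, self-adjoint operator on $\cH$. Therefore an admissible $\Gamma_1$ with $\ker(\Gamma_1)=\dom(B)$ exists if and only if the boundary parameter of $B$ relative to your particular choice of $\Gamma_0$ is a \emph{bounded} self-adjoint operator; for an arbitrary factorization of the resolvent difference (for instance $\gamma(z_0):=R_{B,A}(z_0)$ restricted to $\overline{\ran(R_{B,A}(z_0)^*)}$) this parameter is in general an unbounded self-adjoint operator, and then no admissible $\Gamma_1$ works at all. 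Choosing the scaling of $\Gamma_0$ so that this obstruction disappears is precisely the content of the proposition, and it is what the paper's proof supplies: it realizes the pair through an ordinary boundary triple $\Pi''$ with $A=S^*\upharpoonright\ker(\Gamma_0'')$ and a (generally unbounded) self-adjoint parameter $\Theta$ for $B$, invokes \cite[Theorem 2]{DM91} to convert \eqref{p-resol_compar_cond} into $(\Theta-\xi)^{-1}\in\sS_p(\cH)$, and then regularizes by the polar decomposition, $\Gamma_0=|\Theta|^{1/2}\Gamma_0''$ and $\Gamma_1=|\Theta|^{-1/2}(\Gamma_1''-\Theta\Gamma_0'')$ as in \eqref{WScat_Def-n_of_Gamma}, which simultaneously preserves Green's identity, forces $\ker(\Gamma_1)=\dom(B)$, and produces the Weyl function \eqref{weyli}, $M(z)=|\Theta|^{-1/2}M''(z)|\Theta|^{-1/2}-\sgn(\Theta)$, which is manifestly $\sS_p$-regular with invertible constant part $C=-\sgn(\Theta)$.

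The same gap reappears inside your part (ii). You assert that the factorization ``can be arranged'' so that $\gamma(z_0)\in\sS_{2p}(\cH,\sH)$ \emph{and} the real constant part $C$ of the Weyl function is boundedly invertible, and then the Nevanlinna-function bookkeeping is indeed routine. But the paper explicitly warns (in the remark preceding Proposition~\ref{prop:2.7}) that for an arbitrary double $B$-generalized boundary triple satisfying \eqref{Covering_cond} the condition \eqref{p-resol_compar_cond} does \emph{not} imply $\sS_p$-regularity of the Weyl function; without an extra hypothesis such as the sign condition \eqref{positive_res_differ} of Proposition~\ref{prop:2.7}, the Schatten property cannot be split across the Krein factorization $R_{B,A}(z)=-\gamma(z)M(z)^{-1}\gamma(\bar z)^*$. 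So the two ``arrangements'' you postulate are exactly the assertion of Proposition~\ref{prop:2.5}, not steps in its proof. Note also that part (i) requires a construction with no compactness whatsoever (the resolvent difference is merely bounded there); the paper handles this case by the separate, simpler choice $\Gamma_0:=\Gamma_1'-\Theta\Gamma_0'$, $\Gamma_1:=-\Gamma_0'$ on $\dom(A)+\dom(B)$, whereas your sketch does not distinguish the two cases.
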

\begin{proof}
In order to see that $\dom(A)+\dom(B)$ is dense in $\dom(S^*)$ with respect to the graph norm assume that $h\in\dom(S^*)$
is such that
\begin{equation*}
 (f_A+f_B,h)+\bigl(S^*(f_A+f_B),S^*h\bigr)=0\quad\text{for all}\,\, f_A\in\dom(A),\, f_B\in\dom(B).
\end{equation*}
Then $(Af_A,S^*h)=(f_A,-h)$ and $(Bf_B,S^*h)=(f_B,-h)$ for all $f_A\in\dom(A)$ and
$f_B\in\dom(B)$ yield $S^*h\in\dom(A)\cap\dom(B)=\dom(S)$ and $(I+ SS^*)h=0$. Since the operator $I+ SS^*$ is uniformly positive one gets $h=0$,
that is, $\dom(A)+\dom(B)$ is dense in $\dom(S^*)$ with respect to the graph
norm.

(i)  Observe first that $S=A\cap B$ is a densely defined, closed, symmetric operator with equal deficiency
indices. Hence there exists an ordinary boundary triple
 $\Pi' = \{\cH, \Gamma'_0, \Gamma'_1\}$ for $S^*$ such that $B = S^*\upharpoonright\ker (\Gamma'_0)$; cf.
 \cite{DM95,DM85}. Furthermore, as
 $A$ and $B$  are  disjoint self-adjoint extensions of $S$ there exists a
self-adjoint  operator  $\Theta = \Theta^* \in \cC(\cH)$ such that
$$
A = S^*\upharpoonright \dom (A), \qquad   \dom (A) =
\ker (\Gamma'_1 - \Theta\Gamma'_0),
$$
see, e.g. \cite[Proposition 1.4]{DM95}.
We consider the mappings
\begin{equation*}
 \Gamma_0  := \Gamma'_1 - \Theta\Gamma'_0\quad\text{and}\quad\Gamma_1  := - \Gamma'_0
\end{equation*}
defined on
\begin{equation*}
 \dom(\Gamma_0)=\dom(\Gamma_1):=\dom(A) + \dom(B),
\end{equation*}
and  set
\begin{equation*}
T := S^*\upharpoonright \dom(T), \qquad  \dom(T) := \dom(A) + \dom(B).
\end{equation*}
We claim that $\Pi=\{\cH,\Gamma_0,\Gamma_1\}$ is a $B$-generalized boundary triple for
$S^*$ such that \eqref{Covering_cond} holds. Note first that $A =
T\upharpoonright\ker(\Gamma_0) = A_0$, $B = T\upharpoonright\ker(\Gamma_1) = A_1$, and
that $A$ and $B$ are disjoint self-adjoint extensions of $S$ by construction.
Therefore the argument in the beginning of the proof implies that $\dom(T)= \dom(A) + \dom(B)$ is dense in
$\dom(S^*)$ equipped with the graph norm and hence $\overline{T} = S^*$. Moreover, since
$\Theta=\Theta^*$ and the abstract Green's identity \eqref{Green_formula_for_BGBtripOBT}
holds for the ordinary boundary triple $\Pi'$ we obtain for  $f,g\in\dom (T)$
\begin{equation*}
 \begin{split}
 (\Gamma_1 f,\Gamma_0 g)-(\Gamma_0f,\Gamma_1 g)&=\bigl(-\Gamma_0'f,(\Gamma'_1 - \Theta\Gamma'_0)g\bigr)-\bigl((\Gamma'_1 - \Theta\Gamma'_0)f,-\Gamma_0'g\bigr) \\
 &=(\Gamma_1' f,\Gamma_0' g)-(\Gamma_0'f,\Gamma_1' g) 
=(Tf,g)-(f,Tg),
 \end{split}
\end{equation*}
that is, the abstract Green's identity \eqref{Green_formula_for_BGBtrip} holds. In order
to verify $\ran(\Gamma_0)=\cH$ fix $h\in\cH$. Since $\Pi'$ is an ordinary boundary
triple there exists  $f_0\in \dom(B) = \ker (\Gamma'_0)$ such that $\Gamma'_1 f_0 = h$.
We then obtain
$$
\Gamma_0 f_0 = (\Gamma'_1 - \Theta \Gamma'_0)f_0 =  \Gamma'_1 f_0 = h,
$$
and hence $\ran(\Gamma_0)=\cH$. Summing up, we have shown that $\Pi$ is a $B$-generalized boundary
triple such that \eqref{Covering_cond} holds.

(ii)  Now we choose  an ordinary boundary triple
 $\Pi'' = \{\cH, \Gamma_0'', \Gamma_1''\}$ for $S^*$ such that $A = S^*\upharpoonright\ker (\Gamma_0'')$.
Since   $A$ and $B$  are  disjoint extensions of $S$  there exists an operator  $\Theta = \Theta^* \in \cC(\cH)$ such that
\begin{equation}\label{b--}
B = S^*\upharpoonright \dom (B), \qquad   \dom (B) =
\ker (\Gamma''_1 - \Theta\Gamma''_0).
\end{equation}
It follows from \cite[Theorem 2]{DM91} that the condition \eqref{p-resol_compar_cond} is equivalent to the condition
$(\Theta-\xi)^{-1}\in\mathfrak S_{p}(\cH)$ for all $\xi\in\rho(\Theta)$. In particular, $\rho(\Theta)\cap\mathbb R\not =\emptyset$,
and in the following we assume without loss of generality that $0\in\rho(\Theta)$. Denote the spectral function of the self-adjoint operator $\Theta$ by
$E_\Theta(\cdot)$, let
 $\sgn(\Theta)=\int_\dR\sgn(t)dE_\Theta(t)$
and recall the polar decomposition
     \begin{equation*}
 \Theta = |\Theta|^{1/2}\sgn(\Theta)|\Theta|^{1/2} = \sgn(\Theta)|\Theta|=\vert\Theta \vert \sgn(\Theta).
     \end{equation*}
As $\Theta^{-1}\in\sS_p(\cH)$ we have $|\Theta|^{-1/2} \in \sS_{2p}(\cH)$ and $\ker(|\Theta|^{-1/2}) = \{0\}$. We consider the  mappings
   \begin{equation}\label{WScat_Def-n_of_Gamma}
\Gamma_0 := |\Theta|^{1/2}\Gamma''_0 \quad \text{and}\quad
\Gamma_1 :=|\Theta|^{-1/2}(\Gamma''_1 - \Theta\Gamma''_0)
   \end{equation}
defined on
    \begin{equation}\label{2.8MMM}
\dom(\Gamma_0)=\dom(\Gamma_1) := \bigl\{f\in\dom (S^*):\Gamma''_0  f\in\dom(|\Theta|^{1/2})\bigr\}.
    \end{equation}
We set
    \begin{equation*}
T := S^*\upharpoonright \dom(T), \qquad  \dom(T) := \dom(\Gamma_0) = \dom(\Gamma_1),
   \end{equation*}
and we claim that $\Pi=\{\cH,\Gamma_0,\Gamma_1\}$ is a double $B$-generalized boundary triple for $S^*$. First of all we have for $f,g\in\dom (T)$
\begin{equation*}
\begin{split}
 &(\Gamma_1 f,\Gamma_0 g)-(\Gamma_0 f,\Gamma_1g)\\
 &\quad =\bigl(|\Theta|^{-1/2}(\Gamma''_1 - \Theta\Gamma''_0)f,|\Theta|^{1/2}\Gamma''_0g\bigr)-\bigl(|\Theta|^{1/2}\Gamma''_0 f,|\Theta|^{-1/2}(\Gamma''_1 - \Theta\Gamma''_0)g\bigr)\\
 &\quad =\bigl((\Gamma''_1 - \Theta\Gamma''_0)f,\Gamma''_0g\bigr)-\bigl(\Gamma''_0 f,(\Gamma''_1 - \Theta\Gamma''_0)g\bigr)\\
 &\quad =(\Gamma''_1 f,\Gamma''_0g)-(\Gamma''_0 f,\Gamma''_1 g)\\
\end{split}
\end{equation*}
and since $\Pi''$ is an ordinary boundary triple the abstract Green's identity \eqref{Green_formula_for_BGBtrip} follows.
The condition $\ran(\Gamma_0)=\cH$ is satisfied
since $0\in\rho(\Theta)$, and thus also $0\in\rho(\vert\Theta\vert^{1/2})$.
It is also clear from the definition of $\Gamma_0$ in \eqref{WScat_Def-n_of_Gamma}-\eqref{2.8MMM} that
\begin{equation}\label{aa}
\ker(\Gamma_0)  = \ker(\Gamma''_0) = \dom(A).
\end{equation}
Next it will be shown that
\begin{equation}\label{bb}
 \ker(\Gamma_1) =\dom(B)
\end{equation}
holds. In fact, the inclusion $\ker(\Gamma_1)\subset\dom (B)$ in \eqref{bb} follows from the definition of $\Gamma_1$ in  \eqref{WScat_Def-n_of_Gamma}-\eqref{2.8MMM} and
$\ker(|\Theta|^{-1/2}) = \{0\}$. For the remaining inclusion let $f\in\dom (B)$. Then $\Gamma_1'' f=\Theta\Gamma_0'' f$ by \eqref{b--} and, in particular,
$$\Gamma_0''f\in\dom(\Theta)\subset\dom(\vert\Theta\vert^{1/2}).$$
Hence $\dom(B)\subset\dom(T)$ and $\Gamma_1 f=0$ is clear, that is,
$\dom(B)\subset\ker(\Gamma_1)$ and thus \eqref{bb} is shown. Combining  \eqref{aa}
with  \eqref{bb} yields  \eqref{Covering_cond}. Moreover, we have $\overline{T} = S^*$
since
$$ \dom(A) + \dom(B)=\ker(\Gamma_0)+\ker(\Gamma_1)\subset\dom(T)
$$
and $\dom(A) + \dom(B)$ is dense in $\dom(S^*)$ equipped with the graph norm (as $A$ and
$B$ are disjoint self-adjoint extensions of $S$). Summing up, we have shown that $\Pi=
\{\cH, \Gamma_0, \Gamma_1\}$ is a $B$-generalized boundary triple for $S^*$ such that
\eqref{Covering_cond} holds.

It remains to verify that the Weyl function corresponding to $\Pi$
is $\sS_p$-regular; Proposition~\ref{prop:2.4}~(i) then implies that $\Pi$ is a double $B$-generalized boundary triple.
For this denote the Weyl function corresponding to the ordinary boundary
triple  $\Pi'' $ by  $M''(\cdot)$  and recall that $M''(z)\Gamma_0''f_z=\Gamma_1'' f_z$ for $f_z\in\ker(S^*-z)$ and $z\in\rho(A)$.
We claim that the Weyl function corresponding to $\Pi$ is given by
\begin{equation}\label{weyli}
 M(z) = |\Theta|^{-1/2}M''(z)|\Theta|^{-1/2} - \sgn(\Theta), \quad z \in \rho(A).
\end{equation}
In fact, for $f_z\in\ker(T-z)$ we compute
\begin{equation*}
 \begin{split}
  \bigl(|\Theta|^{-1/2}&M''(z)|\Theta|^{-1/2} - \sgn(\Theta)\bigr)\Gamma_0f_z\\
  &\qquad\qquad=|\Theta|^{-1/2}M''(z)\Gamma_0'' f_z - \sgn(\Theta)\vert\Theta\vert^{1/2}\Gamma_0''f_z\\
  &\qquad\qquad=|\Theta|^{-1/2}\bigl(\Gamma_1''f_z - |\Theta|^{1/2}\sgn(\Theta)\vert\Theta\vert^{1/2}\Gamma_0''f_z\bigr)\\
  &\qquad\qquad=|\Theta|^{-1/2}\bigl(\Gamma_1''f_z - \Theta\Gamma_0''f_z\bigr) 
  = \Gamma_1 f_z
 \end{split}
\end{equation*}
and hence \eqref{weyli} follows by Definition~\ref{def_weyl_function}. Let $K(z) :=
|\Theta|^{-1/2}M''(z)|\Theta|^{-1/2}$,  $z \in \dC_+$  and  let $C := -\sgn(\Theta)$.
Note that $C$ is a boundedly invertible self-adjoint operator and that
$|\Theta|^{-1/2}\in\sS_{2p}(\cH)$ and $M''(z)\in\cB(\cH)$ yield $K(z) \in \sS_p(\cH)$,
$z \in \dC_+$. Moreover, as $M''(\cdot) \in R^u[\cH]$ it follows that $K(\cdot) \in
R^s[\cH]$, and hence the Weyl function $M(\cdot)$ is $\sS_p$-regular. 
\end{proof}
In  applications to scattering problems it is important to know
whether the resolvent $p$-comparability condition \eqref{p-resol_compar_cond_new}, \eqref{p-resol_compar_cond} yields the $\sS_p$-regularity of the Weyl function.
 Apparently a converse statement to Proposition~\ref{prop:2.4}  is false
 for arbitrary double $B$-generalized boundary triples,
 while Proposition \ref{prop:2.5}  ensures  the existence of such a double $B$-generalized boundary triple.
However in the following proposition  we present an affirmative answer to this question
under certain  additional explicit assumptions.
  \begin{proposition}\label{prop:2.7}
Let $A$  and $B$ be self-adjoint operators in $\sH$ such that
 \begin{equation}\label{p-resol_compar_cond2}
  R_{B,A}(z):=(B- z)^{-1}-(A - z)^{-1}\in\sS_p(\sH)
 \end{equation}
holds for some  $z\in\dC\setminus\dR$ and some $p\in (0,\infty]$,
and assume that the closed symmetric operator $S=A\cap B$ is densely defined.
Assume, in addition, that there exists  $\lambda_0 \in \rho(A)\cap \rho(B)\cap \dR $ such that
  \begin{equation}\label{positive_res_differ}
\pm R_{B,A}(\lambda_0) \ge 0.
  \end{equation}
If $\Pi = \{\cH,\Gamma_0,\Gamma_1\}$  is  a double $B$-generalized boundary triple for
$S^*$ such that
condition \eqref{Covering_cond} holds then the corresponding Weyl function
$M(\cdot)$ is $\sS_p$-regular.
\end{proposition}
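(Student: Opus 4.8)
The plan is to exhibit the $\sS_p$-regular decomposition $M(z)=C+K(z)$ of Definition~\ref{sregm} explicitly, using the real point $\lambda_0$ to define the constant term. Since $\Pi$ is a \emph{double} $B$-generalized boundary triple satisfying \eqref{Covering_cond}, I may use that $A=A_0$, $B=A_1$, that $M(z)^{-1}\in\cB(\cH)$ for every $z\in\rho(A_0)\cap\rho(A_1)=\rho(A)\cap\rho(B)$, and that the Krein type resolvent formula
\[
R_{B,A}(z)=(A_1-z)^{-1}-(A_0-z)^{-1}=-\gamma(z)M(z)^{-1}\gamma(\bar z)^*
\]
holds for all such $z$; indeed the computation in the proof of Proposition~\ref{prop:2.4}\,(i) uses only boundedness of $M(z)^{-1}$ and therefore applies also at the \emph{real} point $z=\lambda_0$, where $\gamma(\bar\lambda_0)^*=\gamma(\lambda_0)^*$. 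First I would record that the property \eqref{p-resol_compar_cond2} is independent of the chosen point: the sandwiching identity
\[
R_{B,A}(z)=(B-z_1)(B-z)^{-1}\,R_{B,A}(z_1)\,(A-z_1)(A-z)^{-1},
\]
whose outer factors are bounded, together with the ideal property of $\sS_p(\sH)$, transfers the hypothesis from the given point $z_1\in\dC\setminus\dR$ to $z=\lambda_0$, so that $R_{B,A}(\lambda_0)\in\sS_p(\sH)$.

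The constant will be $C:=M(\lambda_0)$, which is bounded, self-adjoint (since $\lambda_0\in\dR$ gives $M(\lambda_0)=M(\bar\lambda_0)^*=M(\lambda_0)^*$) and boundedly invertible, so $0\in\rho(C)$. The decisive step, where the sign hypothesis enters, is to convert the definiteness $\pm R_{B,A}(\lambda_0)\ge0$ into definiteness of $C^{-1}$. Evaluating the Krein formula at $\lambda_0$ gives $\pm R_{B,A}(\lambda_0)=\mp\gamma(\lambda_0)C^{-1}\gamma(\lambda_0)^*\ge0$; pairing with a vector $f$ turns this into $(\mp C^{-1}\gamma(\lambda_0)^*f,\gamma(\lambda_0)^*f)\ge0$, and since $\ran(\gamma(\lambda_0)^*)$ is dense in $\cH$ this forces $\mp C^{-1}\ge0$, which bounded invertibility of $C^{-1}$ upgrades to uniform positivity. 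Consequently $G:=(\mp C^{-1})^{1/2}$ is a bounded, self-adjoint, boundedly invertible operator, and the Krein formula factorizes as
\[
\pm R_{B,A}(\lambda_0)=\bigl(G\gamma(\lambda_0)^*\bigr)^*\bigl(G\gamma(\lambda_0)^*\bigr).
\]

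From $R_{B,A}(\lambda_0)\in\sS_p(\sH)$ this factorization forces $G\gamma(\lambda_0)^*\in\sS_{2p}(\sH,\cH)$, and cancelling the boundedly invertible $G$ yields $\gamma(\lambda_0)^*\in\sS_{2p}(\sH,\cH)$, i.e.\ $\gamma(\lambda_0)\in\sS_{2p}(\cH,\sH)$. Propagating along \eqref{gform1} then gives $\gamma(z)=(A_0-\lambda_0)(A_0-z)^{-1}\gamma(\lambda_0)\in\sS_{2p}(\cH,\sH)$ for every $z\in\rho(A_0)$. I would finally read off the decomposition from \eqref{gutgut} with $\xi=\lambda_0$: as $\lambda_0$ is real,
\[
K(z):=M(z)-C=(z-\lambda_0)\,\gamma(\lambda_0)^*\gamma(z)\in\sS_p(\cH),\qquad z\in\rho(A_0),
\]
by the Schatten Hölder property $\sS_{2p}\cdot\sS_{2p}\subseteq\sS_p$, in particular on $\dC_+$; and $\Im K(z)=\Im M(z)$ shows that $K$ inherits strictness from $M\in R^s[\cH]$, so $K\in R^s[\cH]$. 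Hence $M(z)=C+K(z)$ witnesses the $\sS_p$-regularity of $M(\cdot)$. The only genuinely delicate point is the middle step: without definiteness of $R_{B,A}(\lambda_0)$ one could only sandwich $\gamma(\lambda_0)^*$ by the indefinite involution $\sgn(C^{-1})$, where cancellation would block passing from $\sS_p$-membership of the product to $\sS_{2p}$-membership of $\gamma(\lambda_0)$; the sign hypothesis is exactly what removes this obstruction.
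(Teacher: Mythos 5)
Your proposal is correct and follows essentially the same route as the paper's proof: Krein's formula at the real point $\lambda_0$, the sign hypothesis plus density of $\ran(\gamma(\lambda_0)^*)$ to get definiteness of $M(\lambda_0)^{-1}$, the factorization of $\pm R_{B,A}(\lambda_0)$ as $X^*X$ with $X=G\gamma(\lambda_0)^*$ (the paper's $T(\lambda_0)T(\lambda_0)^*$ in different notation) to conclude $\gamma(\lambda_0)\in\sS_{2p}$, propagation via \eqref{gform1}, and the decomposition $M(z)=M(\lambda_0)+K(z)$ via \eqref{gutgut}. The only additions are cosmetic: you treat both signs uniformly and you supply the sandwich-identity argument for transferring \eqref{p-resol_compar_cond2} to $\lambda_0$, which the paper invokes without proof.
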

   \begin{proof}
Since $\Pi$ is a double $B$-generalized boundary triple the values of the Weyl function $M(\cdot)$ and the function $-M(\cdot)^{-1}$ are in $\cB(\cH)$.
Moreover, the assumption $\lambda_0 \in \rho(A)\cap \rho(B) \cap \mathbb R$  ensures
that $-M(\lambda_0)^{-1}\in\cB(\cH)$ is a self-adjoint operator and we have
\begin{equation}\label{Krein_type_for-la}
R_{B,A}(\lambda_0)  = (B - \lambda_0 )^{-1}-(A - \lambda_0)^{-1} = -\gamma(\lambda_0)M(\lambda_0)^{-1} \gamma(\lambda_0)^*
\end{equation}
by Proposition~\ref{prop:2.4}~(iii).
Assume that $R_{A,B}(\lambda_0)\geq 0$ in \eqref{positive_res_differ}. Then by \eqref{Krein_type_for-la}
\begin{equation*}
(R_{A,B}(\lambda_0)f,f)=\bigl(-M(\lambda_0)^{-1} \gamma(\lambda_0)^*f,\gamma(\lambda_0)^*f\bigr)\geq 0,\quad f\in\sH,
\end{equation*}
and since $\ran(\gamma(\lambda_0)^*)$ is dense in $\cH$ (see Section \ref{sec1.1}) we have $-M(\lambda_0)^{-1}\geq 0$.
Setting $T(\lambda_0) := \gamma(\lambda_0) (-M(\lambda_0))^{-1/2}\in\cB(\cH,\sH)$ and using the assumption \eqref{p-resol_compar_cond2} for some, and hence for all, $z\in\rho(A)\cap\rho(B)$
we conclude from
\eqref{Krein_type_for-la} that
\bed
R_{B,A}(\lambda_0) = T(\lambda_0)T(\lambda_0)^*
\in \mathfrak S_{p}(\sH).
\eed
This relation yields   $T(\lambda_0)^*  \in  \mathfrak S_{2p}(\sH,\cH)$ and  $T(\lambda_0) \in  \mathfrak S_{2p}(\cH,\sH)$,
and hence $\gamma(\lambda_0) = T(\lambda_0)(-M(\lambda_0))^{1/2}\in  \mathfrak S_{2p}(\cH,\sH)$. It then follows from \eqref{gform1} that
$$\gamma(z) \in
\mathfrak S_{2p}(\cH,\sH)\quad\text{and}\quad \gamma(\xi)^* \in \mathfrak
S_{2p}(\sH,\cH),\qquad z,\xi\in \rho(A).$$ Combining this with  \eqref{gutgut}  implies
$M(z) - M(\lambda_0) \in \mathfrak S_{p}(\cH)$. Therefore, setting $C := M(\lambda_0)$
and $K(z) := M(z) - M(\lambda_0)$, $z \in \dC_+$, we arrive at the representation
\eqref{eq:2.8a}.  Note that $C = M(\lambda_0)$ is a boundedly invertible self-adjoint
operator. Furthermore, since $\Im K(z)=\Im M(z)$ and $M(\cdot) \in R^s[\cH]$ we conclude
$K(\cdot) \in R^s[\cH]$, that is, the Weyl function $M(\cdot)$ is $\sS_p$-regular.
  \end{proof}
\begin{remark}\label{rem_on_Sp-ideals}
{\rm Condition \eqref{positive_res_differ} is satisfied if the symmetric operator $S = A
\cap B$ is semibounded from below and $A$ is chosen to be its Friedrichs
extension. In this case  \eqref{p-resol_compar_cond2} yields the semiboundedness of the
operator $B$ and the inequality \eqref{positive_res_differ} holds for any $\lambda_0$
smaller than the lower bound of $B$. }
\end{remark}
\begin{remark}
{\rm
The density of  $\dom(A)+\dom(B)$ in $\frak H$ under the conditions of
Proposition \ref{prop:2.5} is well known (see for instance \cite{DM85}). The simple proof
presented here and which does not exploit the second Neumann formula  seems to be
new.}
\end{remark}
\begin{remark}
{\rm Proposition \ref{prop:2.4}(i) can also be viewed as an immediate consequence from the fact that
the values of $M^{-1}(\cdot)$ are in $\cB(\cH)$;  cf. \cite{DHMS06,DM95}. For the convenience of the reader
we have presented a simple direct proof.
}
\end{remark}

\section{A representation of the scattering matrix}\label{scatsec}

Let $A$ and $B$ be self-adjoint operators in a Hilbert space $\sH$ and assume that they
are resolvent comparable, i.e. their  resolvent  difference
is a trace class operator,
\begin{equation}\label{trace}
 (B- i)^{-1}-(A - i)^{-1}\in\sS_1(\sH).
\end{equation}
Denote by $\sH^{ac}(A)$ the
absolutely continuous subspace of $A$ and let $P^{ac}(A)$ be the orthogonal projection
in $\sH$ onto $\sH^{ac}(A)$. In accordance with the Birman-Krein theorem, under the assumption \eqref{trace}
the {\it wave operators}
\begin{equation*}
 W_\pm(A,B):=s-\lim_{t\rightarrow\pm\infty} e^{itB} e^{-itA} P^{ac}(A)
\end{equation*}
exist and are complete, i.e. the ranges of $W_\pm(B,A)$ coincide with the absolutely continuous subspace $\sH^{ac}(B)$ of $B$;
cf. \cite{BW83,K76,RS79,W03,Y92}.
The {\it scattering operator} $S(A,B)$ of the {\it scattering system} is defined by
  \begin{equation*}
 S(A,B)  = W_+(A,B)^*W_-(A,B).
  \end{equation*}
The operator  $S(A,B)$ commutes with $A$ and is  unitary  in $\sH^{ac}(A)$, hence it  is unitarily equivalent
to a multiplication operator induced by a family
$\{S(A,B;\lambda)\}_{\lambda\in\dR}$ of unitary operators in a spectral representation of
the absolutely continuous part $ A^{ac}$ of $A$,
\begin{equation*}
 A^{ac}:=A\upharpoonright \dom (A)\cap\sH^{ac}(A).
\end{equation*}
The family $\{S(A,B;\lambda)\}_{\lambda\in\dR}$ is called the {\it scattering matrix} of the scattering system $\{A,B\}$.

In Theorem~\ref{th:3.1} and Corollary~\ref{cor:3.2} below we shall provide a representation of the scattering matrix
$\{S(A,B;\lambda)\}_{\lambda\in\dR}$ of the system $\{A,B\}$ in an extension theory
framework using $B$-generalized  boundary triples and their Weyl functions. It is
assumed that the closed
symmetric operator $S=A\cap B$ is densely defined; in the more general
framework of non-densely defined symmetric operators this assumption can be dropped.
First we discuss the case that $S=A\cap B$ is simple, i.e. $S$ does not contain a self-adjoint part or, equivalently, the condition
\begin{equation*}
\sH=\text{clsp} \bigl\{\ker(S^*-z): z\in \dC\setminus\dR\bigr\}
\end{equation*}
is satisfied; cf. \cite{K49}.
In the sequel the abbreviation a.e. means "almost everywhere with respect to the Lebesgue measure".
\begin{theorem}\label{th:3.1}
Let $A$ and $B$ be self-adjoint  operators in a Hilbert space $\sH$,
assume that the closed symmetric operator $S = A\cap B$ is densely defined and simple, and let
$\Pi = \{\cH,\Gamma_0,\Gamma_1\}$ be a  $B$-generalized boundary triple for $S^*$ such that
$A=T\upharpoonright\ker(\Gamma_0)$ and $B = T\upharpoonright\ker(\Gamma_1)$.
Assume, in addition, that  the Weyl function $M(\cdot)$ corresponding to $\Pi$ is $\sS_1$-regular.

Then
$\{A,B\}$ is a complete scattering system and
\begin{equation*}
 L^2(\dR,d\lambda,\cH_\lambda),\qquad \cH_\lambda:=\overline{\ran (\Im M(\lambda+i0))},
\end{equation*}
forms a spectral representation of $A^{ac}$ such that for a.e. $\lambda \in \dR$
the scattering matrix $\{S(A,B;\lambda)\}_{\lambda\in\dR}$ of the scattering system $\{A,B\}$ admits
the representation
\begin{equation*}
S(A,B;\lambda) = I_{\cH_\lambda}-2i\sqrt{{\Im M(\lambda + i0)}} \,M(\lambda + i0)^{-1} \sqrt{{\Im M(\lambda + i0)}}.
\end{equation*}
\end{theorem}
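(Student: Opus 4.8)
The plan is to reduce the assertion to a known abstract scattering-matrix formula from the finite-rank/ordinary-boundary-triple theory and then push it through the $\sS_1$-regularity machinery developed above. The starting point is the Birman-Krein stationary approach: since the resolvent difference $(B-\lambda_0)^{-1}-(A-\lambda_0)^{-1}=-\gamma(\lambda_0)M(\lambda_0)^{-1}\gamma(\lambda_0)^*$ by Proposition~\ref{prop:2.4}(iii) lies in $\sS_1(\sH)$, the wave operators exist and are complete, so $\{A,B\}$ is a complete scattering system and the scattering matrix is well defined in a spectral representation of $A^{ac}=A_0^{ac}$. The task is therefore to identify the correct spectral representation and to compute $S(A,B;\lambda)$ inside it.

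First I would fix the spectral representation. Because $M(\cdot)\in R^s[\cH]$ with $\Im M(z)=(\Im z)\gamma(z)^*\gamma(z)$ by \eqref{imm}, the operator-valued measure governing the absolutely continuous spectrum of $A_0$ is driven by the boundary values $\Im M(\lambda+i0)$, which exist in $\sS_1$-norm for a.e.\ $\lambda$ by Lemma~\ref{lemmilein}(iv). The natural model space is $L^2(\dR,d\lambda,\cH_\lambda)$ with $\cH_\lambda=\overline{\ran(\Im M(\lambda+i0))}$, and the key analytic input is that $\gamma(\lambda+i0)$ admits, after composing with $\sqrt{\Im M(\lambda+i0)}$, suitable nontangential boundary limits; this is where the $\sS_1$-regularity and the norm-limit statements of Lemma~\ref{lemmilein}(i)--(iii) enter decisively. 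I would construct the unitary map from $\sH^{ac}(A)$ onto $L^2(\dR,d\lambda,\cH_\lambda)$ via $f\mapsto \bigl(\tfrac{1}{\sqrt\pi}\sqrt{\Im M(\lambda+i0)}\,\gamma(\lambda+i0)^*f\bigr)_{\lambda}$ and verify it diagonalizes $A_0^{ac}$.

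Next I would compute the scattering matrix. Using the stationary formula for $S(A,B;\lambda)$ in terms of the resolvent difference and the boundary values of $\gamma$ and $M$, and substituting the Krein-type resolvent identity \eqref{p-resol_compar_cond_new}, the family $S(A,B;\lambda)$ acting on $\cH_\lambda$ should collapse to
\[
S(A,B;\lambda)=I_{\cH_\lambda}-2i\sqrt{\Im M(\lambda+i0)}\,M(\lambda+i0)^{-1}\sqrt{\Im M(\lambda+i0)},
\]
the boundedly invertible factor $M(\lambda+i0)^{-1}$ existing a.e.\ by Lemma~\ref{lemmilein}(ii). The algebraic heart of the computation is the identity $\Im M(\lambda+i0)=\tfrac{1}{2i}(M(\lambda+i0)-M(\lambda+i0)^*)$ together with a short manipulation showing the displayed operator is unitary on $\cH_\lambda$, mirroring the finite-rank computation of \cite{BMN08}.

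The main obstacle I anticipate is purely analytic rather than algebraic: justifying the existence and the correct operator topology of the boundary limits that feed into the stationary formula, and ensuring that all the compositions with the (only $\sS_1$, not finite-rank) factors $\gamma(\lambda+i0)$ and $\sqrt{\Im M(\lambda+i0)}$ converge in the right sense for a.e.\ $\lambda$. Controlling $\gamma(\lambda+i0)$ itself is delicate since $\gamma(z)\in\sS_2$ generically; the remedy is to work throughout with the symmetrized quantities $\sqrt{\Im M(\lambda+i0)}\,\gamma(\lambda+i0)^*$, whose boundary behavior is governed by Lemma~\ref{lemmilein}, thereby bypassing any need for pointwise limits of $\gamma$ alone. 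Reconciling the abstract Birman-Krein-Yafaev stationary machinery with this extension-theoretic parametrization, and verifying measurability and unitarity a.e., is the technically demanding part of the argument.
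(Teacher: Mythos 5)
Your outline has the same broad skeleton as the paper's proof (Krein-type resolvent formula plus stationary trace-class scattering theory), but two essential pieces are missing or incorrect, so this is a genuine gap rather than a complete alternative argument. First, your proposed diagonalizing map $f\mapsto\tfrac{1}{\sqrt\pi}\sqrt{\Im M(\lambda+i0)}\,\gamma(\lambda+i0)^*f$ is not the right object: boundary values of the $\gamma$-field at real points need not exist, and even interpreting $\gamma(\lambda+i0)^*f$ weakly via \eqref{gutgut} one gets boundary values of Cauchy-type transforms, which carry principal-value contributions, so the map is not an isometry onto $L^2(\dR,d\lambda,\cH_\lambda)$ (already in the scalar rank-one case it fails). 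The paper avoids boundary values of $\gamma$ entirely: it fixes the Hilbert--Schmidt operator $C=\gamma(-i)$, invokes the Birman--{\`E}ntina theorem to obtain the a.e.\ $\sS_1$-norm derivative $K(\lambda)=\tfrac{d}{d\lambda}C^*E_A((-\infty,\lambda))C$, and proves the identity $\Im M(\lambda+i0)=\pi(1+\lambda^2)K(\lambda)$; this is what legitimizes the model space $L^2(\dR,d\lambda,\cH_\lambda)$. Note also that the hypothesis that $S$ is simple must enter somewhere: in the paper it is precisely what yields the cyclicity condition $\sH^{ac}(A)=\text{clsp}\,\{E_A^{ac}(\delta)\ran(C):\delta\in\mathfrak B(\dR)\}$ required by the abstract stationary theorem (Theorem~\ref{att}); your proposal never invokes simplicity at all.

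Second, the heart of the proof --- passing from ``a stationary formula'' to the concrete expression $I_{\cH_\lambda}-2i\sqrt{\Im M(\lambda+i0)}\,M(\lambda+i0)^{-1}\sqrt{\Im M(\lambda+i0)}$ --- is asserted (``should collapse to'') rather than carried out. The paper factorizes $(B-i)^{-1}-(A-i)^{-1}=\phi(A)CGC^*$ with $\phi(t)=(t+i)/(t-i)$ and $G=-M(i)^{-1}$, applies Theorem~\ref{att} to get $S(A,B;\lambda)=I_{\cH_\lambda}+2i(1+\lambda^2)\sqrt{\Im M(\lambda+i0)}\,Z(\lambda)\sqrt{\Im M(\lambda+i0)}$, and then must prove $Z(\lambda)=-\tfrac{1}{1+\lambda^2}M(\lambda+i0)^{-1}$. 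That last step is a substantial resolvent computation resting on the identity $\Gamma_0(B-z)^{-1}=-M(z)^{-1}\gamma(\bar z)^*$, a partial-fraction decomposition of $(B+i)^{-1}(B-\bar z)^{-1}(B-i)^{-1}$, the evaluation of $Q^*Q$ in terms of $M(\pm i)^{-1}$, and the a.e.\ existence of $M(\lambda+i0)^{-1}$ in operator norm (Lemma~\ref{lemmilein}, which in turn needs Lemma~\ref{lem:2.6}). The specific cancellation of the $Q^*Q$ term, the $\phi(\lambda)G$ term, and the limit $Q^*(B-\lambda-i0)^{-1}Q$ down to exactly $-\tfrac{1}{1+\lambda^2}M(\lambda+i0)^{-1}$ is not visible without doing this algebra; until you supply it (or an equally explicit substitute), the claimed representation of the scattering matrix is not established.
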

\begin{proof}
The proof of Theorem~\ref{th:3.1} consists of three separate steps and is essentially based on Theorem~\ref{att}.
Parts of the proof follow the lines in \cite[Proof of Theorem 3.1]{BMN09},
where the special case of a symmetric operator
$S$ with finite deficiency indices  was  treated.

First of all we note that the $\sS_1$-regularity assumption on $M(\cdot)$ together with Proposition~\ref{prop:2.4}~(iii)
ensures that the resolvent difference of $A$ and $B$ is a trace class operator.
Hence the wave operators $W_\pm(A,B)$  exist and are complete and $\{A,B\}$ is a complete scattering system, see, e.g. \cite[Theorem VI.5.1]{Y92}.
\\[-2ex]

\noindent {\it Step 1.} According to Proposition~\ref{prop:2.4}~(iii) the resolvent difference of $A$ and $B$ in
\eqref{trace} can be written in a Krein type resolvent formula of the form
\begin{equation}\label{krein}
 (B-z)^{-1}-(A-z)^{-1} =-\gamma(z)M(z)^{-1}\gamma(\bar z)^*,
\qquad z\in\rho(A)\cap\rho(B).
     \end{equation}
In particular, from \eqref{krein} and \eqref{gform1} we get
\begin{equation*}
\begin{split}
 (B-i)^{-1}&-(A-i)^{-1} = -{\gamma(i)}\,{M(i)^{-1}}\gamma(-i)^*\\
 &=-(A+i)(A-i)^{-1}{\gamma(-i)}\,{M(i)^{-1}}\gamma(-i)^* =\phi(A)C G C^*
\end{split}
 \end{equation*}
where
\begin{equation}\label{WScat3.2A}
 \phi(t) :=\frac{t+i}{t-i},\quad t\in\dR,\qquad C := {\gamma(-i)}\quad\text{and}\quad G := -{M(i)^{-1}}.
\end{equation}
We claim that the condition
   \begin{equation}\label{okja}
\sH^{ac}(A)= \text{clsp}\,\bigl\{E_A^{ac}(\delta)\ran C:\delta\in\mathfrak B(\dR)\bigr\}
  \end{equation}
in Theorem~\ref{att} is satisfied. In fact, since $S$ is assumed to be simple we have
\begin{equation*}
 \sH=\text{clsp}\,\bigl\{\ker(S^*-z):z\in\dC\setminus\dR\bigr\}.
\end{equation*}
Furthermore, using
$\ker(S^*-z)=\overline{\ker(T-z)}$,  $z\in\dC\setminus\dR$, which follows from \eqref{eq:2.8},
and $\ran(\gamma(z))=\ker(T-z)$, $z\in\dC\setminus\dR$, it follows that
\begin{equation*}
\begin{split}
 \sH&=\text{clsp}\,\bigl\{\ker(T-z):z\in\dC\setminus\dR\bigr\}\\
    &=\text{clsp}\,\bigl\{\gamma(z)h:z\in\dC\setminus\dR,\, h\in \cH\bigr\}\\
    &=\text{clsp}\,\bigl\{(A+i)(A-z)^{-1}\gamma(-i)h:z\in\dC\setminus\dR,\, h\in \cH\bigr\}\\
    &=\text{clsp}\,\bigl\{(A+i)(A-z)^{-1}C h:z\in\dC\setminus\dR,\,h\in \cH \bigr\}\\
    &=\text{clsp}\,\bigl\{E_A(\delta) C h:h\in \cH,\,\delta\in\mathfrak B(\dR)\bigr\}
\end{split}
    \end{equation*}
    and hence
\begin{equation*}
 \sH^{ac}(A)=\text{clsp}\,\bigl\{P^{ac}(A) E_A(\delta) C h: h\in\cH,\,\delta\in\mathfrak B(\dR)\bigr\}.
\end{equation*}
Since $E^{ac}_A(\delta)=P^{ac}(A) E_A(\delta)$ this implies \eqref{okja}.\\[-2ex]

\noindent {\it Step 2.}
Now we apply Theorem~\ref{att} to obtain a
preliminary form of the scattering matrix $\{S(A,B;\lambda)\}_{\lambda\in\dR}$.
Since $M(\cdot)$ is $\sS_1$-regular by assumption we have $\Im M(i)=\gamma(i)^*\gamma(i)\in\sS_1(\cH)$ (see \eqref{imm}) and hence
$\gamma(i)\in\mathfrak S_2(\cH,\sH)$ and
\begin{equation*}
 C = {\gamma(-i)}=\bigl(I-2i (A+i)^{-1}\bigr)\gamma(i)\in\mathfrak S_2(\cH,\sH).
\end{equation*}
Therefore the function $\lambda\mapsto C^*E_A((-\infty,\lambda))C$  is  $\mathfrak S_1(\cH)$-valued and in accordance with \cite[Lemma 2.2]{BirEnt67}
this function is $\mathfrak S_1(\cH)$-differentiable  for a.e. $\lambda\in\dR$.
We compute  its derivative
\begin{equation*}
 \lambda\mapsto K(\lambda)=\frac{d}{d\lambda} C^* E_A((-\infty,\lambda))C
\end{equation*}
and the  square root $\lambda\mapsto \sqrt{K(\lambda)}$ for a.e. $\lambda\in\dR$. First
we note that by the $\mathfrak S_1(\cH)$-generalization of the Fatou theorem (see
\cite[Lemma 2.4]{BirEnt67})
\begin{equation}\label{WScat3.4A}
\begin{split}
 K(\lambda)& = \lim_{\varepsilon\rightarrow 0+}\frac{1}{2\pi i} C^* \bigl((A-\lambda-i\varepsilon)^{-1}-
 (A-\lambda+i\varepsilon)^{-1}\bigr)C  \\
 &=\lim_{\varepsilon\rightarrow 0+}\frac{\varepsilon}{\pi } C^* \bigl((A-\lambda-i\varepsilon)^{-1}(A-\lambda+i\varepsilon)^{-1}\bigr)C
 \end{split}
\end{equation}
for a.e. $\lambda\in\dR$. On the other hand, inserting formula
$$
\gamma(\lambda+i\varepsilon) = (A+i)(A-\lambda-i\varepsilon)^{-1}\gamma(-i) =
(A+i)(A-\lambda-i\varepsilon)^{-1}C
$$
(see  \eqref{gform1})  into  \eqref{imm} gives
 \begin{equation*}
\begin{split}
 \Im M(\lambda+i\varepsilon)& = \varepsilon\gamma(\lambda + i\varepsilon)^*\gamma(\lambda +
 i\varepsilon) \\
    & =\varepsilon
C^*(I+A^2)\bigl(A-\lambda+i\varepsilon\bigr)^{-1}
\bigl(A-\lambda-i\varepsilon\bigr)^{-1}C.
 \end{split}
\end{equation*}
Combining this relation with \eqref{WScat3.4A}  implies
\begin{equation*}
  \Im M(\lambda+i 0)= \lim_{\varepsilon\rightarrow 0+} {\Im M(\lambda+i\varepsilon)}=\pi(1+\lambda^2) K(\lambda)
\end{equation*}
for a.e. $\lambda\in\dR$. In particular, $\ran({\Im M(\lambda+i0)})= \ran (K(\lambda))$
for a.e. $\lambda\in\dR$ and hence
\begin{equation*}
 \cH_\lambda=\overline{\ran\bigl({\Im M(\lambda+i0)}\bigr)}=\overline{\ran (K(\lambda))}\quad \text{for a.e.}\,\,\,\lambda\in\dR.
\end{equation*}
Therefore $L^2(\dR,d\lambda,\cH_\lambda)$ is a spectral representation of $A^{ac}$ and in accordance with
Theorem~\ref{att}  the scattering matrix $\{S(A,B;\lambda)\}_{\lambda\in\dR}$ is given by
\begin{equation}\label{sss}
\begin{split}
S(A,B;\lambda)&=I_{\cH_\lambda}+2\pi i(1+\lambda^2)^2\sqrt{K(\lambda)}Z(\lambda)  \sqrt{K(\lambda)}\\
               &=I_{\cH_\lambda}+2 i(1+\lambda^2)\sqrt{{\Im M(\lambda+i0)}}Z(\lambda)  \sqrt{{\Im M(\lambda+i0)}}
 \end{split}
\end{equation}
for a.e. $\lambda\in\dR$, where  $Z(\cdot)$ is  given by \eqref{WScatA8},
   \begin{equation}\label{z}
 Z(\lambda)=\frac{1}{\lambda+i}Q^*Q+\frac{1}{(\lambda+i)^2}\phi(\lambda)G+\lim_{\varepsilon\rightarrow 0+}
 Q^*\bigl(B-(\lambda+i\varepsilon)\bigr)^{-1}Q,
\end{equation}
and
\begin{equation*}
Q=\phi(A)CG = -(A+i)(A-i)^{-1}{\gamma(-i)}\,{M(i)^{-1}}= -
{\gamma(i)}\,{M(i)^{-1}}\in\sS_2(\cH,\sH).
\end{equation*}
Observe that due to the last inclusion the limit in \eqref{z} exists for   a.e. $\lambda\in\dR$ in every
$\mathfrak S_p$-norm with $p>1$  and  the operator-valued function
 $Z(\cdot)$ in \eqref{z} is well defined a.e. on $\dR$; cf. Lemma~\ref{lemmilein}.\\[-2ex]

\noindent{\it Step 3.}
In the third and final step we prove that
\begin{equation}\label{zz}
 Z(\lambda) = -\frac{1}{1+\lambda^2} {M(\lambda+i0)^{-1}}
\end{equation}
for a.e. $\lambda\in\dR$.
Then inserting this expression in \eqref{sss} one arrives at the asserted
form of the scattering matrix.

Applying the mapping $\Gamma_0$ to   \eqref{krein} and using $\ker (\Gamma_0) = \dom (A)$ and Definition~\ref{def_weyl_function}
one gets
\begin{equation}\label{WScat_3.9}
 \Gamma_0(B-z)^{-1}= \Gamma_0(A-z)^{-1}-\Gamma_0 \gamma(z) M(z)^{-1}\gamma(\bar z)^*
                         =-M(z)^{-1}\gamma(\bar z)^*
 \end{equation}
for $z\in\rho(A)\cap\rho(B)$ and hence
\begin{equation*}
  \Gamma_0(B+i)^{-1} = - {M(-i)^{-1}}\gamma(i)^* =\bigl(- {\gamma(i)}\, {M(i)^{-1}}\bigr)^* = Q^*.
\end{equation*}
This yields
\begin{equation}\label{opop}
\begin{split}
Q^*(B- z)^{-1}Q&=\Gamma_0(B+i)^{-1} (B- z)^{-1}Q \\
&=\Gamma_0\bigl(Q^*(B-\bar z)^{-1}(B-i)^{-1}\bigr)^*\\
&=\Gamma_0\bigl(\Gamma_0(B+i)^{-1}(B-\bar z)^{-1}(B-i)^{-1}\bigr)^*.
 \end{split}
\end{equation}
In order to compute this expression we note that
\begin{equation*}
 \begin{split}
  (B&+i)^{-1}(B-\bar z)^{-1}(B-i)^{-1}\\
    &=\frac{-1}{1+\bar z^2}\bigl((B+i)^{-1}-(B-\bar z)^{-1}\bigr) +\frac{1}{2i(\bar z-i)}\bigl((B+i)^{-1}-(B-i)^{-1}\bigr)
 \end{split}
\end{equation*}
and hence \eqref{WScat_3.9}  implies
\begin{equation*}
 \begin{split}
\Gamma_0(B+i)^{-1}(B-&\bar z)^{-1}(B-i)^{-1}=\frac{1}{1+\bar z^2}\bigl(M(-i)^{-1}\gamma(i)^*-M(\bar z)^{-1}\gamma(z)^*\bigr)\\
  &-\frac{1}{2i(\bar z-i)}\bigl(M(-i)^{-1}\gamma(i)^*-M(i)^{-1}\gamma(-i)^*\bigr).
 \end{split}
\end{equation*}
Taking into account that $(M(\bar\mu)^{-1})^* = {M(\mu)^{-1}}$ for $\mu\in\rho(A)\cap\rho(B)$ we obtain for the adjoint
\begin{equation*}
 \begin{split}
  \bigl(\Gamma_0(B+i)^{-1}(B-\bar z)^{-1}(B-i)^{-1}\bigr)^*&=
 \frac{1}{1+z^2}\bigl({\gamma(i)}\,{M(i)^{-1}} - {\gamma(z)}\,{M(z)^{-1}}\bigr)\\
 &+ \frac{1}{2i(z+i)}\bigl({\gamma(i)}\, {M(i)^{-1}}- {\gamma(-i)}\,{M(-i)^{-1}}\bigr).
 \end{split}
\end{equation*}
In turn, combining this identity with  \eqref{opop}  yields
\begin{equation*}
 \begin{split}
Q^*(B - z)^{-1}Q h &= \Gamma_0\bigl(\Gamma_0(B+i)^{-1}(B-\bar z)^{-1}(B-i)^{-1}\bigr)^*   \\
&=
 \frac{1}{1 + z^2}\bigl(M(i)^{-1}- M(z)^{-1}\bigr) + \frac{1}{2i(z + i)}\bigl(M(i)^{-1}-M(-i)^{-1}\bigr)
 \end{split}
\end{equation*}
for
$z \in\rho(A)\cap\rho(B)$. Setting  here $z=\lambda + i \varepsilon \in \mathbb C_+$  and passing to the limit as $\varepsilon\rightarrow 0$ one derives
\begin{equation}\label{WScat3.11}
\begin{split}
 \lim_{\varepsilon\rightarrow 0+} Q^*\bigl(B-(\lambda+i\varepsilon)\bigr)^{-1}Q&=
 \frac{1}{1+\lambda^2}\bigl({M(i)^{-1}}- {M(\lambda+i0)^{-1}}\bigr)\\
 &\qquad +\frac{1}{2i(\lambda+i)} \bigl({M(i)^{-1}}- {M(-i)^{-1}}\bigr)
\end{split}
 \end{equation}
for a.e. $\lambda\in\dR$; note that by Lemma~\ref{lemmilein} the limit ${M(\lambda+i0)^{-1}}\in\cB(\cH)$ exists for a.e. $\lambda\in\dR$.

Moreover, we have
 \begin{equation*}
 \begin{split}
  Q^*Q  =& \bigl({\gamma(i)}\, {M(i)^{-1}}\bigr)^*  {\gamma(i)}\, {M(i)^{-1}} = M(-i)^{-1}\gamma(i)^* \gamma(i)M(i)^{-1}  \\
             = & \frac{1}{2i}M(-i)^{-1}\bigl(M(i)-M(-i)\bigr)M(i)^{-1}           =  \frac{1}{2i} \bigl(M(-i)^{-1}-M(i)^{-1}\bigr).
 \end{split}
 \end{equation*}
Inserting this relation and  \eqref{WScat3.11} into    \eqref{z}  and taking notations \eqref{WScat3.2A} into account
we obtain for a.e. $\lambda\in\dR$
\begin{equation*}
 \begin{split}
  Z(\lambda)  & =\frac{1}{\lambda+i}Q^*Q + \frac{1}{(\lambda + i)^2}\phi(\lambda)G + Q^*\bigl(B-(\lambda+i0)\bigr)^{-1}Q  \\
  &=\frac{1}{2i(\lambda+i)} \bigl(M(-i)^{-1}-M(i)^{-1}\bigr)  - \frac{1}{1+\lambda^2} M(i)^{-1}   \\
  &\quad +\frac{1}{1+\lambda^2}\bigl(M(i)^{-1} - {M(\lambda+i0)^{-1}}\bigr)
          +\frac{1}{2i(\lambda+i)} \bigl(M(i)^{-1}-M(-i)^{-1}\bigr)   \\
  &=-\frac{1}{1+\lambda^2} {M(\lambda + i0)^{-1}},
 \end{split}
\end{equation*}
that is, \eqref{zz} holds.
\end{proof}

\begin{remark}
{\rm
Instead of the assumption that the Weyl function is $\sS_1$-regular one may assume in Theorem~\ref{th:3.1} that
$R_{B,A}(z)=(B- z)^{-1}-(A - z)^{-1}\in\sS_1(\sH)$ holds for some $z\in\rho(A)\cap\rho(B)$ and $R_{B,A}(\lambda_0)\geq 0$ for some $\lambda_0\in\dR\cap\rho(A)\cap\rho(B)$;
cf. Proposition~\ref{prop:2.7}.
}
\end{remark}

Our next task is to drop the assumption  of the simplicity of  $S$ in
Theorem~\ref{th:3.1}.   If $S=A\cap B$ is not simple then the Hilbert space $\sH$ admits an
orthogonal decomposition $\sH=\sH_0\oplus\sH^\prime$ with $\sH_0\not=\{0\}$ such that
\begin{equation}\label{sssi}
 S=S_0\oplus S^\prime,
\end{equation}
where $S_0$ is a self-adjoint operator in the Hilbert space $\sH_0$ and $S^\prime$ is a simple symmetric operator in the Hilbert space $\sH^\prime$; cf. \cite{K49}.
It follows that there exist self-adjoint
extensions $A'$ and $B'$ of $S'$ in  $\sH'$ such that
\begin{equation*}
 A= S_0\oplus A'\qquad\text{and}\qquad B= S_0\oplus B'.
\end{equation*}
By restricting the boundary maps of a $B$-generalized boundary triple for $S^*$ one obtains a
$B$-generalized  boundary triple for the operator $(S')^*$ with the same Weyl function. Applying Theorem~\ref{th:3.1} to the pair $\{A',B'\}$ yields the
following variant of Theorem~\ref{th:3.1}; cf. \cite[Proof of Theorem 3.2]{BMN09} for the same argument in the special case of finite rank perturbations.
\begin{corollary}\label{cor:3.2}

Let $A$ and $B$ be self-adjoint  operators in a Hilbert space $\sH$,
assume that the closed symmetric operator $S = A\cap B$ is densely defined and decomposed in $S=S_0\oplus S^\prime$ as in \eqref{sssi}, and
let $L^2(\dR,d\lambda,\cG_\lambda)$ be a spectral representation of $S_0^{ac}$.
Let $\Pi = \{\cH,\Gamma_0,\Gamma_1\}$ be a  $B$-generalized boundary triple for $S^*$ as in Theorem~\ref{th:3.1} such that the corresponding
Weyl function $M(\cdot)$ is $\sS_1$-regular.

Then
$\{A,B\}$ is a complete scattering system and
\begin{equation*}
 L^2(\dR,d\lambda,\cH_\lambda\oplus \cG_\lambda),\qquad \cH_\lambda:=\overline{\ran (\Im M(\lambda+i0))},
\end{equation*}
forms a spectral representation of $A^{ac}$ such that for a.e. $\lambda \in \dR$
the scattering matrix $\{S(A,B;\lambda)\}_{\lambda\in\dR}$ of the scattering system $\{A,B\}$ admits
the representation
\begin{equation*}
 S(A,B; \lambda) = \begin{pmatrix} S(A',B'; \lambda) & 0 \\ 0 & I_{\cG_\lambda}\end{pmatrix},
\end{equation*}
where
\begin{equation*}
 S(A',B'; \lambda)=I_{\cH_\lambda}-2i\sqrt{{\Im M(\lambda+i0)}}\;M(\lambda+i0)^{-1} \sqrt{{\Im
 M(\lambda+i0)}}.
\end{equation*}
\end{corollary}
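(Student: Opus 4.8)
The plan is to reduce the non-simple situation to Theorem~\ref{th:3.1} by passing to the simple part $S'$ of $S$ and exploiting that $A$ and $B$ share the self-adjoint block $S_0$. Writing $A = S_0 \oplus A'$ and $B = S_0 \oplus B'$ on $\sH = \sH_0 \oplus \sH'$, I would first verify that $S' = A' \cap B'$ by comparing graphs: if $f = f_0 \oplus f' \in \dom(A) \cap \dom(B)$ satisfies $Af = Bf$, then $f_0 \in \dom(S_0)$ and $f' \in \dom(A') \cap \dom(B')$ with $A'f' = B'f'$, so that $S = S_0 \oplus (A' \cap B')$ and hence $A' \cap B' = S'$ is simple.

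The central step is to restrict the boundary maps of $\Pi$ to a $B$-generalized boundary triple $\Pi' = \{\cH, \Gamma_0', \Gamma_1'\}$ for $(S')^*$ carrying the same Weyl function. Since $S_0$ is self-adjoint, one has $\ker(S^* - z) = \{0\} \oplus \ker((S')^* - z)$ for every $z \in \dC \setminus \dR$, and $\dom(S_0) \oplus \{0\} \subset \dom(S) = \ker(\Gamma_0) \cap \ker(\Gamma_1)$, so $\Gamma_0$ and $\Gamma_1$ annihilate the $\sH_0$-component. I would therefore set $\dom(T') := \{f' : 0 \oplus f' \in \dom(T)\}$, $T' := (S')^* \upharpoonright \dom(T')$, and $\Gamma_i' f' := \Gamma_i(0 \oplus f')$, and then transport Green's identity \eqref{Green_formula_for_BGBtrip}, the surjectivity $\ran(\Gamma_0') = \cH$, and the identifications $A' = T'\upharpoonright\ker(\Gamma_0')$, $B' = T'\upharpoonright\ker(\Gamma_1')$ through the block structure. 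Because $\gamma(z)$ maps into $\{0\} \oplus \ker((S')^* - z)$, the $\gamma$-field and the Weyl function of $\Pi'$ coincide with those of $\Pi$; in particular $M(\cdot)$ is still $\sS_1$-regular.

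With $\Pi'$ at hand, Theorem~\ref{th:3.1} applies directly to the simple operator $S' = A' \cap B'$ and the pair $\{A', B'\}$, yielding completeness of that scattering system, the spectral representation $L^2(\dR, d\lambda, \cH_\lambda)$ of $(A')^{ac}$ with $\cH_\lambda = \overline{\ran(\Im M(\lambda + i0))}$, and the stated form of $S(A', B'; \lambda)$. To reassemble, I would use that $A$ and $B$ coincide on $\sH_0$, so $e^{itB}e^{-itA} = I_{\sH_0} \oplus e^{itB'}e^{-itA'}$ and consequently $W_\pm(A, B) = P^{ac}(S_0) \oplus W_\pm(A', B')$ and $S(A, B) = I_{\sH_0^{ac}} \oplus S(A', B')$; completeness of $\{A, B\}$ itself follows from the trace-class resolvent difference provided by Proposition~\ref{prop:2.4}~(iii) together with the Birman--Krein theorem. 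Juxtaposing the given spectral representation $L^2(\dR, d\lambda, \cG_\lambda)$ of $S_0^{ac}$ with the one for $(A')^{ac}$ then exhibits $L^2(\dR, d\lambda, \cH_\lambda \oplus \cG_\lambda)$ as a spectral representation of $A^{ac} = (A')^{ac} \oplus S_0^{ac}$, in which the decomposition of $S(A, B)$ takes the asserted block-diagonal form.

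I expect the only real obstacle to be the bookkeeping in the middle step: checking that restricting $\Gamma_0, \Gamma_1$ genuinely yields a $B$-generalized boundary triple for $(S')^*$ with $A'$ and $B'$ as the distinguished self-adjoint extensions and with Weyl function exactly $M(\cdot)$, rather than a triple for some larger or smaller operator. Once this identification is secured, the remaining assertions are either immediate consequences of Theorem~\ref{th:3.1} or the elementary fact that identical dynamics on $\sH_0$ produce trivial scattering there.
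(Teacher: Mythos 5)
Your proposal is correct and follows essentially the same route as the paper: the paper likewise decomposes $A=S_0\oplus A'$, $B=S_0\oplus B'$, restricts the boundary maps of $\Pi$ to obtain a $B$-generalized boundary triple for $(S')^*$ with the same ($\sS_1$-regular) Weyl function, applies Theorem~\ref{th:3.1} to the simple pair $\{A',B'\}$, and assembles the block-diagonal scattering matrix from the trivial dynamics on $\sH_0$. The paper leaves the restriction step and the wave-operator bookkeeping implicit (citing the analogous argument in \cite{BMN09}), whereas you carry them out explicitly; your details are sound.
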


\section{Scattering matrices for Schr\"{o}dinger operators on exterior domains}\label{nrsec}

Our main objective in this section is to derive representations of the scattering matrices for pairs of self-adjoint Schr\"{o}dinger operators with
Dirichlet, Neumann and Robin boundary conditions on unbounded domains with smooth compact boundaries in terms of
Dirichlet-to-Neumann and Neumann-to-Dirichlet maps. After some necessary preliminaries in Sections~\ref{prelisec} and
\ref{prelisec2} we formulate and prove our main results Theorem~\ref{th:4.1} and Theorem~\ref{nrthm} in Sections~\ref{D-N_sect} and \ref{nrsec_N-Roben}, respectively.
Both theorems follow in a similar way from our general result Theorem~\ref{th:3.1} by fixing a suitable $B$-generalized boundary triple and
verifying that the corresponding Weyl function is $\sS_1$-regular. We also mention that along the way we obtain classical results on singular value estimates
of resolvent differences due to Birman, Grubb and others without any extra efforts; cf. Remarks~\ref{remarki} and~\ref{remarkii}.

\subsection{Preliminaries on Sobolev spaces, trace maps, and Green's second identity}\label{prelisec}
Let $\Omega\subset\dR^n$  be an exterior domain, that is, $\dR^n\setminus\Omega$ is
bounded and assume that the boundary $\partial\Omega$ of $\Omega$ is $C^\infty$-smooth.
We denote by $H^s(\Omega)$, $s\in\dR$, the usual $L^2$-based Sobolev spaces on the unbounded exterior domain $\Omega$, and by $H^r(\partial\Omega)$, $r\in\dR$,
the corresponding Sobolev spaces on the compact $C^\infty$-boundary $\partial\Omega$. The corresponding scalar products will be denoted by $(\cdot,\cdot)$, and sometimes
the space is used as an index.

Recall that the Dirichlet and Neumann trace operators $\gamma_D$ and $\gamma_N$, originally defined as linear mappings from
$C_0^\infty(\overline\Omega)$ to $C^\infty(\partial\Omega)$, admit continuous extensions onto $H^2(\Omega)$ such that the mapping
\begin{equation}\label{gammdgammn}
 \begin{pmatrix} \gamma_D \\ \gamma_N\end{pmatrix}:H^2(\Omega)\rightarrow H^{3/2}(\partial\Omega)\times H^{1/2}(\partial\Omega)
\end{equation}
is surjective. The spaces
\begin{equation}\label{hs1}
 H^s_\Delta(\Omega)=\bigl\{f\in H^s(\Omega):\Delta f\in L^2(\Omega) \bigr\},\qquad s\in [0,2],
\end{equation}
equipped with the Hilbert scalar products
\begin{equation}\label{hs2}
 (f,g)_{H^s_\Delta(\Omega)}=(f,g)_{H^s(\Omega)}+(\Delta f,\Delta g)_{L^2(\Omega)},\qquad f,g\in H^s_\Delta(\Omega),
\end{equation}
will play an important role. In particular, we will use that the Dirichlet
trace operator can be extended by continuity to surjective mappings
\begin{equation}\label{gammadcont}
 \gamma_D:H^{3/2}_\Delta(\Omega)\rightarrow H^1(\partial\Omega)\quad\text{and}\quad
 \gamma_D:H^1_\Delta(\Omega)\rightarrow H^{1/2}(\partial\Omega),
\end{equation}
and the Neumann trace operator can be extended by continuity to surjective
mappings
\begin{equation}\label{gammancont}
 \gamma_N:H^{3/2}_\Delta(\Omega)\rightarrow L^2(\partial\Omega)\quad\text{and}\quad
 \gamma_N:H^1_\Delta(\Omega)\rightarrow H^{-1/2}(\partial\Omega);
\end{equation}
cf. \cite[Theorems 7.3 and 7.4, Chapter 2]{LioMag71} for the case of a bounded smooth
domain and, e.g. \cite[Lemma 3.1 and Lemma 3.2]{GM11}. At the same time the second
Green's identity
\begin{equation}\label{green32}
 (-\Delta f,g)_{L^2(\Omega)}-(f,-\Delta g)_{L^2(\Omega)}=(\gamma_D f,\gamma_N g)_{L^2(\partial\Omega)}-(\gamma_N f,\gamma_D g)_{L^2(\partial\Omega)},
\end{equation}
well known for $f,g\in H^2(\Omega)$, remains valid for $f,g\in H^{3/2}_\Delta(\Omega)$ and extends further to functions $f,g\in H^1_\Delta(\Omega)$
\begin{equation}\label{green12}
 (-\Delta f,g)_{L^2(\Omega)}-(f,-\Delta g)_{L^2(\Omega)}=\langle\gamma_D f,\gamma_N g\rangle-\langle\gamma_N f,\gamma_D g\rangle,
\end{equation}
where $\langle\cdot,\cdot\rangle$ denotes the extension of the $L^2(\partial\Omega)$-inner product onto the dual pair
$H^{1/2}(\partial\Omega)\times H^{-1/2}(\partial\Omega)$ and $H^{-1/2}(\partial\Omega)\times H^{1/2}(\partial\Omega)$, respectively. As usual, here
\begin{equation}\label{rig}
H^{1/2}(\partial\Omega)\hookrightarrow L^2(\partial\Omega)\hookrightarrow H^{-1/2}(\partial\Omega)
\end{equation}
is viewed as a rigging of Hilbert spaces, that is, some uniformly positive self-adjoint operator $\jmath$ in $L^2(\partial\Omega)$ with $\dom(\jmath)=H^{1/2}(\partial\Omega)$
is fixed and viewed as an isomorphism
\begin{equation}\label{jjj}
\jmath:H^{1/2}(\partial\Omega)\longrightarrow L^2(\partial\Omega).
\end{equation}
As scalar product on $H^{1/2}(\partial\Omega)$ we choose
$(\varphi,\psi)_{H^{1/2}(\partial\Omega)}:=(\jmath\varphi,\jmath\psi)_{L^2(\partial\Omega)}$; it follows that
$H^{-1/2}(\partial\Omega)$ coincides with the completion of $L^2(\partial\Omega)$ with respect to $(\jmath^{-1} \cdot,\jmath^{-1}\cdot)_{L^2(\partial\Omega)}$,
and $\jmath^{-1}$ admits an extension to an isomorphism
$$\widetilde{\jmath^{-1}}:H^{-1/2}(\partial\Omega)\longrightarrow L^2(\partial\Omega).$$
The inner product $\langle\cdot,\cdot\rangle$ on the right hand side of \eqref{green12} is
\begin{equation}\label{green123}
\langle\varphi,\psi\rangle:=\bigl(\jmath \varphi,\widetilde{\jmath^{-1}}\psi\bigr)_{L^2(\partial\Omega)},\qquad \varphi\in H^{1/2}(\partial\Omega),\quad \psi\in
H^{-1/2}(\partial\Omega),
\end{equation}
and extends the $L^2(\partial\Omega)$ scalar product in the sense that $\langle\varphi,\psi\rangle=(\varphi,\psi)_{L^2(\partial\Omega)}$ for $\varphi\in H^{1/2}(\partial\Omega)$ and
$\psi\in L^2(\partial\Omega)$.
A standard and convenient choice for $\jmath$ in \eqref{jjj} in many situations is
\begin{equation}\label{jd}
\jmath_\Delta := (-\Delta_{\partial\Omega} + I)^{1/4}:
H^{1/2}(\partial\Omega)\longrightarrow L^2(\partial\Omega),
\end{equation}
where $-\Delta_{\partial\Omega}$  denotes the Laplace-Beltrami operator  in
$L^2(\partial\Omega)$; cf. Remark~\ref{rem:4.2} for other natural choices of $\jmath$.
Note in this connection that $\jmath_\Delta$ maps
$H^{s}(\partial\Omega)$ isomorphically onto $H^{s-1/2}(\partial\Omega)$ for any $s\in \mathbb R$.

In this context we also recall the following lemma, which is essentially a consequence of the
asymptotics of the eigenvalues of the Laplace–Beltrami operator on compact manifolds; cf.
\cite[Proof of Proposition 5.4.1]{Ag90}, \cite[Theorem 2.1.2]{Agr97}, and \cite[Lemma 4.7]{BLL13-2}.

\begin{lemma}\label{hurra}
Let $\cK$ be a Hilbert space and assume that $X\in\cB(\cK,H^s(\partial\Omega))$
has the property $\ran X \subset H^r(\partial\Omega)$ for some $r>s\geq 0$. Then
$$
X\in\cS_{\frac{n-1}{r-s}}\bigl(\cK,H^s(\partial\Omega)\bigr)
$$
and hence $X\in\sS_p(\cK,H^s(\partial\Omega))$
for $p>\frac{n-1}{r-s}$.
\end{lemma}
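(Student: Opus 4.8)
The plan is to factor $X$ through the canonical embedding $\iota\colon H^r(\partial\Omega)\hookrightarrow H^s(\partial\Omega)$, to show that this embedding already lies in the class $\cS_{(n-1)/(r-s)}$, and to conclude by the ideal property of the classes $\cS_p$. The only genuinely analytic input will be the eigenvalue asymptotics of the Laplace--Beltrami operator, which I would simply invoke from the cited references.

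First I would upgrade $X$ to a bounded operator into the smaller space. Since $X\in\cB(\cK,H^s(\partial\Omega))$ with $\ran X\subset H^r(\partial\Omega)$, define $\widetilde X\colon\cK\to H^r(\partial\Omega)$ by $\widetilde Xf=Xf$. Its graph is closed: if $f_n\to f$ in $\cK$ and $\widetilde Xf_n\to g$ in $H^r(\partial\Omega)$, then the continuous embedding $H^r(\partial\Omega)\hookrightarrow H^s(\partial\Omega)$ gives $\widetilde Xf_n=Xf_n\to g$ in $H^s(\partial\Omega)$, while boundedness of $X$ gives $Xf_n\to Xf$ in $H^s(\partial\Omega)$, whence $g=Xf=\widetilde Xf$. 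By the closed graph theorem $\widetilde X\in\cB(\cK,H^r(\partial\Omega))$, and $X=\iota\,\widetilde X$. Using $s_j(\iota\widetilde X)\le\|\widetilde X\|\,s_j(\iota)$, it therefore suffices to prove $\iota\in\cS_{(n-1)/(r-s)}(H^r(\partial\Omega),H^s(\partial\Omega))$.

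To analyse $\iota$ I would realize the Sobolev norms through the operator $\Lambda:=(-\Delta_{\partial\Omega}+I)^{1/2}$, for which $\Lambda^t$ is an isomorphism from $H^\sigma(\partial\Omega)$ onto $H^{\sigma-t}(\partial\Omega)$ for all real $\sigma,t$; this is the scaling property of $\jmath_\Delta=\Lambda^{1/2}$ recorded above. Conjugating by the boundedly invertible maps $\Lambda^r\colon H^r(\partial\Omega)\to L^2(\partial\Omega)$ and $\Lambda^s\colon H^s(\partial\Omega)\to L^2(\partial\Omega)$ turns $\iota$ into $\Lambda^s\iota\Lambda^{-r}=\Lambda^{s-r}\colon L^2(\partial\Omega)\to L^2(\partial\Omega)$, since $\iota$ acts as the identity on functions. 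As the conjugating maps are boundedly invertible, $\iota$ and $\Lambda^{s-r}$ have the same singular-value asymptotics and lie in the same class $\cS_p$. Because $s-r<0$, the operator $\Lambda^{s-r}$ is compact, self-adjoint and positive, with eigenvalues $(\lambda_j+1)^{(s-r)/2}$, where the $\lambda_j$ are the eigenvalues of $-\Delta_{\partial\Omega}$ listed in increasing order and repeated according to multiplicity.

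The decisive step is now Weyl's asymptotic law on the compact $(n-1)$-dimensional manifold $\partial\Omega$, namely $\lambda_j\sim c\,j^{2/(n-1)}$ as $j\to\infty$ (the content cited from \cite{Ag90,Agr97,BLL13-2}). It yields $s_j(\Lambda^{s-r})=(\lambda_j+1)^{(s-r)/2}=O\bigl(j^{-(r-s)/(n-1)}\bigr)$, i.e. $s_j=O(j^{-1/p})$ with $p=(n-1)/(r-s)$, so that $\Lambda^{s-r}\in\cS_{(n-1)/(r-s)}$ and therefore $\iota$ and $X$ belong to $\cS_{(n-1)/(r-s)}(\cK,H^s(\partial\Omega))$. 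The final assertion $X\in\sS_p(\cK,H^s(\partial\Omega))$ for $p>(n-1)/(r-s)$ is then immediate, since $s_j(X)=O(j^{-(r-s)/(n-1)})$ forces $\sum_j s_j(X)^p<\infty$ precisely when $p(r-s)/(n-1)>1$. The main obstacle is thus the one analytic fact, the Weyl asymptotics, which I would not reprove but take from the references; the remainder is the elementary factorization and conjugation just described.
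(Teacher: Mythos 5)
Your proof is correct and follows exactly the route the paper indicates for this lemma: the paper does not write out a proof but refers to the cited literature with the remark that the statement is "essentially a consequence of the asymptotics of the eigenvalues of the Laplace--Beltrami operator on compact manifolds," and your argument fills in precisely those details (closed graph theorem to factor $X=\iota\widetilde X$ through the embedding $\iota\colon H^r(\partial\Omega)\hookrightarrow H^s(\partial\Omega)$, conjugation by powers of $(-\Delta_{\partial\Omega}+I)^{1/2}$ to reduce $\iota$ to $\Lambda^{s-r}$ on $L^2(\partial\Omega)$, and Weyl's law $\lambda_j\sim c\,j^{2/(n-1)}$ to get the singular value decay). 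No gaps; the only cosmetic point is that the word "precisely" in your last sentence overstates what the one-sided bound $s_j(X)=O(j^{-(r-s)/(n-1)})$ gives, but only the sufficiency direction is needed.
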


As a useful consequence of Lemma~\ref{hurra} we note that for $r> 0$ the canonical embeddings $\iota_r:H^r(\partial\Omega)\longrightarrow L^2(\partial\Omega)$
and $\iota_{-r}:L^2(\partial\Omega)\longrightarrow H^{-r}(\partial\Omega)$ satisfy
 $$ \iota_r\in\cS_{\frac{n-1}{r}}\bigl(H^r(\partial\Omega),L^2(\partial\Omega)\bigr)\quad\text{and}\quad
    \iota_{-r}\in\cS_{\frac{n-1}{r}}\bigl(L^2(\partial\Omega),H^{-r}(\partial\Omega)\bigr),
 $$
respectively. In fact, the assertion for $\iota_r$ follows after fixing a unitary operator $U:L^2(\partial\Omega)\longrightarrow H^r(\partial\Omega)$,
applying Lemma~\ref{hurra} to the operator $X=\iota_r U$ and noting that the singular values  of $X$ and $\iota_r$ are the same. Since the dual operator
$\iota_r^\prime:L^2(\partial\Omega)\longrightarrow H^{-r}(\partial\Omega)$ coincides with the canonical embedding $\iota_{-r}$ of $L^2(\partial\Omega)$
into $H^{-r}(\partial\Omega)$ the second assertion follows. By composition and \eqref{idealipp} we also conclude
\begin{equation}\label{hohopasst}
 \iota_{-r}\circ\iota_r \in\cS_{\frac{n-1}{2r}}\bigl(H^r(\partial\Omega),H^{-r}(\partial\Omega)\bigr).
\end{equation}

\subsection{Schr\"{o}dinger operators with Dirichlet, Neumann, and Robin boundary conditions}\label{prelisec2}

Let $\Omega\subset\dR^n$ be an exterior domain as in Section~\ref{prelisec}. In the following we consider a
Schr\"{o}dinger differential  expression with a bounded, measurable, real valued potential $V$,
\begin{equation}\label{Schrod_expres}
 \mathscr{L} = -\Delta  + V,\qquad V \in L^\infty(\Omega).
 \end{equation}
With the differential expression in \eqref{Schrod_expres} one
naturally associates the minimal operator
\begin{equation}\label{smin}
\begin{split}
S_{min} f&=\mathscr{L}f,\\
\dom (S_{min})&= H^2_0(\Omega)= \bigl\{f\in
H^2(\Omega): \gamma_D f=\gamma_N f=0\bigr\},
\end{split}
\end{equation}
and the maximal operator
\begin{equation*}
\begin{split}
S_{max} f&=\mathscr{L}f,\\ 
\dom (S_{max})&=\bigl\{f\in L^2(\Omega): -\Delta f+Vf \in L^2(\Omega)\bigr\},
\end{split}
\end{equation*}
in $L^2(\Omega)$; the expression $\Delta f$ in $\dom (S_{max})$ is understood in the sense of distributions.
We note that $\dom (S_{max})$ equipped with the graph norm coincides with the Hilbert space $H^0_\Delta(\Omega)$ introduced above. In the next lemma we collect some
well-known properties of $S_{min}$ and $S_{max}$; for the simplicity of $S$ we refer to \cite[Proposition 2.2]{BR16}
and the density of $H^s_\Delta(\Omega)$ in $\dom (S^*)$ equipped with the graph norm is shown (for the case of a bounded domain) in
\cite[Chapter~2,Theorem~6.4]{LioMag71}.

\begin{lemma}\label{slem}
The operator $S:=S_{min}$ is a densely defined, closed, simple, symmetric operator in $L^2(\Omega)$. The deficiency indices of $S$ coincide and are both infinite,
\begin{equation*}
 \dim\bigl(\ran(S-i)^\bot\bigr)=\dim\bigl(\ran(S+i)^\bot\bigr)=\infty.
\end{equation*}
The adjoint of the minimal operator is the maximal operator,
\begin{equation*}
 S^*=S_{min}^*=S_{max}\quad\text{and}\quad S=S_{min}=S_{max}^*,
\end{equation*}
and the spaces $H^s_\Delta(\Omega)$, $s\in[0,2]$, are dense in $\dom (S^*)$ equipped with the graph norm.
\end{lemma}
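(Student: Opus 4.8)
The plan is to assemble the several assertions of Lemma~\ref{slem} from the trace theory and Green's identities of Section~\ref{prelisec}: the symmetry, closedness, adjoint, and deficiency-index claims admit direct proofs, while for simplicity and the density statement I would invoke \cite{BR16} and (the exterior-domain analogue of) \cite{LioMag71}. Throughout I use that $\dom(S_{max})$ with its graph norm is exactly $H^0_\Delta(\Omega)$, as recorded before the lemma.

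First I would note that $S=S_{min}$ is densely defined, since $C_0^\infty(\Omega)\subset H^2_0(\Omega)=\dom(S)$ is already dense in $L^2(\Omega)$. Symmetry is immediate from the second Green identity \eqref{green32}: for $f,g\in H^2_0(\Omega)$ the two boundary terms vanish because $\gamma_D f=\gamma_N f=0$, and since $V$ is real-valued and bounded the potential contributes only a symmetric term. For closedness I would extend $f\in C_0^\infty(\Omega)$ by zero to $\dR^n$ and use the Fourier transform to get the elementary estimate $\|f\|_{H^2}^2\le C(\|f\|_{L^2}^2+\|\Delta f\|_{L^2}^2)$; because $V$ is bounded, the graph norm of $S_{min}$ is then equivalent to the $H^2(\Omega)$-norm on $C_0^\infty(\Omega)$. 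Consequently the closure of $-\Delta+V\!\upharpoonright C_0^\infty(\Omega)$ has domain equal to the $H^2$-closure of $C_0^\infty(\Omega)$, which is precisely $H^2_0(\Omega)$; thus $S$ is closed and is genuinely the \emph{minimal} operator.

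The structural core is the identification $S^*=S_{max}$. The inclusion $S_{max}\subset S^*$ follows by extending the distributional Green identity: for $g\in\dom(S_{max})=H^0_\Delta(\Omega)$ and $f\in C_0^\infty(\Omega)$ one has $(\mathscr{L}f,g)=(f,\mathscr{L}g)$, and this passes to all $f\in H^2_0(\Omega)$ by density of $C_0^\infty(\Omega)$ in $H^2_0(\Omega)$ together with the continuity of $S$ in the $H^2$-norm. Conversely, if $g\in\dom(S^*)$ with $S^*g=h$, testing against $f\in C_0^\infty(\Omega)$ shows $-\Delta g+Vg=h$ in the distributional sense, so $\Delta g\in L^2(\Omega)$ and hence $g\in H^0_\Delta(\Omega)=\dom(S_{max})$ with $S^*g=\mathscr{L}g$. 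Therefore $S^*=S_{max}$, and since $S$ is closed, $S=S^{**}=S_{max}^*$. Equality of the deficiency indices then follows because complex conjugation $J$ satisfies $J(S^*\mp i)=(S^*\pm i)J$ (as $V$ is real and $H^2_0(\Omega)$ is conjugation invariant), so it maps $\ker(S^*-i)$ antilinearly onto $\ker(S^*+i)$. Their common dimension is infinite: the Dirichlet solution (Poisson) operator embeds an infinite-dimensional space of boundary data on the compact $(n-1)$-manifold $\partial\Omega$ injectively into $\ker(S^*-i)$, using the surjectivity of the trace maps in \eqref{gammadcont}--\eqref{gammancont}.

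It remains to address the two assertions I would take from the literature. Simplicity is established in \cite[Proposition 2.2]{BR16}; equivalently one checks that any reducing subspace on which $S$ is self-adjoint must be orthogonal to every $\ker(S^*-z)$, $z\in\dC\setminus\dR$, which the above parametrization forces to be trivial. Finally, the density of the spaces $H^s_\Delta(\Omega)$, $s\in[0,2]$, in $\dom(S^*)=H^0_\Delta(\Omega)$ with respect to the graph norm is the exterior-domain analogue of \cite[Chapter~2,~Theorem~6.4]{LioMag71}; since only the compact boundary and the local regularity near it enter, the bounded-domain argument transfers, and the case $s=2$ (where $H^2_\Delta(\Omega)=H^2(\Omega)$) is the decisive one. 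I expect this density statement to be the main obstacle, as it is the only point that genuinely requires adapting a bounded-domain regularity argument to the unbounded setting.
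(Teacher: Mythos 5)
Your proposal is correct and follows essentially the same route as the paper: the paper states Lemma~\ref{slem} as a collection of well-known facts, citing \cite[Proposition 2.2]{BR16} for simplicity and \cite[Chapter~2, Theorem~6.4]{LioMag71} (transferred to the exterior-domain setting) for the density of $H^s_\Delta(\Omega)$ in $\dom(S^*)$ — exactly the two references you invoke for precisely those two points. Your explicit verifications of the elementary assertions (closedness via the Fourier estimate, $S^*=S_{max}$ by distributional testing, equal and infinite deficiency indices via conjugation and the Poisson operator) are sound and simply fill in what the paper leaves as standard.
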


In Sections~~\ref{D-N_sect} and \ref{nrsec_N-Roben} we are interested in scattering systems consisting of different self-adjoint realizations of $\mathscr{L}$ in $L^2(\Omega)$.
The self-adjoint Dirichlet and Neumann operators associated to the densely defined, semibounded, closed quadratic forms
\begin{equation*}
\begin{split}
 \mathfrak a_D[f,g]&=(\nabla f,\nabla g)_{(L^2(\Omega))^n}+(Vf,g)_{L^2(\Omega)},\qquad \dom (\mathfrak a_D)=H^1_0(\Omega),\\
 \mathfrak a_N[f,g]&=(\nabla f,\nabla g)_{(L^2(\Omega))^n}+(Vf,g)_{L^2(\Omega)},\qquad \dom (\mathfrak a_N)=H^1(\Omega),
 \end{split}
 \end{equation*}
are  given by
\begin{equation}\label{adan}
 \begin{split}
  A_D f &= \mathscr{L} f,\qquad \dom (A_D) = \bigl\{f\in H^2(\Omega):\gamma_D f = 0\bigr\},\\
  A_N f &= \mathscr{L} f,\qquad \dom (A_N) = \bigl\{f\in H^2(\Omega):\gamma_N f = 0\bigr\},
 \end{split}
\end{equation}
and for a real valued function $\alpha\in L^\infty(\partial\Omega)$ the quadratic form
\begin{equation*}
\mathfrak a_\alpha[f,g]=\mathfrak a_N[f,g]-(\alpha\gamma_D f,\gamma_D g)_{L^2(\partial\Omega)},\qquad \dom (\mathfrak a_\alpha)=H^1(\Omega),
\end{equation*}
is also densely defined, closed and semibounded from below, and hence gives rise to a semibounded self-adjoint operator in $L^2(\Omega)$, which has the form
\begin{equation}\label{robin}
 A_\alpha f=\mathscr{L}f,\qquad \dom (A_\alpha)=\bigl\{f\in H^{3/2}_\Delta(\Omega): \alpha\gamma_D f= \gamma_N f \bigr\}.
  \end{equation}
We remark that the $H^2$-regularity of the functions in $\dom (A_D)$ and $\dom (A_N)$ is
a classical fact (see the monographs \cite{A65,LU68,LioMag71}) and the
$H^{3/2}$-regularity of the functions in $\dom (A_\alpha)$ can be found in, e.g.
\cite[Corollary 6.25]{BL12}; in the case that
the coefficient $\alpha$ in the Robin boundary condition is continuously differentiable
also $\dom (A_\alpha)$ is contained in $H^2(\Omega)$; cf. \cite[Theorem 4.18]{McL00}.

\subsection{Scattering matrix for the Dirichlet and Robin realization}\label{D-N_sect}

In this subsection we consider the pair $\{A_D,A_\alpha\}$ consisting of the self-adjoint Dirichlet and Robin operator associated to $\mathscr{L}$ in \eqref{adan}
and \eqref{robin} on an exterior domain $\Omega\subset\dR^2$; here we restrict ourselves to the two dimensional situation in order to ensure that the trace class
condition \eqref{trace} for the resolvent  difference is satisfied; cf. Remark~\ref{remarki}.

Before formulating and proving our main result on the system $\{A_D,A_\alpha\}$ we recall the definition and some useful properties of the
Dirichlet-to-Neumann map. First we note that for any $\psi\in H^{1/2}(\partial\Omega)$ and $z \in\rho(A_D)$ there exists a unique
solution $f_z \in H_{\Delta}^1(\Omega)$ of the boundary  value problem
\begin{equation}\label{bvp_D-to-N}
 -\Delta f_z + Vf_z = zf_z,\qquad \gamma_D f_z = \psi\in H^{1/2}(\partial\Omega).
\end{equation}
The corresponding solution operator is given  by
\begin{equation}\label{sol_oper_for_Dir}
P_D(z): H^{1/2}(\partial\Omega) \longrightarrow H^{1}_\Delta(\Omega)\subset
L^2(\Omega),\qquad  \psi\mapsto f_z.
\end{equation}
For $z\in\rho(A_D)$ the {\em Dirichlet-to-Neumann map} $\Lambda_{1/2}(z)$ is defined by
  \begin{equation}\label{Dir-to-Neuman_map}
\Lambda_{1/2}(z): H^{1/2}(\partial\Omega)\longrightarrow H^{-1/2}(\partial\Omega),\qquad
\psi\mapsto\gamma_N P_D(z)\psi,
  \end{equation}
and takes Dirichlet boundary values $\gamma_D f_{z}$ of the solution $f_z\in
H^1_{\Delta}(\Omega)$ of \eqref{bvp_D-to-N}  to their Neumann boundary values $\gamma_N f_z\in
H^{-1/2}(\partial\Omega)$.

Now we are ready to formulate and prove a representation of the scattering matrix for the pair $\{A_D,A_\alpha\}$.
\begin{theorem}\label{th:4.1}
Let $\Omega\subset \dR^2$ be an exterior domain  with a $C^\infty$-smooth boundary, let $V\in
L^\infty(\Omega)$ and $\alpha  \in L^\infty(\partial\Omega)$ be real valued functions, and let  $A_D$
and  $A_\alpha$  be the self-adjoint Dirichlet and Robin realizations of $\mathscr{L} = -\Delta+V$ in
$L^2(\Omega)$ in \eqref{adan} and \eqref{robin}, respectively.
Moreover, let $\Lambda_{1/2}(\cdot)$ be the Dirichlet-to-Neumann map defined in \eqref{Dir-to-Neuman_map}
and let
  \begin{equation}\label{eq:4.13}
    M^D_\alpha(z) := \widetilde{\jmath^{-1}}(\alpha - \Lambda_{1/2}(z))\jmath^{-1}, \quad z\in \rho(A_D),
   \end{equation}
where $\jmath: H^{1/2}(\partial\Omega) \longrightarrow L^2(\partial\Omega)$ denotes some uniformly positive self-adjoint operator
in $L^2(\partial\Omega)$ with $\dom(\jmath)=H^{1/2}(\partial\Omega)$ as in \eqref{rig}--\eqref{jjj}.

Then
$\{A_D,A_\alpha\}$ is a complete scattering system and
\begin{equation*}
 L^2(\dR,d\lambda,\cH_\lambda),\qquad \cH_\lambda:=\overline{\ran (\Im M^D_\alpha(\lambda+i0))},
\end{equation*}
forms a spectral representation of $A_D^{ac}$ such that for a.e. $\lambda \in \dR$
the scattering matrix $\{S(A_D,A_\alpha;\lambda)\}_{\lambda\in\dR}$ of the scattering system $\{A_D,A_\alpha\}$ admits
the representation
  \begin{equation*}
 S(A_D,A_\alpha;\lambda)=I_{\cH_\lambda}-2i\sqrt{\Im M^D_\alpha(\lambda+i0)}\,M^D_\alpha(\lambda+i0)^{-1} \sqrt{\Im
 M^D_\alpha(\lambda + i0)}.
   \end{equation*}
\end{theorem}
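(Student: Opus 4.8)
The plan is to realize $\{A_D, A_\alpha\}$ as an abstract scattering system satisfying the hypotheses of Theorem~\ref{th:3.1}, by constructing a suitable $B$-generalized boundary triple whose Weyl function coincides with $M^D_\alpha(\cdot)$ up to the identifications built into \eqref{eq:4.13}, and then verifying the $\sS_1$-regularity of that Weyl function. The natural choice is to take $S = S_{min}$ from \eqref{smin}, which by Lemma~\ref{slem} is densely defined, closed, simple, and symmetric with $S^* = S_{max}$; both $A_D$ and $A_\alpha$ are self-adjoint restrictions of $S^*$. First I would set up boundary maps adapted to the trace operators. Working on an appropriate domain (essentially $H^{3/2}_\Delta(\Omega)$, which is dense in $\dom(S^*)$ by Lemma~\ref{slem}), the Dirichlet trace $\gamma_D$ vanishes exactly on $\dom(A_D)$, so $\gamma_D$ (suitably composed with the isomorphisms $\jmath,\widetilde{\jmath^{-1}}$ of \eqref{jjj} to land in $L^2(\partial\Omega)$) should serve as $\Gamma_0$, guaranteeing $A_D = T\upharpoonright\ker(\Gamma_0)$. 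The second map $\Gamma_1$ must be chosen so that $\ker(\Gamma_1) = \dom(A_\alpha)$, i.e. it should encode the Robin condition $\alpha\gamma_D f = \gamma_N f$; a map of the form $\Gamma_1 f = \widetilde{\jmath^{-1}}(\gamma_N f - \alpha\gamma_D f)$ is the candidate. One then checks the abstract Green's identity \eqref{Green_formula_for_BGBtrip} directly from the second Green identity \eqref{green12}, together with the surjectivity conditions $\ran(\Gamma_0)=\cH=L^2(\partial\Omega)$, which follow from the surjectivity of the trace map \eqref{gammadcont} and the invertibility of $\jmath$.

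Once the triple is in place, the next step is to compute its Weyl function and identify it with $M^D_\alpha(\cdot)$. By Definition~\ref{def_weyl_function}, for $f_z \in \ker(T-z)$ one has $\gamma_D f_z$ as the Dirichlet datum and the solution operator $P_D(z)$ of \eqref{sol_oper_for_Dir} provides $f_z = P_D(z)\psi$; applying $\Gamma_1$ and using the defining property $\gamma_N P_D(z)\psi = \Lambda_{1/2}(z)\psi$ of the Dirichlet-to-Neumann map \eqref{Dir-to-Neuman_map} should yield exactly $M(z) = \widetilde{\jmath^{-1}}(\alpha - \Lambda_{1/2}(z))\jmath^{-1} = M^D_\alpha(z)$, after tracking the signs and the conjugation by $\jmath^{-1}$. (A sign convention on $\Gamma_1$ may need adjusting so that the Weyl function matches the stated formula rather than its negative.)

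The main obstacle, and the analytic heart of the proof, is verifying that $M^D_\alpha(\cdot)$ is $\sS_1$-regular in the sense of Definition~\ref{sregm}, i.e. that it admits a decomposition $M^D_\alpha(z) = C + K(z)$ with $C$ bounded, self-adjoint and boundedly invertible and $K(\cdot) \in R^s[\cH]$ taking values in $\sS_1(\cH)$. This is precisely where the restriction to $n=2$ enters. The strategy is to isolate the singular behaviour of the Dirichlet-to-Neumann map: $\Lambda_{1/2}(z)$ is (after the $\jmath$-riggings) a first-order pseudodifferential operator on the compact boundary $\partial\Omega$, and the difference $\Lambda_{1/2}(z) - \Lambda_{1/2}(z_0)$ between two spectral parameters is smoothing of one order, hence maps into a higher Sobolev space. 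Invoking Lemma~\ref{hurra} with the Sobolev-embedding Schatten estimate (and the ideal property \eqref{idealipp}) converts this gain of regularity into membership in $\cS_{(n-1)/1} = \cS_1(\cH)$ for $n=2$, which is contained in $\sS_1(\cH)$; the bounded, boundedly invertible constant $C$ is then read off from the leading principal symbol. I would therefore fix a real reference point $\lambda_0 \in \rho(A_D)\cap\dR$ and write $K(z) = M^D_\alpha(z) - M^D_\alpha(\lambda_0)$, showing $K(z)\in\sS_1(\cH)$ via the resolvent-difference/trace estimates of \cite{BLLLP10,G11,Bir62} combined with Lemma~\ref{hurra}, while strictness of $K$ follows from strictness of the Weyl function established in Section~\ref{sec1.1}. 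With $\sS_1$-regularity in hand, Theorem~\ref{th:3.1} applies verbatim and delivers both the spectral representation on $L^2(\dR, d\lambda, \cH_\lambda)$ and the stated formula for $S(A_D, A_\alpha; \lambda)$, completing the argument.
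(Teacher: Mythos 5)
Your overall strategy coincides with the paper's (construct a $B$-generalized boundary triple with $\Gamma_0$ the rigged Dirichlet trace and $\Gamma_1$ encoding the Robin condition, identify the Weyl function with $M^D_\alpha(\cdot)$, verify $\sS_1$-regularity, apply Theorem~\ref{th:3.1}), but the step you yourself call the analytic heart fails as written. A gain of one Sobolev order on the compact boundary gives, via Lemma~\ref{hurra}, membership in $\cS_{(n-1)/1}=\cS_1$ for $n=2$; however $\cS_1$ is \emph{not} contained in $\sS_1$: the condition $s_j(T)=O(j^{-1})$ does not imply $\sum_j s_j(T)<\infty$, so your chain ``one order of smoothing $\Rightarrow\cS_1\subset\sS_1$'' is false. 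The paper avoids estimating the difference $\Lambda_{1/2}(z)-\Lambda_{1/2}(\xi)$ directly: it writes $K(z)=M^D_\alpha(z)-M^D_\alpha(\xi)=(z-\bar\xi)\,\gamma_\alpha^D(\xi)^*\gamma_\alpha^D(z)$ using \eqref{gutgut}, shows each $\gamma$-factor lies in $\cS_1$ (since $\gamma_\alpha^D(\xi)^*=-\widetilde{\jmath^{-1}}\gamma_N(A_D-\xi)^{-1}$ maps $L^2(\Omega)$ boundedly into $H^{1/2}(\partial\Omega)$, and the embedding of $H^{1/2}(\partial\Omega)$ into $H^{-1/2}(\partial\Omega)$ is in $\cS_1$ for $n=2$ by \eqref{hohopasst}), and only then invokes the ideal multiplication \eqref{idealipp} to get $K(z)\in\cS_{1/2}\subset\sS_1$. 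You need this factorization through the $\gamma$-field (or, equivalently, a two-order gain, which the two $\jmath^{-1}$-conjugations in \eqref{eq:4.13} in fact supply) to reach trace class.

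Two further gaps. First, the domain of $T$ must be $H^1_\Delta(\Omega)$, not ``essentially $H^{3/2}_\Delta(\Omega)$'': on $H^{3/2}_\Delta(\Omega)$ the Dirichlet trace maps onto $H^1(\partial\Omega)$ by \eqref{gammadcont}, so $\ran(\Gamma_0)=\jmath\,H^1(\partial\Omega)=H^{1/2}(\partial\Omega)$ would be a dense but \emph{proper} subspace of $L^2(\partial\Omega)$; the triple would then violate the surjectivity requirement $\ran(\Gamma_0)=\cH$ in Definition~\ref{B-Generalized_BT} (it would only be a quasi boundary triple), and Theorem~\ref{th:3.1} would not apply. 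Second, bounded invertibility of the constant $C=M^D_\alpha(\lambda_0)$ cannot be ``read off from the leading principal symbol'': ellipticity yields Fredholmness only, and with your reference point merely in $\rho(A_D)\cap\dR$ the kernel can be nontrivial, namely precisely when $\lambda_0$ is an eigenvalue of $A_\alpha$. The paper takes $\xi\in\rho(A_D)\cap\rho(A_\alpha)\cap\rho(A_N)\cap\dR$: the condition $\xi\in\rho(A_N)$ (together with surjectivity of the trace maps) makes $\widetilde{\jmath^{-1}}\Lambda_{1/2}(\xi)\jmath^{-1}$ boundedly invertible, compactness of $\widetilde{\jmath^{-1}}\alpha\jmath^{-1}$ gives the Fredholm property, and $\xi\in\rho(A_\alpha)$ excludes a nontrivial kernel, since a kernel element of $M^D_\alpha(\xi)$ would produce an eigenfunction of $A_\alpha$ at $\xi$.
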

\begin{proof}
It follows from \eqref{adan} and \eqref{robin} that the operator $A_\alpha \cap A_D$ coincides with the minimal
operator $S = \mathscr{L}_{\min}$ associated with $\mathscr{L}$ in  \eqref{smin}, which is closed, densely defined and simple by Lemma~\ref{slem}.
Define the operator $T$ as a restriction of $S^*$ to the domain $H^{1}_{\Delta}(\Omega)$,
\begin{equation*}
Tf = -\Delta f+Vf, \qquad \dom (T) = H^{1}_{\Delta}(\Omega),
\end{equation*}
and let
\begin{equation}\label{eq:4.18}
\Gamma_0 f := \jmath\,\gamma_D f\quad\text{and} \quad \Gamma_1 f := \widetilde{\jmath^{-1}}(\alpha\gamma_D-\gamma_N )f,\quad f\in\dom (T).
\end{equation}
We claim that $\Pi^D_\alpha = \{L^2(\partial\Omega),\Gamma_0,\Gamma_1\}$ is a $B$-generalized boundary triple for $S^*$ with the $\sS_1$-regular
Weyl function $M^D_\alpha(\cdot)$ in \eqref{eq:4.13} such that
\begin{equation}\label{kernis}
 A_D=T\upharpoonright\ker(\Gamma_0)\quad\text{and}\quad A_\alpha=T\upharpoonright\ker(\Gamma_1).
\end{equation}

In fact, for $f,g\in\dom (T)$ we use \eqref{green12} and the fact that $\alpha$ is real valued, and compute
\begin{equation*}
\begin{split}
&(\Gamma_1 f,\Gamma_0 g)-(\Gamma_0 f,\Gamma_1 g)\\
&\quad=\bigl(\widetilde{\jmath^{-1}}(\alpha\gamma_D-\gamma_N)f,\jmath\,\gamma_D g\bigr)-\bigl(\jmath\,\gamma_D f,\widetilde{\jmath^{-1}}(\alpha\gamma_D-\gamma_N)g\bigr)\\
&\quad=\bigl\langle\alpha\gamma_D f -\gamma_N f ,\gamma_D g\bigr\rangle-\bigl\langle \gamma_D f,\alpha\gamma_D g-\gamma_Ng  \rangle\\
&\quad=\langle\gamma_D f,\gamma_N g\rangle-\langle\gamma_N f,\gamma_D g\rangle \\
&\quad=(Tf,g)-(f,Tg)
\end{split}
\end{equation*}
and hence Green's identity \eqref{Green_formula_for_BGBtrip} is satisfied. Furthermore,
the mapping
\begin{equation*}
\gamma_D:\dom (T)\to H^{1/2}(\partial\Omega)
 \end{equation*}
is well defined and surjective according to \eqref{gammadcont}, and since
$\jmath:H^{1/2}(\partial\Omega)\rightarrow L^2(\partial\Omega)$ is an isomorphism we
conclude
\begin{equation*}
\ran(\Gamma_0) = L^2(\partial\Omega),
\end{equation*}
i.e., $\Gamma_0$ is surjective. From Lemma~\ref{slem} we directly obtain that $\dom (T) =
H^{1}_{\Delta}(\Omega)$ is dense in $\dom (S^*)$ equipped with the graph norm (which is equal to the space $H^0_\Delta(\Omega)$)
and hence
we have $\overline T=S^*$. Moreover, it follows from Green's identity \eqref{Green_formula_for_BGBtrip} that the restrictions $T\upharpoonright\ker(\Gamma_0)$
and $T\upharpoonright\ker(\Gamma_1)$ are both symmetric operators in $L^2(\Omega)$ and from the definition of the boundary maps it is clear that the self-adjoint operators
$A_D$ and $A_\alpha$ are contained in the symmetric operators $T\upharpoonright\ker(\Gamma_0)$
and $T\upharpoonright\ker(\Gamma_1)$, and hence they coincide.
Therefore, $\Pi^D_\alpha = \{L^2(\partial\Omega),\Gamma_0,\Gamma_1\}$ is a $B$-generalized boundary triple for $S^*$
such that \eqref{kernis} holds.

In order to see that the Weyl function is given by
\begin{equation}\label{eq:4.13a}
    M^D_\alpha(z) = \widetilde{\jmath^{-1}}(\alpha - \Lambda_{1/2}(z))\jmath^{-1}, \quad z\in \rho(A_D),
   \end{equation}
   we recall that $\Lambda_{1/2}(z)\gamma_D f_z=\gamma_N f_z$ for $f_z\in\ker(T-z)$, $z\in\rho(A_D)$,
according to the definition of the Dirichlet-to-Neumann map $\Lambda_{1/2}(\cdot)$ in \eqref{Dir-to-Neuman_map}. Hence we obtain
\begin{equation*}
\widetilde{\jmath^{-1}}\bigl(\alpha - \Lambda_{1/2}(z)\bigr)\jmath^{-1}\Gamma_0 f_z
   =\widetilde{\jmath^{-1}}\bigl(\alpha\gamma_D f_z - \Lambda_{1/2}(z)\gamma_D f_z\bigr)
   =\Gamma_1 f_z
\end{equation*}
for $f_z\in\ker(T-z)$ and $z\in\rho(A_D)$, and this yields \eqref{eq:4.13a} and \eqref{eq:4.13}.

It remains to verify that $M^D_\alpha(\cdot)$ is $\sS_1$-regular.
For this we denote the $\gamma$-field associated to $\Pi^D_\alpha$ by $\gamma_\alpha^D(\cdot)$ and use the relation
\begin{equation}\label{lplp}
 M^D_\alpha(z)=M^D_\alpha(\xi)^*+(z-\bar\xi)\gamma_\alpha^D(\xi)^* \gamma_\alpha^D(z)
\end{equation}
(see \eqref{gutgut}) with some $\xi\in \rho(A_D)\cap\rho(A_\alpha)\cap\rho(A_N)\cap\dR$ and all $z\in\rho(A_D)$. Observe that \eqref{gstar} and the
choice of $\Gamma_1$ in \eqref{eq:4.18} yield
\begin{equation}\label{plkj}
 \gamma_\alpha^D(\xi)^*h=\Gamma_1(A_D-\bar \xi)^{-1}h=-\widetilde{\jmath^{-1}}\gamma_N(A_D-\bar \xi)^{-1}h
\end{equation}
for all $h\in L^2(\Omega)$. Since $\dom (A_D)\subset H^2(\Omega)$ we conclude from \eqref{gammdgammn} that the range of the mapping  $\gamma_N(A_D-\bar \xi)^{-1}$ is contained in
$H^{1/2}(\partial\Omega)$. Furthermore we have $\gamma_\alpha^D(\xi)^*\in\cB(L^2(\Omega),L^2(\partial\Omega))$. Then it follows from \eqref{plkj} that
$$
\gamma_N(A_D-\bar \xi)^{-1}\in\cB\bigl(L^2(\Omega),H^{-1/2}(\partial\Omega)\bigr),
$$
and, in particular, this operator is closed. But then $\gamma_N(A_D-\bar \xi)^{-1}$ is also closed when viewed as an operator from $L^2(\Omega)$ into $H^{1/2}(\partial\Omega)$,
and since this operator is defined on the whole space $L^2(\Omega)$ we conclude
$$
\gamma_N(A_D-\bar \xi)^{-1}\in\cB\bigl(L^2(\Omega),H^{1/2}(\partial\Omega)\bigr).
$$
Now we use that the canonical embedding operator $\iota_{-1/2}\circ\iota_{1/2}:H^{1/2}(\partial\Omega)\longrightarrow H^{-1/2}(\partial\Omega)$
is compact and belongs to $\cS_1(H^{1/2}(\partial\Omega),H^{-1/2}(\partial\Omega))$ by \eqref{hohopasst}. Thus we have
\begin{equation*}
 \gamma_N(A_D-\bar \xi)^{-1}\in\cS_1\bigl(L^2(\Omega),H^{-1/2}(\partial\Omega)\bigr)
\end{equation*}
and hence \eqref{plkj} yields
\begin{equation*}
\gamma_\alpha^D(\xi)^*\in\cS_1\bigl(L^2(\Omega),L^2(\partial\Omega)\bigr).
\end{equation*}
It follows that also $\gamma_\alpha^D(\xi)\in\cS_1(L^2(\partial\Omega),L^2(\Omega))$ and hence for all $z\in\rho(A_D)$
\begin{equation}\label{gammaspsp456}
 \gamma_\alpha^D(z)=\bigl(I+(z-\xi)(A_D-z)^{-1}\bigr)\gamma_\alpha^D(\xi)\in\cS_1\bigl(L^2(\partial\Omega),L^2(\Omega)\bigr).
\end{equation}
Therefore
\begin{equation}\label{plö}
 (z-\bar\xi)\gamma_\alpha^D(\xi)^* \gamma_\alpha^D(z)\in\cS_{1/2}\bigl(L^2(\partial\Omega)\bigr),\qquad z\in\rho(A_D).
\end{equation}
Since $\cS_{1/2}(L^2(\partial\Omega))\subset \sS_1(L^2(\partial\Omega))$ and $M^D_\alpha(\xi) = M^D_\alpha(\xi)^*$
we conclude from \eqref{lplp} and \eqref{plö} that
$$K(z) := M^D_\alpha(z) - M^D_\alpha(\xi) \in \sS_1\bigl(L^2(\partial\Omega)\bigr),\quad z \in \dC_+.$$ 
Since $M^D_\alpha(\cdot)$ is a strict Nevanlinna function
$K(\cdot)$ is a strict Nevanlinna function.
It remains to show that
$$C := M^D_\alpha(\xi) = \widetilde{\jmath^{-1}}\alpha\jmath^{-1} - \widetilde{\jmath^{-1}}\Lambda_{1/2}(\xi)\jmath^{-1}$$ 
is boundedly invertible.
Using that the maps \eqref{gammadcont} and \eqref{gammancont} are surjective and $\xi\in\rho(A_D)\cap\rho(A_N)\cap\dR$ we find that the self-adjoint operator 
$\widetilde{\jmath^{-1}}\Lambda_{1/2}(\xi)\jmath^{-1}$
is surjective, and hence boundedly invertible in $L^2(\partial\Omega)$. From $\ran(\alpha\jmath^{-1}) \subseteq L^2(\partial\Omega)$ we obtain that $\widetilde{\jmath^{-1}}\alpha\jmath^{-1}$ is compact
and therefore $M^D_\alpha(\xi)$ is a Fredholm operator. Furthermore, $\ker(M^D_\alpha(\xi)) \not= \{0\}$ as otherwise there is a non-trivial function $f_\xi \in \ker(T - \xi)$ with
$\Gamma_1f_\xi = 0$, so that $f_\xi \in \ker(A_\alpha - \xi)$. But $\xi$ in \eqref{lplp} is also in $\rho(A_\alpha)$ and hence $f_\xi=0$; a contradiction. 
Thus $\ker(M^D_\alpha(\xi)) = \{0\}$ and hence $C = M^D_\alpha(\xi)$ is boundedly invertible.  Therefore $M^D_\alpha(\cdot)$ is an $\sS_1$-regular Weyl function.
Now the assertions in Theorem~\ref{th:4.1} follow from Theorem~\ref{th:3.1}.
\end{proof}

\begin{remark}\label{remarki}
{\rm
We note that \eqref{gammaspsp456} yields
$\gamma_\alpha^D(z)^*\in\sS_1(L^2(\Omega),L^2(\partial\Omega))$ for all $z\in\rho(A_D)$ and since
$M_\alpha^D(z)^{-1}\in\cB(L^2(\partial\Omega))$, $z\in\rho(A_D)\cap\rho(A_\alpha)$, we conclude from Krein's formula in Proposition~\ref{prop:2.4}~(iii) that
\begin{equation*}
 (A_\alpha-z)^{-1}-(A_D-z)^{-1}=-\gamma_\alpha^D(z) M_\alpha^D(z)^{-1}\gamma_\alpha^D(\bar z)^*\in\cS_{1/2}(L^2(\Omega)).
\end{equation*}
For $n=3,4,\dots$ one obtains in the same way as in the proof of Theorem~\ref{th:4.1} using \eqref{hohopasst} that
\begin{equation*}
 \gamma_\alpha^D(z)\in\cS_{n-1}\bigl(L^2(\partial\Omega),L^2(\Omega)\bigr)\quad\text{and}\quad\gamma_\alpha^D(z)^*\in\cS_{n-1}\bigl(L^2(\Omega),L^2(\partial\Omega)\bigr)
\end{equation*}
for all $z\in\rho(A_D)$ and hence
\begin{equation*}
 (A_\alpha-z)^{-1}-(A_D-z)^{-1}=-\gamma_\alpha^D(z) M_\alpha^D(z)^{-1}\gamma_\alpha^D(\bar z)^*\in\cS_{\frac{n-1}{2}}(L^2(\Omega)).
\end{equation*}
for all $z\in\rho(A_D)\cap\rho(A_\alpha)$ by Proposition~\ref{prop:2.4}~(iii).
This well known result goes back to Birman \cite{Bir62}
(see also  \cite{BLL13-2,GeMa08,Gru74,Gru84,M10} for more details on singular value estimates in this context).
}
\end{remark}

\begin{remark}\label{rem:4.2}
{\rm
There are several possibilities to choose the operator $\jmath$
in \eqref{jjj} used for the extension \eqref{green123} of the $L^2(\partial\Omega)$
scalar product in the rigging \eqref{rig}. Besides the choice
$\jmath_\Delta = (-\Delta_{\partial\Omega} + I)^{1/4}$ in \eqref{jd}
the following choice is very convenient for the scattering matrix, since it allows to express it completely in terms of the Dirichlet-to-Neumann map:
Fix some $\lambda_0<\min\{\sigma(A_D),\sigma(A_N)\}$ and note that the restriction $\Lambda_1(\lambda_0)$ (see also the beginning of Section~\ref{deltasec3})
of the Dirichlet-to-Neumann map
$\Lambda_{1/2}(\lambda_0)$ onto $H^1(\partial\Omega)$ is a non-negative self-adjoint operator in $L^2(\partial\Omega)$ with a bounded everywhere defined inverse
$\Lambda_1(\lambda_0)^{-1}$ in $L^2(\partial\Omega)$; the Neumann-to-Dirichlet map. Then also the square root $\sqrt{\Lambda_1(\lambda_0)}$ is a non-negative
self-adjoint operator in $L^2(\partial\Omega)$ which is boundedly invertible, and we have $\dom (\sqrt{\Lambda_1(\lambda_0)})=H^{1/2}(\partial\Omega)$; cf.
\cite[Proposition 3.2~(iii)]{BLLR16}. Hence $$\jmath=\sqrt{\Lambda_1(\lambda_0)}:H^{1/2}(\partial\Omega)\longrightarrow L^2(\partial\Omega)$$
is a possible choice for the definition of the scalar product $\langle\cdot,\cdot\rangle$ in \eqref{green123}.
}
\end{remark}

Following \cite[Section 1]{Ber78} one defines the adjoint $X^+$ of an operator $$X\in\cB\bigl(H^{1/2}(\partial\Omega),H^{-1/2}(\partial\Omega)\bigr)$$
in the rigging $H^{1/2}(\partial\Omega)\hookrightarrow L^2(\partial\Omega)\hookrightarrow H^{-1/2}(\partial\Omega)$ via
\begin{equation*}
 \langle X \varphi,\psi\rangle = \langle  \varphi,X^+\psi\rangle,\qquad \varphi,\psi\in H^{1/2}(\partial\Omega).
\end{equation*}
The imaginary part of the operator $X$ is defined by $\Im X=\frac{1}{2i}(X-X^+)$, the operator $X$ is self-adjoint if $X=X^+$
and $X$ is non-negative if $\langle X \varphi,\varphi\rangle\geq 0$ for all $\varphi\in H^{1/2}(\partial\Omega)$.

From the fact that the function $M^D_\alpha(\cdot)$
   in \eqref{eq:4.13} is $\sS_1$-regular with values in $\cB(L^2(\partial\Omega))$ we conclude
\begin{equation*}
 \Lambda_{1/2}(z)\in\cB\bigl(H^{1/2}(\partial\Omega),H^{-1/2}(\partial\Omega)\bigr),\qquad z\in\dC_+.
\end{equation*}
Together with Lemma~\ref{lemmilein} this yields the following corollary.

\begin{corollary}
Let $\Omega\subset \dR^2$ be an exterior domain  with a $C^\infty$-smooth boundary
and let $\Lambda_{1/2}(\cdot)$ be the Dirichlet-to-Neumann map defined in \eqref{Dir-to-Neuman_map}. Then the following holds.
\begin{itemize}
 \item [{\rm (i)}] The limit $\Lambda_{1/2}(\lambda+i0)  = \lim_{\varepsilon \rightarrow  +0} \Lambda_{1/2}(\lambda + i\varepsilon)$ exists for a.e. $\lambda\in\dR$ in the norm
 of $\cB(H^{1/2}(\partial\Omega),H^{-1/2}(\partial\Omega))$;
 \item [{\rm (ii)}] $\Lambda_{1/2}(\lambda+i0)\in\cB(H^{1/2}(\partial\Omega),H^{-1/2}(\partial\Omega))$ is boundedly invertible for a.e. $\lambda\in\dR$;
 \item [{\rm (iii)}] $\Lambda_{1/2}(\lambda + i\varepsilon) - \Lambda_{1/2}(\lambda + i0) \in \sS_p(H^{1/2}(\partial\Omega),H^{-1/2}(\partial\Omega))$
 for $p\in(1,\infty]$, $\varepsilon>0$ and a.e. $\lambda\in\dR$, and
   \begin{equation*}
\lim_{\varepsilon \rightarrow +0} \bigl\Vert\Lambda_{1/2}(\lambda+ i\varepsilon) - \Lambda_{1/2}(\lambda+i0)\bigr\Vert_{\sS_p(H^{1/2}(\partial\Omega),H^{-1/2}(\partial\Omega))} =0;
  \end{equation*}
 \item [{\rm (iv)}] $\Im \Lambda_{1/2}(\lambda+i0) = \lim_{\varepsilon \rightarrow  +0} \Im \Lambda_{1/2}(\lambda + i\varepsilon)$ exists for a.e. $\lambda\in\dR$ in the
 $\sS_1(H^{1/2}(\partial\Omega),H^{-1/2}(\partial\Omega))$-norm and $-\Im \Lambda_{1/2}(\lambda + i0) \ge 0$.
\end{itemize}
\end{corollary}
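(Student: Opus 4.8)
The plan is to obtain all four assertions by transporting the corresponding statements of Lemma~\ref{lemmilein} through the fixed isomorphisms of the rigging \eqref{rig}. The device is to apply Theorem~\ref{th:4.1} with the admissible real bounded coefficient $\alpha\equiv 0$: the resulting Weyl function $M^D_0(z)=-\widetilde{\jmath^{-1}}\Lambda_{1/2}(z)\jmath^{-1}$ takes values in $\cB(L^2(\partial\Omega))$ and is $\sS_1$-regular, so Lemma~\ref{lemmilein} applies to it. Introducing the congruence $\Phi(X):=\widetilde{\jmath^{-1}}X\jmath^{-1}$, which maps $\cB(H^{1/2}(\partial\Omega),H^{-1/2}(\partial\Omega))$ bijectively onto $\cB(L^2(\partial\Omega))$ with bounded inverse $\Phi^{-1}(Z)=(\widetilde{\jmath^{-1}})^{-1}Z\jmath$, we have $\Phi(\Lambda_{1/2}(z))=-M^D_0(z)$ for $z\in\dC_+$, and the task reduces to pulling the four properties of $M^D_0$ back along $\Phi$.

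First I would record the two structural features of $\Phi$ that drive everything. Since $\jmath$ and $\widetilde{\jmath^{-1}}$ are isomorphisms, both $\Phi$ and $\Phi^{-1}$ are bounded, so by the ideal property of $\sS_p$ one has $X\in\sS_p$ if and only if $\Phi(X)\in\sS_p$, with $\sS_p$-norms comparable; this will yield (iii). Secondly, using the definition \eqref{green123} of $\langle\cdot,\cdot\rangle$ one checks the adjoint intertwining $\Phi(X^+)=\Phi(X)^*$, where $X^+$ is the rigging adjoint and $*$ the $L^2(\partial\Omega)$ adjoint: for $\varphi,\psi\in H^{1/2}(\partial\Omega)$ one computes $\langle X\varphi,\psi\rangle=(\Phi(X)\jmath\varphi,\jmath\psi)_{L^2(\partial\Omega)}$, and comparing this with the analogous expression for $\langle\varphi,X^+\psi\rangle$ forces $\Phi(X^+)=\Phi(X)^*$. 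In particular $\Phi(\Im X)=\Im\Phi(X)$, and $X\ge 0$ in the rigging if and only if $\Phi(X)\ge 0$ in $L^2(\partial\Omega)$.

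With these in hand the four items are immediate. For (i), the $\cB(L^2(\partial\Omega))$-norm limit $M^D_0(\lambda+i0)$ from Lemma~\ref{lemmilein}(i) gives the $\cB(H^{1/2}(\partial\Omega),H^{-1/2}(\partial\Omega))$-norm limit $\Lambda_{1/2}(\lambda+i0)=\Phi^{-1}(-M^D_0(\lambda+i0))$, because $\Phi^{-1}$ is bounded. For (ii), bounded invertibility of $M^D_0(\lambda+i0)$ (Lemma~\ref{lemmilein}(ii)) transports under $\Phi^{-1}$ to bounded invertibility of $\Lambda_{1/2}(\lambda+i0)$, with inverse $-\jmath^{-1}M^D_0(\lambda+i0)^{-1}\widetilde{\jmath^{-1}}\colon H^{-1/2}(\partial\Omega)\to H^{1/2}(\partial\Omega)$. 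For (iii), applying $\Phi^{-1}$ to the relation $M^D_0(\lambda+i\varepsilon)-M^D_0(\lambda+i0)\in\sS_p(L^2(\partial\Omega))$ and to its $\sS_p$-norm convergence from Lemma~\ref{lemmilein}(iii) yields the corresponding $\sS_p$-statement for $\Lambda_{1/2}$. For (iv), the intertwining gives $\Im\Lambda_{1/2}(z)=-\Phi^{-1}(\Im M^D_0(z))$, so the $\sS_1$-norm limit of $\Im M^D_0(\lambda+i\varepsilon)$ from Lemma~\ref{lemmilein}(iv) furnishes the $\sS_1$-norm limit of $\Im\Lambda_{1/2}(\lambda+i\varepsilon)$; and since $M^D_0$ is a Nevanlinna function, $\Im M^D_0(z)\ge 0$, so $\Phi(-\Im\Lambda_{1/2}(z))=\Im M^D_0(z)\ge 0$, i.e. $-\Im\Lambda_{1/2}(z)\ge 0$ in the rigging, which persists in the limit.

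I expect the only step requiring genuine care, rather than bookkeeping, to be the adjoint intertwining $\Phi(X^+)=\Phi(X)^*$, since it encodes how the rigging pairing \eqref{green123} interacts with $\jmath$ and $\widetilde{\jmath^{-1}}$ and is exactly what fixes both the existence of the $\sS_1$-limit and the correct sign $-\Im\Lambda_{1/2}(\lambda+i0)\ge 0$ in (iv). Everything else is a mechanical transport of Lemma~\ref{lemmilein} through the bounded isomorphisms of the rigging.
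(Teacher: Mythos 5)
Your proposal is correct and is essentially the paper's own argument: the paper likewise deduces the corollary by combining the $\sS_1$-regularity of $M^D_\alpha(z)=\widetilde{\jmath^{-1}}(\alpha-\Lambda_{1/2}(z))\jmath^{-1}$ (established in the proof of Theorem~\ref{th:4.1}, with $\alpha\equiv 0$ being the relevant instance) with Lemma~\ref{lemmilein}, transporting the conclusions back through the rigging isomorphisms and the rigging adjoint $X^{+}$. Your explicit congruence $\Phi$ and the intertwining $\Phi(X^{+})=\Phi(X)^{*}$ merely spell out the bookkeeping the paper leaves implicit.
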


\subsection{Scattering matrix for the Neumann and Robin realization}\label{nrsec_N-Roben}

In this subsection we discuss a representation of the scattering matrix for the pair $\{A_N,A_\alpha\}$ consisting of the self-adjoint Neumann
and Robin operator associated to $\mathscr{L}$ in \eqref{adan}
and \eqref{robin}. Here $\Omega$ is an exterior domain in $\dR^2$ or $\dR^3$; in both situations it is known from \cite{BLLLP10,G11} that the trace class
condition \eqref{trace} for the resolvent  difference is satisfied; cf. Remark~\ref{remarkii}.

In a similar way as in the previous subsection we first define the
Neumann-to-Dirichlet map $\cN(z)$ as an operator in
$L^2(\partial\Omega)$ for all $z\in\rho(A_N)$. Recall first that for
$\varphi\in L^2(\partial\Omega)$ and $z \in\rho(A_N)$ the boundary value problem
\begin{equation}\label{bvpnn}
 -\Delta f_z + Vf_z = z f_z,\qquad \gamma_N f_z=\varphi,
\end{equation}
admits a unique solution $f_z \in H^{3/2}_\Delta(\Omega)$. The corresponding solution
operator is given  by
\begin{equation}\label{pp}
 P_N(z):L^2(\partial\Omega) \longrightarrow H^{3/2}_\Delta(\Omega)\subset L^{2}(\Omega),\qquad \varphi\mapsto f_z.
\end{equation}
For $z\in\rho(A_N)$ the {\it Neumann-to-Dirichlet map} is defined by
\begin{equation}\label{ndmap}
 \cN(z):L^2(\partial\Omega)\longrightarrow L^2(\partial\Omega),\qquad \varphi\mapsto \gamma_D P_N(z)\varphi.
\end{equation}
It is clear that $\cN(z)$ maps
Neumann boundary values $\gamma_N f_z$ of the solutions $f_z\in
H^{3/2}_\Delta(\Omega)$ of \eqref{bvpnn} onto their Dirichlet boundary values $\gamma_D
f_z$; here $\gamma_N$ and $\gamma_D$ denote the extensions of the Dirichlet and
Neumann trace operators onto $H^{3/2}_\Delta(\Omega)$ from \eqref{gammadcont} and \eqref{gammancont}, respectively.
Since \eqref{bvpnn} admits a unique solution for  each  $\varphi\in L^2(\partial\Omega)$ it is
clear that the operators $P_N(z)$ and $\cN(z)$ are well defined on
$L^2(\partial\Omega)$.

In the next theorem the scattering matrix of the pair $\{A_N,A_\alpha\}$ is expressed in
terms of the limit values  of the Neumann-to-Dirichlet map $\cN(z)$ and the parameter
$\alpha$ in the boundary condition of the Robin realization $A_\alpha$. In contrast to Theorem~\ref{th:4.1} here it is also assumed that
$\alpha^{-1}\in L^\infty(\partial\Omega)$.

\begin{theorem}\label{nrthm}
Let $\Omega\subset \dR^n$, $n=2,3$, be an exterior domain  with a $C^\infty$-smooth boundary, let $V\in
L^\infty(\Omega)$ and $\alpha  \in L^\infty(\partial\Omega)$ be real valued functions such that $\alpha^{-1}\in L^\infty(\partial\Omega)$, and let  $A_N$
and  $A_\alpha$  be the self-adjoint Neumann and Robin realizations of $\mathscr{L} = -\Delta+V$ in
$L^2(\Omega)$ in \eqref{adan} and \eqref{robin}, respectively.
Moreover, let $\cN(\cdot)$ be the Neumann-to-Dirichlet map defined in \eqref{ndmap}.

Then
$\{A_N,A_\alpha\}$ is a complete scattering system and
\begin{equation*}
 L^2(\dR,d\lambda,\cH_\lambda),\qquad \cH_\lambda:=\overline{\ran (\Im \cN(\lambda+i0))},
\end{equation*}
forms a spectral representation of $A_N^{ac}$ such that for a.e. $\lambda \in \dR$
the scattering matrix $\{S(A_N,A_\alpha;\lambda)\}_{\lambda\in\dR}$ of the scattering system $\{A_N,A_\alpha\}$ admits
the representation
  \begin{equation*}
 S(A_N,A_\alpha;\lambda)=I_{\cH_\lambda}+2i\sqrt{\Im
\cN(\lambda + i0)}\,\bigl(I-\alpha\cN(\lambda+i0)\bigr)^{-1}\alpha  \sqrt{\Im
\cN(\lambda+i0)}.
   \end{equation*}
\end{theorem}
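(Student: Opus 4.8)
The plan is to apply Theorem~\ref{th:3.1} to the pair $\{A_N,A_\alpha\}$ by exhibiting a $B$-generalized boundary triple whose Weyl function is, up to the bounded self-adjoint constant $-\alpha^{-1}$, the Neumann-to-Dirichlet map $\cN(\cdot)$. By \eqref{adan} and \eqref{robin} the intersection $A_N\cap A_\alpha$ equals the minimal operator $S=S_{min}$, which is densely defined, closed, simple and symmetric by Lemma~\ref{slem}. I would set $T:=S^*\upharpoonright H^{3/2}_\Delta(\Omega)$ and, exploiting the hypothesis $\alpha^{-1}\in L^\infty(\partial\Omega)$, define
\begin{equation*}
 \Gamma_0 f:=\gamma_N f\quad\text{and}\quad \Gamma_1 f:=\gamma_D f-\alpha^{-1}\gamma_N f,\qquad f\in\dom(T).
\end{equation*}
Both maps are well defined and $L^2(\partial\Omega)$-valued by \eqref{gammadcont}--\eqref{gammancont}, and Green's second identity \eqref{green32}, together with the fact that $\alpha^{-1}$ is a real multiplication operator (so that the two terms containing $\alpha^{-1}\gamma_N$ cancel), yields the abstract Green identity \eqref{Green_formula_for_BGBtrip}. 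Surjectivity of $\gamma_N\colon H^{3/2}_\Delta(\Omega)\to L^2(\partial\Omega)$ in \eqref{gammancont} gives $\ran(\Gamma_0)=L^2(\partial\Omega)$, and the density of $H^{3/2}_\Delta(\Omega)$ in $\dom(S^*)$ from Lemma~\ref{slem} gives $\overline T=S^*$; thus $\Pi=\{L^2(\partial\Omega),\Gamma_0,\Gamma_1\}$ is a $B$-generalized boundary triple. One has $\ker(\Gamma_1)=\{f\in H^{3/2}_\Delta(\Omega):\alpha\gamma_D f=\gamma_N f\}=\dom(A_\alpha)$ directly, while $\ker(\Gamma_0)\supset\dom(A_N)$ combined with the symmetry of $T\upharpoonright\ker(\Gamma_0)$ and the self-adjointness of $A_N$ forces $A_N=T\upharpoonright\ker(\Gamma_0)$, exactly as in the proof of Theorem~\ref{th:4.1}.

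Next I would compute the Weyl function. For $f_z\in\ker(T-z)$ the definition of $\cN(z)$ in \eqref{ndmap} gives $\gamma_D f_z=\cN(z)\gamma_N f_z$, hence
\begin{equation*}
 \Gamma_1 f_z=\gamma_D f_z-\alpha^{-1}\gamma_N f_z=\bigl(\cN(z)-\alpha^{-1}\bigr)\Gamma_0 f_z,
\end{equation*}
so that by Definition~\ref{def_weyl_function} the Weyl function is $M(z)=\cN(z)-\alpha^{-1}$ for $z\in\rho(A_N)$, and in particular $\Im M(z)=\Im\cN(z)$ since $\alpha^{-1}$ is real.

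The main work is to prove that $M(\cdot)$ is $\sS_1$-regular, and I expect this to be the only genuinely technical point. Fix $\xi\in\rho(A_N)\cap\rho(A_\alpha)\cap\dR$, which exists because $A_N$ and $A_\alpha$ are semibounded from below, and set $C:=M(\xi)=\cN(\xi)-\alpha^{-1}$ and $K(z):=M(z)-M(\xi)$. As in the proof of Theorem~\ref{th:4.1}, $C$ is boundedly invertible: the multiplication operator $\alpha^{-1}$ is boundedly invertible, while $\cN(\xi)$ maps $L^2(\partial\Omega)$ into $H^1(\partial\Omega)$ and is therefore compact on $L^2(\partial\Omega)$, so $C=-\alpha^{-1}+\cN(\xi)$ is self-adjoint and Fredholm, and $\ker(C)=\{0\}$ follows since $C\Gamma_0 f_\xi=0$ would give $f_\xi\in\ker(A_\alpha-\xi)=\{0\}$. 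For the Schatten property I would use \eqref{gstar}: since $(A_N-\bar\xi)^{-1}$ maps into $\dom(A_N)\subset H^2(\Omega)$ and $\gamma_N$ annihilates this domain,
\begin{equation*}
 \gamma(\xi)^*=\Gamma_1(A_N-\bar\xi)^{-1}=\gamma_D(A_N-\bar\xi)^{-1}\in\cB\bigl(L^2(\Omega),H^{3/2}(\partial\Omega)\bigr)
\end{equation*}
by \eqref{gammdgammn} and a closed graph argument. Composing with the embedding $H^{3/2}(\partial\Omega)\hookrightarrow L^2(\partial\Omega)$ and invoking Lemma~\ref{hurra} gives $\gamma(\xi)^*\in\cS_{2(n-1)/3}(L^2(\Omega),L^2(\partial\Omega))$, whence $\gamma(z)\in\cS_{2(n-1)/3}$ for all $z\in\rho(A_N)$ by \eqref{gform1}; using \eqref{idealipp} and \eqref{gutgut} we obtain
\begin{equation*}
 K(z)=(z-\bar\xi)\gamma(\xi)^*\gamma(z)\in\cS_{(n-1)/3}\bigl(L^2(\partial\Omega)\bigr)\subset\sS_1\bigl(L^2(\partial\Omega)\bigr)
\end{equation*}
for $n=2,3$, since $(n-1)/3<1$. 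As $\Im K(z)=\Im M(z)$ and $M(\cdot)\in R^s[L^2(\partial\Omega)]$, the function $K(\cdot)$ is strict, so $M(\cdot)=C+K(\cdot)$ is $\sS_1$-regular.

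Finally I would invoke Theorem~\ref{th:3.1}, which yields that $\{A_N,A_\alpha\}$ is a complete scattering system, that $L^2(\dR,d\lambda,\cH_\lambda)$ with $\cH_\lambda=\overline{\ran(\Im M(\lambda+i0))}=\overline{\ran(\Im\cN(\lambda+i0))}$ is a spectral representation of $A_N^{ac}$, and that
\begin{equation*}
 S(A_N,A_\alpha;\lambda)=I_{\cH_\lambda}-2i\sqrt{\Im M(\lambda+i0)}\,M(\lambda+i0)^{-1}\sqrt{\Im M(\lambda+i0)}
\end{equation*}
for a.e.\ $\lambda\in\dR$. It then remains to rewrite $M(\lambda+i0)^{-1}$. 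By Lemma~\ref{lemmilein} the boundary value $M(\lambda+i0)=\cN(\lambda+i0)-\alpha^{-1}$ is boundedly invertible for a.e.\ $\lambda$, and the factorization $\cN(\lambda+i0)-\alpha^{-1}=-\alpha^{-1}\bigl(I-\alpha\cN(\lambda+i0)\bigr)$ shows that $I-\alpha\cN(\lambda+i0)$ is boundedly invertible with
\begin{equation*}
 -M(\lambda+i0)^{-1}=-\bigl(\cN(\lambda+i0)-\alpha^{-1}\bigr)^{-1}=\bigl(I-\alpha\cN(\lambda+i0)\bigr)^{-1}\alpha.
\end{equation*}
Substituting this together with $\Im M(\lambda+i0)=\Im\cN(\lambda+i0)$ into the displayed formula produces exactly the asserted representation of $S(A_N,A_\alpha;\lambda)$.
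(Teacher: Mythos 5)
Your proposal is correct and follows essentially the same route as the paper's own proof: the same boundary maps $\Gamma_0 f=\gamma_N f$, $\Gamma_1 f=\gamma_D f-\alpha^{-1}\gamma_N f$ on $\dom(T)=H^{3/2}_\Delta(\Omega)$, the same Weyl function $\cN(\cdot)-\alpha^{-1}$, the same $\sS_1$-regularity argument via $\gamma(\xi)^*=\gamma_D(A_N-\bar\xi)^{-1}$, Lemma~\ref{hurra}, the Fredholm-plus-trivial-kernel argument for $M(\xi)$, and the same final reduction to Theorem~\ref{th:3.1} with the factorization $-M(\lambda+i0)^{-1}=(I-\alpha\cN(\lambda+i0))^{-1}\alpha$. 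The only (harmless) additions are your explicit justifications that $\cN(\xi)$ is compact and that a real $\xi\in\rho(A_N)\cap\rho(A_\alpha)$ exists by semiboundedness, which the paper leaves implicit.
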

\begin{proof}
First we note that the assumption $\alpha^{-1}\in L^\infty(\partial\Omega)$
implies $A_N \cap A_\alpha = S$, where $S$ is the minimal operator associated to $\mathscr{L}$ in \eqref{smin}.
Recall that $S$ is closed, densely defined and simple by Lemma~\ref{slem}.
Define the operator $T$ as a restriction of $S^*$ by
\begin{equation*}
Tf = -\Delta f+Vf, \qquad \dom (T) = H^{3/2}_{\Delta}(\Omega),
\end{equation*}
and let
\begin{equation}\label{zhb}
\Gamma_0 f := \gamma_N f\quad\text{and} \quad \Gamma_1 f := \gamma_D f-\frac{1}{\alpha}\gamma_N f,\quad f\in\dom (T).
\end{equation}
We claim that $\Pi^N_\alpha = \{L^2(\partial\Omega),\Gamma_0,\Gamma_1\}$ is a $B$-generalized boundary triple for $S^*$ with the $\sS_1$-regular
Weyl function
\begin{equation}\label{man}
M^N_\alpha(z)=\cN(z)-\frac{1}{\alpha},\qquad z\in\rho(A_N),
\end{equation}
such that
\begin{equation}\label{kernis2}
 A_N=T\upharpoonright\ker(\Gamma_0)\quad\text{and}\quad A_\alpha=T\upharpoonright\ker(\Gamma_1).
\end{equation}

In fact, Green's identity \eqref{Green_formula_for_BGBtrip} is an immediate consequence of the definition of the boundary mappings and \eqref{green32},
and $\ran\Gamma_0=L^2(\partial\Omega)$ holds by \eqref{gammancont}. Moreover, $\dom (T)$ is dense in $\dom (S^*)$ with respect to the graph norm
by Lemma~\ref{slem} and
$A_\alpha=T\upharpoonright\ker(\Gamma_1)$ is clear from \eqref{robin}. Furthermore, the self-adjoint operator
$A_N$ in \eqref{adan} is contained
in $T\upharpoonright\ker(\Gamma_0)$ and since the latter is symmetric (a consequence of Green's identity \eqref{Green_formula_for_BGBtrip})
both operators coincide, that is, \eqref{kernis2} holds, and $\Pi^N_\alpha$ is a $B$-generalized boundary triple. For $f_z\in\ker(T-z)$, $z\in\rho(A_N)$, we have
\begin{equation*}
 \left(\cN(z) -
\frac{1}{\alpha}\right)\Gamma_0 f_z=\cN(z)\gamma_N f_z-\frac{1}{\alpha}\gamma_N f_z=\gamma_D f_z-\frac{1}{\alpha}\gamma_N f_z=\Gamma_1 f_z,
\end{equation*}
and hence the Weyl function $M_\alpha^N(\cdot)$ corresponding to $\Pi_\alpha^N$ is given by \eqref{man}.

It remains to check that the Weyl function $M_\alpha^N(\cdot)$
is $\sS_1$-regular. This is done in a similar way as in Theorem~\ref{th:4.1}.
Denote the $\gamma$-field associated to $\Pi^N_\alpha$ by $\gamma_\alpha^N(\cdot)$ and use
\begin{equation*}
 M^N_\alpha(z)=M^N_\alpha(\xi)^*+(z-\bar\xi)\gamma_\alpha^N(\xi)^* \gamma_\alpha^N(z)
\end{equation*}
with some fixed $\xi\in\rho(A_N)\cap\rho(A_\alpha)\cap\dR$ and all $z\in\rho(A_N)$. From \eqref{zhb}, \eqref{adan}, and \eqref{gammdgammn} we obtain
\begin{equation*}
 \gamma_\alpha^N(\xi)^*h=\Gamma_1(A_N-\bar \xi)^{-1}h=\gamma_D(A_N-\bar \xi)^{-1}h \in H^{3/2}(\partial\Omega)
\end{equation*}
and hence Lemma~\ref{hurra} yields
\begin{equation}\label{gammaspnn}
\gamma_\alpha^N(\xi)^*\in\cS_\frac{2(n-1)}{3}\bigl(L^2(\Omega),L^2(\partial\Omega)\bigr)
\end{equation}
and
\begin{equation}\label{gammaspspnn}
 \gamma_\alpha^N(z)\in\cS_\frac{2(n-1)}{3}\bigl(L^2(\partial\Omega),L^2(\Omega)\bigr)
\end{equation}
for all $z\in\rho(A_N)$. Now \eqref{idealipp} shows
\begin{equation*}
 (z-\bar\xi)\gamma_\alpha^N(\xi)^* \gamma_\alpha^N(z)\in\cS_{\frac{n-1}{3}}\bigl(L^2(\partial\Omega)\bigr),\qquad z\in\rho(A_N).
\end{equation*}
Since $\cS_{(n-1)/3}(L^2(\partial\Omega))\subset \sS_1(L^2(\partial\Omega))$
for $n=2,3$, and $M^N_\alpha(\xi) = M^N_\alpha(\xi)^*$ we conclude that 
$$K(z) := M^N_\alpha(z) - M^N_\alpha(\xi) \in \sS_1\bigl(L^2(\partial\Omega)\bigr),\qquad z \in \dC_+.$$
Because $M^N_\alpha(\cdot)$ is a strict Nevanlinna function $K(\cdot)$ is strict.
Let us show that $C := M^N_\alpha(\xi) = \cN(\xi) - \frac{1}{\alpha}$ is invertible. In fact, since
$\frac{1}{\alpha}$ is a boundedly invertible operator and $\cN(\xi)$ is a compact operator it follows that $M^N_\alpha(\xi)$ is a Fredholm operator.
Furthermore, $\ker(M^N_\alpha(\xi))$ is trivial as otherwise there is a non-trivial function $f_\xi \in \ker(T-\xi)$ such that $\Gamma_1f_\xi = 0$,
that is, $f_\xi \in \ker(A_\alpha - \xi)$. But $\xi \in \rho(A_N) \cap \rho(A_\alpha) \cap \dR$ yields $f_\xi = 0$; a contradiction. Thus 
$\ker(M^N_\alpha(\xi)) = \{0\}$ and hence $C := M^N_\alpha(\xi)$ is boundedly invertible. Therefore,
the Weyl function $M_\alpha^N(\cdot)$ is $\sS_1$-regular.
Now the assertions in Theorem~\ref{nrthm} follow from Theorem~\ref{th:3.1},
\begin{equation*}
 \Im M_\alpha^N(z)=\Im\cN(z),\quad M_\alpha^N(z)^{-1}=-\bigl(I-\alpha \cN(z)\bigr)^{-1}\alpha,\qquad z\in\dC_+,
\end{equation*}
and
\begin{equation*}
 \Im M_\alpha^N(\lambda+i0)=\Im\cN(\lambda+i0),\quad M_\alpha^N(\lambda+i0)^{-1}=-\bigl(I-\alpha \cN(\lambda+i0)\bigr)^{-1}\alpha
\end{equation*}
for a.e. $\lambda\in\dR$. 
\end{proof}

\begin{remark}\label{remarkii}
{\rm
From \eqref{gammaspnn} and \eqref{gammaspspnn} one concludes in the same way as in Remark~\ref{remarki} that
Krein's formula in Proposition~\ref{prop:2.4}~(iii) and the property \eqref {idealipp} leads to
\begin{equation*}
 (A_\alpha-z)^{-1}-(A_D-z)^{-1}=-\gamma_\alpha^N(z) M_\alpha^N(z)^{-1}\gamma_\alpha^N(\bar z)^*\in\cS_{\frac{n-1}{3}}(L^2(\Omega));
\end{equation*}
for all $z\in\rho(A_\alpha)\cap\rho(A_N)$; cf. \cite{BLLLP10,G11}.
}
\end{remark}

\begin{remark}\label{rem:4.3}
{\rm The definition of the boundary triples $\Pi^D_\alpha$ and $\Pi^N_\alpha$ in
Theorems~\ref{th:4.1} and \ref{nrthm} given for an exterior domain $\Omega$, and
the form and properties of the corresponding Weyl functions remain the same in the case
of a bounded domain $\Omega$ with smooth boundary. The constructions and properties are
only based on the compactness and smoothness of $\partial\Omega$. }
\end{remark}

\section{Schr\"{o}dinger operators with interactions supported on hypersurfaces}\label{deltasec}

In this section we investigate scattering systems consisting of Schr\"{o}dinger operators in $\dR^n$. Here the Euclidean space is decomposed into
a smooth bounded domain and its complement,
and the usual self-adjoint Schr\"{o}dinger operator on the whole space is compared with
the orthogonal sum of the Dirichlet or Neumann operators on the subdomains in Section~\ref{dcoup} and \ref{ncoup}, and with a Schr\"{o}dinger operator with a singular
$\delta$-potential supported on the interface in Section~\ref{deltasec3}. In our main results Theorem~\ref{Th_Scat_Mat_for_Dir-Free}, \ref{nfthm}, and \ref{dddthm}
we obtain explicit forms of the scattering matrices in terms of Dirichlet-to-Neumann or Neumann-to-Dirichlet maps. As in Section~\ref{nrsec} the strategy in the
proofs is to apply the general result Theorem~\ref{th:3.1} to suitable $B$-generalized boundary triples. Here we shall assume for convenience that a simplicity condition
for the underlying symmetric operator is satisfied; this condition can be dropped in which case Corollary \ref{cor:3.2} would yield a slightly more
involved representation of the scattering matrix. We also refer the interested reader to Remarks~\ref{remarkiii}, \ref{remarkiv}, and \ref{remarkv}, where singular value
estimates due to Birman, Grubb and others are revisited.

\subsection{Preliminaries on orthogonal sums and couplings of Schr\"{o}dinger operators}

Let $\Omega_-\subset\dR^n$ be a bounded domain with $C^\infty$-smooth boundary $\partial\Omega_-$ and let
$\Omega_+ := \dR^n\setminus\overline{\Omega_-}$ be the corresponding
$C^\infty$-smooth exterior domain. Denote the common boundary of $\Omega_+$ and $\Omega_-$ by
$\cC := \partial\Omega_{\pm}$. Throughout this section we consider a
Schr\"{o}dinger differential  expression with a bounded, measurable, real valued potential $V$ on $\dR^n$,
\begin{equation}\label{Schrod_expres2}
 \mathscr{L} = -\Delta  + V,\qquad V \in L^\infty(\dR^n).
\end{equation}

In the following we shall adapt the notation from Section~\ref{prelisec} in an obvious way, e.g. $H^s(\Omega_\pm)$ and $H^r(\cC)$ denote the Sobolev spaces
on $\Omega_\pm$ and the common boundary (or interface) $\cC$, respectively, the spaces $H^s_\Delta(\Omega_\pm)$, $s\in [0,2]$, are defined and equipped with scalar
products as in \eqref{hs1}--\eqref{hs2}, and we shall use the notation
\begin{equation*}
 H^s_\Delta(\dR^n\setminus\cC):=H^s_\Delta(\Omega_+)\times H^s_\Delta(\Omega_-),\qquad s\in [0,2].
\end{equation*}
A function $f:\dR^n\rightarrow\dC$ is often written in a two component form $f=\{f^+,f^-\}$,
where $f^\pm:\Omega_\pm\rightarrow\dC$ denote the restrictions of $f$ onto $\Omega_\pm$.
The Dirichlet and Neumann trace operators will be denoted by $\gamma_D^\pm$ and $\gamma_N^\pm$,
and we emphasize that the Neumann trace is taken with respect to the outer normal of $\Omega_\pm$. In particular, $\gamma_N^+f^++\gamma_N^-f^-=0$ for a function
$f=\{f^+,f^-\}\in H^2(\dR^n)$. We also note that the mapping properties of the Dirichlet and Neumann trace operators in \eqref{gammadcont} and \eqref{gammancont}
are valid for both domains $\Omega_+$ and $\Omega_-$, and the same is true for the extensions of Green's identity in \eqref{green32} and \eqref{green12}, respectively.
Furthermore, we shall use in the proofs in Section~\ref{dcoup} and Section~\ref{ncoup} that $\gamma_D^\pm$ and $\gamma_N^\pm$ admit continuous extensions
\begin{equation*}
 \gamma_D^\pm:H^0_\Delta(\Omega_\pm)\rightarrow H^{-1/2}(\cC)\quad\text{and}\quad\gamma_N^\pm:H^0_\Delta(\Omega_\pm)\rightarrow H^{-3/2}(\cC)
\end{equation*}
and that Green's identity extends to $f_\pm\in H^2(\Omega_\pm)$ and $g_\pm\in H^0_\Delta(\Omega_\pm)$ in the form
\begin{equation}\label{greenmax}
 (-\Delta f_\pm,g_\pm)_{L^2(\Omega_\pm)}-(f_\pm,-\Delta g_\pm)_{L^2(\Omega_\pm)}=\langle \gamma_D^\pm f_\pm,\gamma_N^\pm g_\pm\rangle - \langle \gamma_N^\pm f_\pm,\gamma_D^\pm g_\pm\rangle;
\end{equation}
cf. \cite{LioMag71} and \cite[Chapter I, Theorem 3.3 and Corollary 3.3]{Gru68}. In \eqref{greenmax} the inner products
$\langle\cdot,\cdot\rangle$ on the right hand side denote the continuations of the $L^2(\cC)$ inner product onto $H^{3/2}(\cC)\times H^{-3/2}(\cC)$ and
$H^{1/2}(\cC)\times H^{-1/2}(\cC)$, respectively, and in the following it will always be clear from the context which duality is used; cf. \eqref{rig}--\eqref{green123}.

The differential expression \eqref{Schrod_expres2} induces self-adjoint operators in $L^2(\dR^n)$. The natural self-adjoint realization is the free
Schr\"{o}dinger operator,
\begin{equation}\label{aaa}
 A_{\rm free} f = \mathscr{L}f,\qquad \dom (A_{\rm free})=H^2(\dR^n),
\end{equation}
which is semibounded from below. Clearly the functions in $\dom (A_{\rm free})$ do not reflect the decomposition of $\dR^n$ into the domains
$\Omega_+$ and $\Omega_-$. Furthermore, we will make use of the self-adjoint orthogonal sum
\begin{equation}\label{Dirich_Oper_in_Rn}
\begin{split}
A_D &= A^+_D\oplus A^-_D,\\
\dom (A_D) &=  \bigl\{f=\{f^+,f^-\}\in H^2(\Omega_+) \oplus H^2(\Omega_-):\gamma_D^+f^+=\gamma_D^-f^-=0\bigr\},
\end{split}
\end{equation}
of the self-adjoint Dirichlet operators $A_D^\pm$ in $L^2(\Omega_\pm)$ in \eqref{adan},
and of the self-adjoint orthogonal sum
\begin{equation}\label{eq:5.5}
\begin{split}
A_N &= A^+_N\oplus A^-_N,\\
\dom (A_N) &=  \bigl\{f=\{f^+,f^-\}\in H^2(\Omega_+) \oplus H^2(\Omega_-):\gamma_N^+f^+=\gamma_N^-f^-=0\bigr\},
\end{split}
\end{equation}
of the self-adjoint Neumann operators $A_N^\pm$ in $L^2(\Omega_\pm)$ in \eqref{adan}. We shall sometimes refer to $A_D$ as Dirichlet realization of $\mathscr{L}$
with respect to $\cC$ and to $A_N$ as Neumann realization of $\mathscr{L}$
with respect to $\cC$.
The properties of $A_D^\pm$ and $A_N^\pm$
extend in a natural way to their orthogonal sums $A_D$ and $A_N$ in \eqref{Dirich_Oper_in_Rn} and \eqref{eq:5.5}, respectively.
In particular, the Dirichlet realization $A_D$ and the Neumann realization $A_N$ of $\mathscr{L}$ with respect to $\cC$ are both semibounded from below.

\subsection{Scattering matrix for the free Schr\"{o}dinger operator and the Dirichlet realization with respect to $\cC$}\label{dcoup}
We shall derive a representation for the scattering matrix of the scattering  system
$\{A_D,A_{\rm free}\}$ in $\dR^2$.
Let $\Lambda^\pm_{1/2}(z):  H^{1/2}(\cC) \mapsto  H^{-1/2}(\cC)$ be the Dirichlet-to-Neumann map
defined in \eqref{Dir-to-Neuman_map}  with respect to $\Omega_\pm$, that is,
\begin{equation}\label{dnpl}
 \Lambda^\pm_{1/2}(z)\gamma_D^\pm f^\pm_z=\gamma_N^\pm f^\pm_z
\end{equation}
holds for any solution $f^\pm_z \in
H^{1}(\Omega_\pm)$ of the equation $-\Delta f^\pm_z + V_\pm f^\pm_z =z f^\pm_z$ and $z\in\rho(A_D^\pm)$. Furthermore, define
the operator-valued function
$\Lambda_{1/2}(\cdot)$ by
\begin{equation}\label{eq:5.10}
\Lambda_{1/2}(z) := \Lambda^+_{1/2}(z) + \Lambda^-_{1/2}(z): H^{1/2}(\cC) \longrightarrow H^{-1/2}(\cC), \quad z \in  \rho(A_D).
\end{equation}
\begin{theorem}\label{Th_Scat_Mat_for_Dir-Free}
Let $\Omega_\pm\subset\dR^2$  be as above, let $V\in L^\infty(\dR^2)$ be a real valued function,
and let $A_{\rm free}$ and $A_{D}$ be the self-adjoint Schr\"{o}dinger operators
in $L^2(\dR^2)$ in \eqref{aaa} and \eqref{Dirich_Oper_in_Rn}, respectively.  Moreover, let
$\Lambda_{1/2}(\cdot)$  be given by \eqref{eq:5.10} and let
  \begin{equation}\label{Weyl-func_for_Dir_in_whole_space}
M^D_{\rm free}(z) := -\widetilde{\jmath^{-1}}\Lambda_{1/2}(z)\jmath^{-1}, \quad z\in \dC_+,
     \end{equation}
where $\jmath: H^{1/2}(\cC) \longrightarrow L^2(\cC)$ denotes some uniformly positive self-adjoint operator
in $L^2(\cC)$ with $\dom(\jmath)=H^{1/2}(\cC)$ as in \eqref{rig}--\eqref{jjj}.

Then $\{A_D,A_{\rm free}\}$ is a complete scattering system. If the symmetric operator $S := A_D \cap A_{\rm free}$ has no eigenvalues then
\begin{equation*}
L^2(\dR,d\lambda,\cH_\lambda),\qquad \cH_\lambda:=\overline{\ran\bigl(\Im M^D_{\rm free}(\lambda+i0)\bigr)},
\end{equation*}
forms a spectral representation of $A^{ac}_D$ such that for a.e. $\lambda\in\dR$ the scattering matrix
$\{S(A_D,A_{\rm free};\lambda)\}_{\lambda\in\dR}$ of the scattering system $\{A_{D}, A_{\rm free}\}$
admits the representation
\begin{equation*}
S(A_D,A_{\rm free};\lambda) = I_{\cH_\lambda}-2i\sqrt{\Im M^D_{\rm free}(\lambda+i0)} M^D_{\rm free}(\lambda+i0)^{-1} \sqrt{\Im M^D_{\rm free}(\lambda+i0)}.
 \end{equation*}
\end{theorem}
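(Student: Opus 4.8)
The plan is to realize the pair $\{A_D,A_{\rm free}\}$ through a $B$-generalized boundary triple whose Weyl function is precisely $M^D_{\rm free}(\cdot)$ in \eqref{Weyl-func_for_Dir_in_whole_space}, to show that this Weyl function is $\sS_1$-regular, and then to invoke Theorem~\ref{th:3.1}; this follows the scheme of the proofs of Theorem~\ref{th:4.1} and Theorem~\ref{nrthm}. Concretely, I would first note that $S=A_D\cap A_{\rm free}$ consists of those $f=\{f^+,f^-\}$ with $f^\pm\in H^2(\Omega_\pm)$, $\gamma_D^\pm f^\pm=0$ and $\gamma_N^+f^++\gamma_N^-f^-=0$, so $S$ is closed, densely defined and symmetric. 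I would then set $Tf=-\Delta f+Vf$ on
\[
\dom(T)=\bigl\{f=\{f^+,f^-\}: f^\pm\in H^1_\Delta(\Omega_\pm),\ \gamma_D^+f^+=\gamma_D^-f^-\bigr\},
\]
and define $\Gamma_0 f:=\jmath\,\gamma_D^+f^+$ (the common Dirichlet trace) and $\Gamma_1 f:=-\widetilde{\jmath^{-1}}(\gamma_N^+f^++\gamma_N^-f^-)$, using the extended trace maps \eqref{gammadcont}--\eqref{gammancont} on each $\Omega_\pm$. With this choice $\ker(\Gamma_0)$ enforces $\gamma_D^\pm f^\pm=0$ and $\ker(\Gamma_1)$ enforces $\gamma_N^+f^++\gamma_N^-f^-=0$ together with the built-in continuity $\gamma_D^+f^+=\gamma_D^-f^-$, i.e. $A_D=T\upharpoonright\ker(\Gamma_0)$ and $A_{\rm free}=T\upharpoonright\ker(\Gamma_1)$.

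Next I would verify that $\Pi=\{L^2(\cC),\Gamma_0,\Gamma_1\}$ is a $B$-generalized boundary triple. Green's identity \eqref{Green_formula_for_BGBtrip} is obtained by summing the extended second Green identity \eqref{green12} over $\Omega_+$ and $\Omega_-$ and using $\gamma_D^+f^+=\gamma_D^-f^-$ and $\gamma_D^+g^+=\gamma_D^-g^-$ to collapse the boundary terms into $\langle\gamma_D f,\gamma_N^+g^++\gamma_N^-g^-\rangle-\langle\gamma_N^+f^++\gamma_N^-f^-,\gamma_D g\rangle$, which matches $(\Gamma_1 f,\Gamma_0 g)-(\Gamma_0 f,\Gamma_1 g)$ exactly as in the computation in Theorem~\ref{th:4.1}. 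Surjectivity $\ran(\Gamma_0)=L^2(\cC)$ follows from surjectivity of $\gamma_D^\pm$ in \eqref{gammadcont} and the fact that $\jmath$ is an isomorphism; graph-norm density of $\dom(T)$ in $\dom(S^*)$ (hence $\overline T=S^*$) follows from the density results underlying Lemma~\ref{slem}. The Weyl function is then identified by inserting $f_z\in\ker(T-z)$: writing $\psi=\gamma_D^\pm f_z$ and using $\gamma_N^\pm f_z^\pm=\Lambda^\pm_{1/2}(z)\psi$ from \eqref{dnpl} and \eqref{eq:5.10} gives $\Gamma_1 f_z=-\widetilde{\jmath^{-1}}\Lambda_{1/2}(z)\jmath^{-1}\Gamma_0 f_z$, which is exactly $M^D_{\rm free}(z)$.

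The technical heart is the $\sS_1$-regularity, and this is where the restriction to $\dR^2$ enters. Fixing $\xi\in\rho(A_D)\cap\rho(A_{\rm free})\cap\dR$ below both spectra, \eqref{gstar} gives $\gamma^D_{\rm free}(\xi)^*h=-\widetilde{\jmath^{-1}}(\gamma_N^++\gamma_N^-)(A_D-\bar\xi)^{-1}h$; since $\dom(A_D)\subset H^2(\Omega_+)\oplus H^2(\Omega_-)$ the Neumann traces map into $H^{1/2}(\cC)$ by \eqref{gammdgammn}, and composing the resulting bounded map into $H^{1/2}(\cC)$ with the embedding $H^{1/2}(\cC)\hookrightarrow H^{-1/2}(\cC)$, which lies in $\cS_1$ by \eqref{hohopasst} for $n=2$, yields $\gamma^D_{\rm free}(\xi)^*\in\cS_1(L^2(\dR^2),L^2(\cC))$. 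Then $\gamma^D_{\rm free}(z)\in\cS_1$ for all $z\in\rho(A_D)$ via \eqref{gform1}, and \eqref{gutgut} with \eqref{idealipp} shows $K(z):=M^D_{\rm free}(z)-M^D_{\rm free}(\xi)\in\cS_{1/2}\subset\sS_1$, strict because the Weyl function is strict. For the constant term $C:=M^D_{\rm free}(\xi)=-\widetilde{\jmath^{-1}}\Lambda_{1/2}(\xi)\jmath^{-1}$ I would argue that for $\xi$ below the Dirichlet spectra the sum $\Lambda_{1/2}(\xi)=\Lambda^+_{1/2}(\xi)+\Lambda^-_{1/2}(\xi)$ is a (coercive, positive) isomorphism $H^{1/2}(\cC)\to H^{-1/2}(\cC)$, so $C$ is a composition of isomorphisms; the triviality of $\ker(C)$ also follows as in Theorem~\ref{th:4.1}, since a nonzero $f_\xi\in\ker(T-\xi)$ with $\Gamma_1 f_\xi=0$ would lie in $\ker(A_{\rm free}-\xi)$, contradicting $\xi\in\rho(A_{\rm free})$. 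This establishes $M^D_{\rm free}(\cdot)\in R^{\rm reg}_{\sS_1}[L^2(\cC)]$, whence Proposition~\ref{prop:2.4}(iii) gives the trace class resolvent difference and the completeness asserted unconditionally in the theorem.

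Finally, the spectral representation on $\cH_\lambda=\overline{\ran(\Im M^D_{\rm free}(\lambda+i0))}$ and the formula for $S(A_D,A_{\rm free};\lambda)$ follow from Theorem~\ref{th:3.1} once $S$ is known to be simple, and this is exactly where the hypothesis that $S$ has no eigenvalues is used. I expect this simplicity step to be the main obstacle: unlike in Theorems~\ref{th:4.1} and~\ref{nrthm}, where $S$ is the automatically simple minimal operator, here $S=A_D\cap A_{\rm free}$ is a genuine coupling. The intended mechanism is that any self-adjoint part of $S$ in the decomposition \eqref{sssi} is localized on the bounded component $\Omega_-$ --- the minimal Schr\"odinger operator on the exterior domain $\Omega_+$ is simple by Lemma~\ref{slem} and admits no self-adjoint reducing part --- so that the self-adjoint part has compact resolvent and purely discrete spectrum; eigenvalue-freeness of $S$ then forces it to be trivial. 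Making this localization rigorous (together with the precise identification of $S^*$ feeding the density step) is the delicate point, whereas the $\sS_1$-bookkeeping is routine once the $\dR^2$ interface estimate \eqref{hohopasst} is in place. With simplicity established, Theorem~\ref{th:3.1} applied to $\{A_D,A_{\rm free}\}$ and $M^D_{\rm free}(\cdot)$ yields the stated representation.
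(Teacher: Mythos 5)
Your construction coincides with the paper's proof in all of its technical bulk: the same operator $T$ with Dirichlet-trace matching, the same boundary maps $\Gamma_0 f=\jmath\,\gamma_Df$, $\Gamma_1 f=-\widetilde{\jmath^{-1}}(\gamma_N^+f^++\gamma_N^-f^-)$, the same identification of the Weyl function with \eqref{Weyl-func_for_Dir_in_whole_space}, and the same $\sS_1$-regularity argument (Neumann traces of $\dom(A_D)$ land in $H^{1/2}(\cC)$, compose with the embedding $H^{1/2}(\cC)\hookrightarrow H^{-1/2}(\cC)$, which lies in $\cS_1$ for $n=2$ by \eqref{hohopasst}; invertibility of the constant term from positivity of $\Lambda^\pm_{1/2}(\xi)$ for $\xi$ below the spectra). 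The genuine gap is exactly where you flagged it: the simplicity of $S=A_D\cap A_{\rm free}$. The paper does not prove this from scratch either; it invokes \cite[Corollary 4.4]{BR15}, which states for precisely this pair that absence of eigenvalues implies simplicity. Your proposed substitute --- that any self-adjoint part of $S$ is localized on the bounded component $\Omega_-$, hence has compact resolvent and discrete spectrum --- is not only unproved but does not follow in any obvious way: $S$ is \emph{not} an orthogonal sum relative to $L^2(\Omega_+)\oplus L^2(\Omega_-)$, because $\dom(S)$ couples the two sides through the matching condition $\gamma_N^+f^++\gamma_N^-f^-=0$, and a subspace reducing $S$ (equivalently, reducing both $A_D$ and $A_{\rm free}$) need not split along that decomposition. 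The simplicity of the minimal operator on the exterior domain from Lemma~\ref{slem} concerns $S^+_{min}$, of which $S$ is a proper extension, and yields nothing about reducing subspaces of $S$ itself. So this step requires the actual argument of \cite{BR15} (or an equivalent one), not just a tightening of your sketch.

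A secondary inaccuracy: you assert that $\overline{T}=S^*$ "follows from the density results underlying Lemma~\ref{slem}". That lemma treats a single domain and does not accommodate the trace-matching constraint $\gamma_D^+f^+=\gamma_D^-f^-$ present in both $\dom(T)$ and $\dom(S^*)$. The paper proceeds differently: it first computes $S^*$ explicitly via a Green's identity argument, showing $\dom(S^*)=\{f\in H^0_\Delta(\dR^2\setminus\cC):\gamma_D^+f^+=\gamma_D^-f^-\}$, and then obtains graph-norm density of $\dom(A_D)+\dom(A_{\rm free})\subset\dom(T)$ in $\dom(S^*)$ from the disjointness argument of Proposition~\ref{prop:2.5}, since $A_D$ and $A_{\rm free}$ are disjoint self-adjoint extensions of $S$. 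Relatedly, the identities $A_D=T\upharpoonright\ker(\Gamma_0)$ and $A_{\rm free}=T\upharpoonright\ker(\Gamma_1)$ should not be read off as if the kernel conditions characterized $H^2$-regularity; the clean route (used in the paper) is that $T\upharpoonright\ker(\Gamma_j)$ is symmetric by Green's identity and contains the respective self-adjoint operator, hence equals it. Both of these points are repairable along the paper's lines, but as written your closure step is not justified.
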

\begin{proof}
The closed symmetric operator $S=A_D\cap A_{\rm free}$ in $L^2(\dR^2)$ is given by
\begin{equation}\label{ssi}
 \begin{split}
  Sf&=\mathscr{L}f,\\
  \dom (S)&=\bigl\{f=\{f^+,f^-\}\in H^2(\dR^2):\gamma_D^+f^+=\gamma_D^-f^-=0\bigr\}.
 \end{split}
\end{equation}
It is clear that $S$
is a closed extension of the orthogonal sum of the minimal operators $S^+\oplus S^-$ associated to the restriction of
$\mathscr{L}$ onto $\Omega_+$ and $\Omega_-$ as in \eqref{smin} and Lemma~\ref{slem}. It follows that $S$ is densely defined and since we have
assumed that $S$ has no eigenvalues
it follows from
\cite[Corollary 4.4]{BR15} that $S$ is simple. We claim that the adjoint $S^*$ is given by
\begin{equation*}
 \begin{split}
  S^*f&=\mathscr{L}f,\\
  \dom (S^*)&=\bigl\{f=\{f^+,f^-\}\in H^0_\Delta(\dR^2\setminus\cC):\gamma_D^+f^+=\gamma_D^-f^-\bigr\}.
 \end{split}
\end{equation*}
In fact, since $S^*\subset (S^+)^*\oplus (S^-)^*$ it follows that
$$\dom (S^*)\subset  H^0_\Delta(\dR^2\setminus\cC)= \dom (S^+)^* \times \dom (S^-)^*$$ and that
$S^* f=\mathscr L f$ for $f\in\dom (S^*)$. Therefore, we only have to verify that $f=\{f^+,f^-\}\in\dom (S^*)$ satisfies the interface condition
\begin{equation}\label{dcondi}
 \gamma_D^+f^+=\gamma_D^-f^-.
\end{equation}
Assume that for $f=\{f^+,f^-\}\in \dom (S^*)$ and all $h=\{h^+,h^-\}\in \dom (S)$ we have
\begin{equation*}
 (Sh,f)_{L^2(\dR^2)}=(h,S^*f)_{L^2(\dR^2)},
\end{equation*}
that is,
\begin{equation*}
\begin{split}
(-\Delta h^+,f^+)_{L^2(\Omega_+)}+&(-\Delta h^-,f^-)_{L^2(\Omega_-)}\\&\quad =(h^+,-\Delta f^+)_{L^2(\Omega_+)}+(h^-,-\Delta f^-)_{L^2(\Omega_-)}.
\end{split}
\end{equation*}
Then it follows from Green's identity \eqref{greenmax} and the conditions $\gamma_D^\pm h^\pm=0$ and $\gamma_N^+h^++\gamma_N^-h^-=0$ that
\begin{equation*}
\begin{split}
 0&=(-\Delta h^+,f^+)_{L^2(\Omega_+)}-(h^+,-\Delta f^+)_{L^2(\Omega_+)}\\
  &\qquad\qquad\qquad\qquad\qquad\qquad +(-\Delta h^-,f^-)_{L^2(\Omega_-)} -(h^-,-\Delta f^-)_{L^2(\Omega_-)}\\
 &=\langle \gamma_D^+ h^+,\gamma_N^+ f^+\rangle - \langle \gamma_N^+ h^+,\gamma_D^+ f^+\rangle
   + \langle \gamma_D^- h^-,\gamma_N^- f^-\rangle - \langle \gamma_N^- h^-,\gamma_D^- f^-\rangle\\
 &=  \langle \gamma_N^- h^-,\gamma_D^+ f^+ - \gamma_D^- f^-\rangle
\end{split}
\end{equation*}
holds for all $h=\{h^+,h^-\}\in \dom (S)$. This implies \eqref{dcondi}.

Now we proceed in a similar manner as in the proofs of Theorem~\ref{th:4.1} and Theorem~\ref{nrthm} in the previous section.
We consider the operator $T$ defined as a restriction of $S^*$ by
 \begin{equation*}
 \begin{split}
  Tf&=\mathscr{L} f,\\
  \dom (T)&=\bigl\{f=\{f^+,f^-\}\in H^1_{\Delta}(\dR^2\setminus\cC): \gamma^+_{D}f^+ = \gamma^-_{D}f^-\bigr\},
 \end{split}
 \end{equation*}
and for $f\in\dom (T)$ we agree the notation
\begin{equation}\label{eq:5.18}
\gamma_Df := \gamma^+_Df^+ = \gamma^-_Df^-, \quad f = \{f^+,f^-\}\in\dom (T).
\end{equation}

We claim that $\Pi^D_{\rm free} = \{L^2(\cC),\Gamma_0,\Gamma_1\}$, where
  \begin{equation*}
\Gamma_0 f := \jmath\,\gamma_D f\quad  \text{and}\quad
\Gamma_1 f := -\widetilde{\jmath^{-1}} \left(\gamma^+_N f^+ + \gamma^-_N
f^-\right),\quad f\in\dom (T),
  \end{equation*}
is a $B$-generalized boundary triple with an $\sS_1$-regular Weyl function given by \eqref{Weyl-func_for_Dir_in_whole_space} such that
\begin{equation}\label{Covering_conddd}
A_D=T\upharpoonright\ker(\Gamma_0)\quad\text{and}\quad A_{\rm free} =T\upharpoonright\ker(\Gamma_1).
\end{equation}
In fact, for $f=\{f^+,f^-\}$, $g=\{g^+,g^-\}\in\dom (T)$ we compute with the help of Green's identity \eqref{green12} and \eqref{green123} that
\begin{equation*}
 \begin{split}
 &(\Gamma_1 f,\Gamma_0 g)-(\Gamma_0 f,\Gamma_1 g)\\
 &\qquad=\langle - \gamma^+_N f^+ - \gamma^-_N f^-,\gamma_D g\rangle-\langle\gamma_D f,- \gamma^+_N g^+ - \gamma^-_N g^-\rangle\\
 &\qquad=\langle\gamma_D^+f^+,\gamma_N^+g^+\rangle-\langle\gamma_N^+f^+,\gamma_D^+g^+\rangle+\langle\gamma_D^-f^-,\gamma_N^-g^-\rangle-\langle\gamma_N^-f^-,\gamma_D^-g^-\rangle\\
 &\qquad=(-\Delta f^+,g^+)-(f^+,-\Delta g^+)+(-\Delta f^-,g^-)-(f^-,-\Delta g^-)\\
 &\qquad=(Tf,g)-(f,Tg)
 \end{split}
\end{equation*}
and \eqref{gammadcont} implies $\ran(\Gamma_0)=L^2(\cC)$ in the present situation; cf. the proof of Theorem~\ref{th:4.1}. Since $T\upharpoonright\ker(\Gamma_0)$ and
$T\upharpoonright\ker(\Gamma_1)$ are both symmetric operators by \eqref{Green_formula_for_BGBtrip}, and contain the self-adjoint operators
$A_D$ and $A_{\rm free}$, respectively, it follows that \eqref{Covering_conddd} is satisfied. Furthermore, as $S=A_D\cap A_{\rm free}$ it is clear that
the self-adjoint operator $A_D$ and $A_{\rm free}$ are disjoint extensions of $S$. It follows that
\begin{equation}\label{domsss}
\dom (A_D)+\dom (A_{\rm free})
\end{equation}
is dense in $\dom (S^*)$ with respect to the graph norm; cf. Proposition~\ref{prop:2.5}. Since the space \eqref{domsss} is contained in $\dom (T)\subset\dom (S^*)$ we conclude
$\overline T=S^*$.
Therefore $\Pi^D_{\rm free}$ is $B$-generalized boundary triple
such that \eqref{Covering_conddd} holds.

Next we show that the Weyl function $M^D_{\rm free}(\cdot)$ corresponding to $\Pi^D_{\rm free}$ is $\sS_1$-regular and has the form in \eqref{Weyl-func_for_Dir_in_whole_space}.
Let $f_z=\{f_z^+,f_z^-\}\in\ker(T-z)$, $z\in\rho(A_D)$, and use \eqref{dnpl} and \eqref{eq:5.10} to compute
\begin{equation*}
 \begin{split}
 -\widetilde{\jmath^{-1}}\Lambda_{1/2}(z)\jmath^{-1}\Gamma_0f_z&=-\widetilde{\jmath^{-1}}\bigl(\Lambda_{1/2}^+(z)+\Lambda_{1/2}(z)^-\bigr)\gamma_D f_z\\
 &=-\widetilde{\jmath^{-1}}(\gamma_N^+f_z^++\gamma_N^-f_z^-)\\
 &=\Gamma_1 f_z.
 \end{split}
\end{equation*}
Hence the Weyl function is $M^D_{\rm free}(z)=-\widetilde{\jmath^{-1}}\Lambda_{1/2}(z)\jmath^{-1}$. In order to see that $M^D_{\rm free}(\cdot)$ is $\sS_1$-regular we
proceed in the same way as in the proof of Theorem~\ref{th:4.1}. Let $\gamma^D_{\rm free}(\cdot)$ be the $\gamma$-field corresponding to
the $B$-generalized boundary triple  $\Pi^D_{\rm free}$ and use
\begin{equation}\label{lplp44}
 M^D_{\rm free}(z)=M^D_{\rm free}(\xi)^*+(z-\bar\xi)\gamma_{\rm free}^D(\xi)^* \gamma_{\rm free}^D(z)
\end{equation}
(see \eqref{gutgut}) with some $\xi \in \rho(A_D) \cap \rho(A_{\rm free}) \cap (-\infty,\essinf V)$ and all $z\in\rho(A_D)$.
For $h=\{h^+,h^-\}\in L^2(\dR^n)$ we have
\begin{equation}\label{gammafreesp0}
\begin{split}
 \gamma_{\rm free}^D(\xi)^*h&=\Gamma_1(A_D-\bar \xi)^{-1}h\\
 &=-\widetilde{\jmath^{-1}}\bigl(\gamma_N^+(A_D^+-\bar \xi)^{-1}h^+ + \gamma_N^-(A_D^--\bar \xi)^{-1}h^-\bigr)
 \end{split}
\end{equation}
and since $\dom (A_D)\subset H^2(\Omega_+)\times H^2(\Omega_-)$ we conclude from \eqref{gammdgammn} that
\begin{equation*}
 \gamma_N^+(A_D^+-\bar \xi)^{-1}h^+ + \gamma_N^-(A_D^--\bar \xi)^{-1}h^-\in H^{1/2}(\cC).
\end{equation*}
 As in the proof of Theorem~\ref{th:4.1} it then follows from \eqref{hohopasst} that
\begin{equation}\label{gammafreespff}
\gamma_{\rm free}^D(\xi)^*\in\cS_1\bigl(L^2(\dR^n),L^2(\cC)\bigr)
\end{equation}
and $\gamma_{\rm free}^D(z)\in\cS_1(L^2(\cC),L^2(\dR^n))$ for all $z\in\rho(A_D)$. Hence \eqref{lplp44} yields that
$$K(z) := M^D_{\rm free}(z) - M^D_{\rm free}(\xi) \in \sS_1\bigl(L^2(\partial\Omega)\bigr),\quad z \in \dC_+,$$ 
where it was used that
$M^D_{\rm free}(\xi)^* = M^D_{\rm free}(\xi)$. Let us show that $M^D_{\rm free}(\xi)$ is boundedly invertible. For
$\xi \in \rho(A_D) \cap \rho(A_{\rm free}) \cap (-\infty,\essinf V)$ one checks that
the operators $\widetilde{\jmath^{-1}}\Lambda^\pm_{1/2}(\xi)\jmath^{-1}$ are non-negative and
the same considerations as in the end of the proof of Theorem \ref{th:4.1} show that these operators are surjective and boundedly invertible, 
and hence uniformly positive. This implies that also
$$\widetilde{\jmath^{-1}}\bigl(\Lambda^+_{1/2}(\xi) + \Lambda^-_{1/2}(\xi)\bigr)\jmath^{-1}$$ is uniformly positive.
Hence, $M^D_{\rm free}(\xi)$ is boundedly invertible which shows that $M^D_{\rm free}(\cdot)$
is $\sS_1$-regular.
Now the assertions follow directly from Theorem~\ref{th:3.1}. 
\end{proof}

\begin{remark}\label{remarkiii}
{\rm
As in Remarks \ref{remarki} and \ref{remarkii} it follows from \eqref{gammafreesp0} and \eqref{hohopasst} in the same way as in \eqref{gammafreespff} that
\begin{equation*}
 \gamma_{\rm free}^D(z)^*\in\cS_{n-1}\bigl(L^2(\dR^n),L^2(\cC)\bigr)
\end{equation*}
for $z\in\rho(A_D)$. This yields $\gamma_{\rm free}^D(z)\in\cS_{n-1}(L^2(\cC),L^2(\dR^n))$ for $z\in\rho(A_D)$
and hence  Krein's formula in Proposition~\ref{prop:2.4}~(iii) implies
\begin{equation*}
 (A_{\rm free}-z)^{-1}-(A_D-z)^{-1}=-\gamma_{\rm free}^D(z) M_{\rm free}^D(z)^{-1}\gamma^D_{\rm free}(\bar z)^*\in\cS_{\frac{n-1}{2}}(L^2(\dR^n));
\end{equation*}
for all $z\in\rho(A_{\rm free})\cap\rho(A_D)$; cf. \cite{Bir62,Gru84}. For further development with applications to the scattering theory
we also refer the reader to
\cite{DeiSim76} and \cite{RS79}.
}
\end{remark}

\begin{remark}
{\rm
In a similar way as in Remark~\ref{rem:4.2} there is a particularly convenient choice of the operator
$\jmath$ in \eqref{rig}--\eqref{jjj} in the present context. Namely, since for $z<\min\{\sigma(A_D^\pm),\sigma(A_N^\pm)\}$ the self-adjoint
operators
$$\sqrt{\Lambda_{1/2}^+(z)}\qquad\text{and}\qquad \sqrt{\Lambda_{1/2}^-(z)}$$
defined on $H^{1/2}(\cC)$ are non-negative and boundedly invertible in $L^2(\cC)$ it follows that
\begin{equation*}
 \jmath:=\sqrt{\Lambda_{1/2}^+(z)}+\sqrt{\Lambda_{1/2}^-(z)}: H^{1/2}(\cC)\longrightarrow L^2(\cC)
\end{equation*}
is a possible choice for the definition of the scalar product $\langle\cdot,\cdot\rangle$ in \eqref{green123}.
}
\end{remark}

\subsection{Scattering matrix for the free Schr\"{o}dinger operator and the Neumann realization with respect to $\cC$}\label{ncoup}
In this section we consider the pair $\{A_N,A_{\rm free}\}$ consisting of the orthogonal sum $A_N=A_N^+\oplus A_N^-$ of the Neumann operators in \eqref{eq:5.5} and the
free Schr\"{o}dinger operator in \eqref{aaa}. We first define the Neumann-to-Dirichlet maps
\begin{equation*}
\cN^\pm_{-1/2}(z):H^{-1/2}(\cC)\longrightarrow H^{1/2}(\cC),\qquad z\in\rho(A_N),
\end{equation*}
as extensions of the Neumann-to-Dirichlet maps on $L^2(\cC)$ defined in the beginning of Section~\ref{nrsec_N-Roben}.
More precisely, we recall that for $\phi^\pm \in H^{-1/2}(\cC)$ and $z \in \rho(A^\pm_N)$ the boundary value problem
\begin{equation}\label{eq:5.27}
-\Delta f^\pm + V_\pm f^\pm = zf^\pm, \quad \gamma^\pm_Nf^\pm = \phi^\pm,
\end{equation}
admits a unique solution $f^\pm_z \in H^1_\Delta(\Omega_\pm)$. The corresponding solution operator is denoted by
\begin{equation*}
\cP^\pm_N(z): H^{-1/2}(\cC) \longrightarrow H^1_\Delta(\cC)\subset L^2(\cC), \quad \phi^\pm \mapsto f^\pm_z.
\end{equation*}
Note that the restriction of $\cP^\pm_N(z)$ onto $L^2(\cC)$ coincides with the solution operator defined in \eqref{pp}.
For $z\in\rho(A_N^\pm)$ the {\it Neumann-to-Dirichlet map} is defined by
\begin{equation}\label{Neum-to-Dir_maps_for_inner-exter_dom}
\cN^\pm_{-1/2}(z): H^{-1/2}(\cC) \longrightarrow H^{1/2}(\cC), \quad \phi^\pm \mapsto \gamma^\pm_D\cP^\pm_N(z)\phi^\pm.
\end{equation}
Clearly, $\cN^\pm_{-1/2}(z)$ is an extension of the Neumann-to-Dirichlet map defined in \eqref{ndmap} onto $H^{-1/2}(\cC)$, the operators in \eqref{Neum-to-Dir_maps_for_inner-exter_dom}
map Neumann boundary values $\gamma^\pm_Nf^\pm_z$ of solutions $f^\pm_z \in H^1_\Delta(\Omega_\pm)$ of \eqref{eq:5.27} to the corresponding Dirichlet boundary values
$\gamma^\pm_Df^\pm_z \in H^{1/2}(\cC)$.

In the next theorem we obtain an expression for the scattering matrix of the pair $\{A_N,A_{\rm free}\}$ in terms of the sum
\begin{equation}\label{eq:5.28}
\cN_{-1/2}(z) := \cN^+_{-1/2}(z) + \cN^-_{-1/2}(z): H^{-1/2}(\cC) \longrightarrow H^{1/2}(\cC), \quad z \in \rho(A_N),
\end{equation}
of the Neumann-to-Dirichlet maps in \eqref{Neum-to-Dir_maps_for_inner-exter_dom}.
\begin{theorem}\label{nfthm}
Let $\Omega_\pm\subset\dR^2$  be as above, let $V\in L^\infty(\dR^2)$ be a real valued function,
and let $A_{\rm free}$ and $A_N$ be the self-adjoint Schr\"{o}dinger operators
in $L^2(\dR^2)$ in \eqref{aaa} and \eqref{eq:5.5}, respectively.  Moreover, let
$\cN_{-1/2}(\cdot)$  be given by \eqref{eq:5.28} and let
\begin{equation*}
M^N_{\rm free}(z) := \jmath\, \cN_{-1/2}(z)\,\widetilde\jmath, \quad z\in \dC_+,
    \end{equation*}
where $\jmath: H^{1/2}(\cC) \longrightarrow L^2(\cC)$ denotes some uniformly positive self-adjoint operator
in $L^2(\cC)$ with $\dom(\jmath)=H^{1/2}(\cC)$ as in \eqref{rig}--\eqref{jjj}.

Then $\{A_N,A_{\rm free}\}$ is a complete scattering system. If the symmetric operator $S := A_N \cap A_{\rm free}$ has no eigenvalues then
\begin{equation*}
L^2(\dR,d\lambda,\cH_\lambda),\qquad \cH_\lambda:=\overline{\ran\bigl(\Im M^N_{\rm free}(\lambda+i0)\bigr)},
\end{equation*}
forms a spectral representation of $A^{ac}_N$ such that for a.e. $\lambda\in\dR$ the scattering matrix
$\{S(A_N,A_{\rm free};\lambda)\}_{\lambda\in\dR}$ of the scattering system $\{A_N, A_{\rm free}\}$
admits the representation
\begin{equation*}
S(A_N,A_{\rm free};\lambda) = I_{\cH_\lambda}-2i\sqrt{\Im M^N_{\rm free}(\lambda+i0)} M^N_{\rm free}(\lambda+i0)^{-1} \sqrt{\Im M^N_{\rm free}(\lambda+i0)}.
 \end{equation*}
\end{theorem}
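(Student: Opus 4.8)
The plan is to mirror the proofs of Theorem~\ref{Th_Scat_Mat_for_Dir-Free} and Theorem~\ref{nrthm}: I would exhibit a $B$-generalized boundary triple for $S^*$ whose Weyl function equals $M^N_{\rm free}(\cdot)$ and is $\sS_1$-regular, and then invoke Theorem~\ref{th:3.1}. First I would identify the underlying symmetric operator. The operator $S=A_N\cap A_{\rm free}$ acts as $\mathscr{L}$ on $\dom(S)=\{f\in H^2(\dR^2):\gamma_N^+f^+=\gamma_N^-f^-=0\}$; it is densely defined, and since it contains the orthogonal sum $S^+\oplus S^-$ of the minimal operators on $\Omega_\pm$, a short computation with Green's identity \eqref{greenmax} (pairing $S^*f$ against $Sh$ and using $\gamma_N^\pm h^\pm=0$, $\gamma_D^+h^+=\gamma_D^-h^-$) shows that the adjoint is the restriction of $\mathscr{L}$ to $\dom(S^*)=\{f\in H^0_\Delta(\dR^2\setminus\cC):\gamma_N^+f^++\gamma_N^-f^-=0\}$, i.e.\ $S^*$ carries the Neumann (flux) continuity interface condition. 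The no-eigenvalue hypothesis together with \cite[Corollary 4.4]{BR15} yields that $S$ is simple, exactly as in the Dirichlet case.

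For the boundary triple I would take $T$ to be the restriction of $S^*$ to $\dom(T)=\{f\in H^1_\Delta(\dR^2\setminus\cC):\gamma_N^+f^++\gamma_N^-f^-=0\}$ and set, for $f=\{f^+,f^-\}\in\dom(T)$,
\[
\Gamma_0 f:=\widetilde{\jmath^{-1}}\,\gamma_N^+f^+\qquad\text{and}\qquad\Gamma_1 f:=\jmath\bigl(\gamma_D^+f^+-\gamma_D^-f^-\bigr),
\]
so that $\Gamma_0$ encodes the common Neumann datum $\gamma_N^+f^+=-\gamma_N^-f^-$ and $\Gamma_1$ the Dirichlet jump. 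Green's identity \eqref{Green_formula_for_BGBtrip} follows from \eqref{greenmax} and the pairing \eqref{green123} after substituting $\gamma_N^-=-\gamma_N^+$; surjectivity $\ran(\Gamma_0)=L^2(\cC)$ follows from the surjectivity of $\gamma_N^\pm:H^1_\Delta(\Omega_\pm)\to H^{-1/2}(\cC)$ in \eqref{gammancont} (solve the two Neumann problems with data $\phi$ and $-\phi$); and $\overline T=S^*$ holds since $\dom(A_N)+\dom(A_{\rm free})\subset\dom(T)$ is dense in $\dom(S^*)$ by Proposition~\ref{prop:2.5}. The kernels $\ker(\Gamma_0)$ and $\ker(\Gamma_1)$ are symmetric and contain $A_N$ and $A_{\rm free}$, hence coincide with them. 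Finally, for $f_z\in\ker(T-z)$ with Neumann datum $\phi=\gamma_N^+f_z^+=\widetilde\jmath\,\Gamma_0 f_z$, the relations \eqref{eq:5.27}--\eqref{Neum-to-Dir_maps_for_inner-exter_dom} give $\gamma_D^+f_z^+-\gamma_D^-f_z^-=\bigl(\cN^+_{-1/2}(z)+\cN^-_{-1/2}(z)\bigr)\phi=\cN_{-1/2}(z)\phi$, so $\Gamma_1 f_z=\jmath\,\cN_{-1/2}(z)\,\widetilde\jmath\,\Gamma_0 f_z$ and the Weyl function is $M^N_{\rm free}(z)=\jmath\,\cN_{-1/2}(z)\,\widetilde\jmath$ as claimed.

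The technical heart, and where I expect the main work, is the $\sS_1$-regularity of $M^N_{\rm free}(\cdot)$. Fixing $\xi\in\rho(A_N)\cap\rho(A_{\rm free})\cap(-\infty,\essinf V)$ and using \eqref{gutgut} (with $\xi$ real) I would write $M^N_{\rm free}(z)-M^N_{\rm free}(\xi)=(z-\bar\xi)\gamma_{\rm free}^N(\xi)^*\gamma_{\rm free}^N(z)$, where $\gamma_{\rm free}^N(\cdot)$ is the $\gamma$-field of $\Pi^N_{\rm free}$. By \eqref{gstar},
\[
\gamma_{\rm free}^N(\xi)^*h=\Gamma_1(A_N-\bar\xi)^{-1}h=\jmath\bigl(\gamma_D^+(A_N^+-\bar\xi)^{-1}h^+-\gamma_D^-(A_N^--\bar\xi)^{-1}h^-\bigr),
\]
and since $\dom(A_N)\subset H^2(\Omega_+)\times H^2(\Omega_-)$ the Dirichlet traces land in $H^{3/2}(\cC)$ by \eqref{gammdgammn}. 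Thus $\gamma_{\rm free}^N(\xi)^*$ factors as a bounded map into $H^{3/2}(\cC)$, followed by the embedding $H^{3/2}(\cC)\hookrightarrow H^{1/2}(\cC)$ and the isomorphism $\jmath$; Lemma~\ref{hurra} (equivalently \eqref{hohopasst}) gives $\gamma_{\rm free}^N(\xi)^*\in\cS_{n-1}(L^2(\dR^2),L^2(\cC))$, i.e.\ $\cS_1$ for $n=2$, and \eqref{gform1} propagates this to $\gamma_{\rm free}^N(z)\in\cS_1$. Hence, by \eqref{idealipp}, the product lies in $\cS_{1/2}(L^2(\cC))\subset\sS_1(L^2(\cC))$, so $K(z):=M^N_{\rm free}(z)-M^N_{\rm free}(\xi)$ is an $\sS_1$-valued strict Nevanlinna function. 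For the constant $C:=M^N_{\rm free}(\xi)=\sum_\pm\jmath\,\cN^\pm_{-1/2}(\xi)\,\widetilde\jmath$, each summand is the inverse of the uniformly positive operator $\widetilde{\jmath^{-1}}\Lambda^\pm_{1/2}(\xi)\jmath^{-1}$ from the proof of Theorem~\ref{Th_Scat_Mat_for_Dir-Free}, hence uniformly positive in $L^2(\cC)$; therefore $C$ is uniformly positive and boundedly invertible, and $M^N_{\rm free}(\cdot)$ is $\sS_1$-regular.

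With $\sS_1$-regularity in hand, Proposition~\ref{prop:2.4}~(iii) shows the resolvent difference of $A_N$ and $A_{\rm free}$ is trace class, so $\{A_N,A_{\rm free}\}$ is a complete scattering system irrespective of simplicity; and under the no-eigenvalue (hence simplicity) assumption, Theorem~\ref{th:3.1} applied to $\Pi^N_{\rm free}$ yields the spectral representation $L^2(\dR,d\lambda,\cH_\lambda)$ of $A_N^{ac}$ and the asserted form of the scattering matrix. The only genuinely delicate point, beyond careful bookkeeping of the interface traces in identifying $S^*$, is combining the Schatten estimate for $\gamma_{\rm free}^N(\xi)^*$ with the bounded invertibility of the constant term; the restriction to $\dR^2$ enters precisely in guaranteeing $\cS_{1/2}\subset\sS_1$.
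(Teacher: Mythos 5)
Your proposal is correct and follows essentially the same route as the paper's proof: the same boundary triple $\Pi^N_{\rm free}=\{L^2(\cC),\Gamma_0,\Gamma_1\}$ on $H^1_\Delta(\dR^2\setminus\cC)$ with a Neumann-type interface condition, the same identification of the Weyl function with $\jmath\,\cN_{-1/2}(\cdot)\,\widetilde\jmath$, the same $\sS_1$-regularity argument via the $H^{3/2}(\cC)$-smoothing of $\Gamma_1(A_N-\bar\xi)^{-1}$ combined with Lemma~\ref{hurra}, and the final appeal to Theorem~\ref{th:3.1}. The one point where you deviate is to your credit: your interface condition $\gamma_N^+f^++\gamma_N^-f^-=0$ in $\dom(T)$ and $\dom(S^*)$ is the version consistent with the paper's outer-normal convention (under which $\dom(A_{\rm free})=H^2(\dR^2)$ satisfies the sum condition, not the condition $\gamma_N^+f^+=\gamma_N^-f^-$ as written in the paper's proof, and under which the Weyl function indeed comes out as $\cN^+_{-1/2}+\cN^-_{-1/2}$ rather than a difference), and your explicit verification that $M^N_{\rm free}(\xi)$ is uniformly positive, as a sum of inverses of the uniformly positive operators $\widetilde{\jmath^{-1}}\Lambda^\pm_{1/2}(\xi)\jmath^{-1}$, spells out the invertibility step that the paper only cites from Theorem~\ref{Th_Scat_Mat_for_Dir-Free}.
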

\begin{proof}
The proof of Theorem~\ref{nfthm} is very similar to the proof of Theorem~\ref{Th_Scat_Mat_for_Dir-Free}, and hence we present a sketch only.
Consider the closed symmetric operator
$S=A_N\cap A_{\rm free}$ in $L^2(\dR^2)$ which is given by
\begin{equation*}
 \begin{split}
  Sf&=\mathscr{L}f,\\
  \dom (S)&=\bigl\{f=\{f^+,f^-\}\in H^2(\dR^2):\gamma_N^+f^+=\gamma_N^-f^-=0\bigr\}.
 \end{split}
\end{equation*}
It follows that $S$ is densely defined, the assumption $\sigma_p(S)=\emptyset$ and same arguments as in \cite[Proof of Lemma 4.3]{BR15}
ensure that $S$ is simple, and a similar consideration as in the proof of Theorem~\ref{Th_Scat_Mat_for_Dir-Free} shows that
the adjoint $S^*$ is given by
\begin{equation*}
 \begin{split}
  S^*f&=\mathscr{L}f,\\
  \dom (S^*)&=\bigl\{f=\{f^+,f^-\}\in H^0_\Delta(\dR^2\setminus\cC):\gamma_N^+f^+=\gamma_N^-f^-\bigr\}.
 \end{split}
\end{equation*}
Next we consider the operator $T$ defined as a restriction of $S^*$ by
 \begin{equation*}
 \begin{split}
  Tf&=\mathscr{L} f,\\
  \dom (T)&=\bigl\{f=\{f^+,f^-\}\in H^1_{\Delta}(\dR^2\setminus\cC): \gamma^+_Nf^+ = \gamma^-_Nf^-\bigr\},
 \end{split}
 \end{equation*}
and one verifies in the same way as in the proof of Theorem~\ref{Th_Scat_Mat_for_Dir-Free} that $\Pi^N_{\rm free} = \{L^2(\cC),\Gamma_0,\Gamma_1\}$, where
  \begin{equation*}
\Gamma_0 f := \widetilde{\jmath^{-1}}\,\gamma_N^+ f^+\quad  \text{and}\quad
\Gamma_1 f := \jmath \left(\gamma^+_D f^+ - \gamma^-_D f^-\right),\quad f\in\dom (T),
  \end{equation*}
is a $B$-generalized boundary triple with the Weyl function $M_{\rm free}^N(\cdot)$ given by \eqref{Weyl-func_for_Dir_in_whole_space} such that
\begin{equation*}
A_N=T\upharpoonright\ker(\Gamma_0)\quad\text{and}\quad A_{\rm free} =T\upharpoonright\ker(\Gamma_1).
\end{equation*}
Let us show that the Weyl function $M_{\rm free}^N(\cdot)$ is $\sS_1$-regular.
Denote the $\gamma$-field corresponding to
the $B$-generalized boundary triple  $\Pi^N_{\rm free}$ by $\gamma_{\rm free}^N(\cdot)$ and use
\begin{equation}\label{lplp442}
 M^N_{\rm free}(z)=M^N_{\rm free}(\xi)^*+(z-\bar\xi)\gamma_{\rm free}^N(\xi)^* \gamma_{\rm free}^N(z)
\end{equation}
with some fixed $\xi\in\rho(A_N)\cap\rho(A_{\rm free})\cap (-\infty,\essinf V)$ and all $z\in\rho(A_N)$.
From \eqref{gammdgammn} and $\dom (A_N)\subset H^2(\Omega_+)\times H^2(\Omega_-)$ we conclude
for $h=\{h^+,h^-\}\in L^2(\dR^n)$ that
\begin{equation}\label{gammafreesp00}
\begin{split}
 \jmath^{-1}\gamma_{\rm free}^N(\xi)^*h&=\jmath^{-1}\Gamma_1(A_N-\bar \xi)^{-1}h\\
 &=\gamma_D^+(A_N^+-\bar \xi)^{-1}h^+ - \gamma_D^-(A_N^--\bar \xi)^{-1}h^-\in H^{3/2}(\cC).
 \end{split}
\end{equation}
Since $\jmath^{-1}\gamma_{\rm free}^N(\xi)^*\in\cB(L^2(\dR^2), H^{1/2}(\cC))$
Lemma~\ref{hurra} yields
\begin{equation*}
 \jmath^{-1}\gamma_{\rm free}^N(\xi)^*\in\cS_1\bigl(L^2(\dR^2), H^{1/2}(\cC)\bigr)
\end{equation*}
and hence
\begin{equation}\label{gammafreespqa}
\gamma_{\rm free}^N(\xi)^*\in\cS_1\bigl(L^2(\dR^2),L^2(\cC)\bigr).
\end{equation}
Therefore $\gamma_{\rm free}^N(z)\in\cS_1(L^2(\cC),L^2(\dR^2))$ for all $z\in\rho(A_N)$. Now it follows from \eqref{lplp442} that
$$K(z) := M^N_{\rm free}(z) - M^N_{\rm free}(\xi) \in \sS_1\bigl(L^2(\cC)\bigr),\quad z\in\dC_+,$$
where we have used that $M^N_{\rm free}(\xi) = M^N_{\rm free}(\xi)^*$.
It remains to show that $M^N_{\rm free}(\xi)$ is invertible, which follows from the same reasoning as in the end of the proof of Theorem \ref{Th_Scat_Mat_for_Dir-Free}.
Hence $M^N_{\rm free}(\cdot)$  is $\sS_1$-regular
and the assertions of Theorem~\ref{nfthm} follow directly from Theorem~\ref{th:3.1}. 
\end{proof}

\begin{remark}\label{remarkiv}
{\rm
As in Remark \ref{remarkiii} the considerations in \eqref{gammafreesp00} and \eqref{gammafreespqa} together with Lemma~\ref{hurra} show
\begin{equation*}
 \gamma_{\rm free}^N(z)^*\in\cS_{n-1}\bigl(L^2(\dR^n),L^2(\cC)\bigr),\quad\gamma_{\rm free}^N(z)\in\cS_{n-1}\bigl(L^2(\cC),L^2(\dR^n)\bigr)
 \end{equation*}
for all $z\in\rho(A_N)$. Hence
\begin{equation*}
 (A_{\rm free}-z)^{-1}-(A_N-z)^{-1}=-\gamma_{\rm free}^N(z) M_{\rm free}^N(z)^{-1}\gamma^N_{\rm free}(\bar z)^*\in\cS_{\frac{n-1}{2}}(L^2(\dR^n));
\end{equation*}
for all $z\in\rho(A_{\rm free})\cap\rho(A_N)$; cf. \cite{Gru84}.
}
\end{remark}

\subsection{Schr\"{o}dinger operators with $\delta$-potentials supported on hypersurfaces}\label{deltasec3}

In this third and last application on scattering matrices for coupled Schr\"{o}dinger operators we consider the pair $\{A_{\rm free}, A_{\delta,\alpha}\}$,
where  $\alpha\in L^\infty(\cC)$ is a real valued function and  $A_{\delta,\alpha}$
is a Schr\"{o}dinger operator with $\delta$-potential of strength $\alpha$
supported on the hypersurface $\cC$ defined by
\begin{equation}\label{adelta}
\begin{split}
 A_{\delta,\alpha} f&=-\Delta f +Vf,\\
 \dom (A_{\delta,\alpha})&=\left\{f= \begin{pmatrix}f^+\\f^-\end{pmatrix} \in H^{3/2}_\Delta(\dR^n\setminus\cC):
 \begin{matrix}\gamma_D^+f^+=\gamma_D^-f^-,\qquad\quad\!\\
\alpha\gamma_D^\pm f^\pm=\gamma_N^+ f^++\gamma_N^-f^-\end{matrix}\right\}.
\end{split}
 \end{equation}
Such type of Schr\"{o}dinger operators with singular interactions have attracted a lot of attention in the past; cf. \cite{E08} for a survey and e.g.
\cite{BLL13} for further references and an approach via boundary mappings closely related to the present considerations.
According to \cite[Theorem 3.5, Proposition 3.7, and Theorem 3.16]{BLL13} the operator  $A_{\delta,\alpha}$ in \eqref{adelta} is  self-adjoint in $L^2(\dR^n)$, semibounded
from below and coincides with the
self-adjoint operator associated to the closed  quadratic form
\begin{equation*}
 \mathfrak a_{\delta,\alpha}[f,g]=(\nabla f,\nabla g)+(Vf,g)-(\alpha\gamma_D^\pm f,\gamma_D^\pm g)_{L^2(\cC)},\quad f,g\in H^1(\dR^n).
\end{equation*}

We define the Dirichlet-to-Neumann maps
\begin{equation*}
\Lambda^\pm_1(z):H^1(\cC)\longrightarrow L^2(\cC),\qquad z\in\rho(A_D^\pm),
\end{equation*}
as restrictions of the Dirichlet-to-Neumann maps on $H^{1/2}(\cC)$ in \eqref{Dir-to-Neuman_map}; cf. Remark~\ref{rem:4.2}.
More precisely,
for $\phi^\pm \in H^1(\cC)$ and $z \in \rho(A^\pm_D)$ the boundary value problem
\begin{equation*}
-\Delta f^\pm + V_\pm f^\pm = zf^\pm, \quad \gamma^\pm_D f^\pm = \phi^\pm,
\end{equation*}
admits a unique solution $f^\pm_z \in H^{3/2}_\Delta(\Omega_\pm)$. The corresponding solution operators are denoted by
\begin{equation*}
\cP^\pm_D(z): H^1(\cC) \longrightarrow H^{3/2}_\Delta(\cC)\subset L^2(\cC), \quad \phi^\pm \mapsto f^\pm_z,
\end{equation*}
and it is clear that the restriction of $P^\pm_D(z)$ in \eqref{sol_oper_for_Dir} onto $H^1(\cC)$ coincides with $\cP_D^\pm(z)$.
For $z\in\rho(A_D^\pm)$ the Dirichlet-to-Neumann maps $\Lambda_1^\pm(\cdot)$ on $H^1(\cC)$ are given by
\begin{equation}\label{dneinser}
\Lambda^\pm_1(z): H^1(\cC) \longrightarrow L^2(\cC), \quad \phi^\pm \mapsto \gamma^\pm_N\cP^\pm_D(z)\phi^\pm,
\end{equation}
and by construction  $\Lambda^\pm_1(z)$ are the restrictions of the Dirichlet-to-Neumann maps $\Lambda^\pm_{1/2}(z)$
in \eqref{Dir-to-Neuman_map} onto $H^1(\cC)$.

In the next theorem we obtain an expression for the scattering matrix of the pair $\{A_{\rm free},A_{\delta,\alpha}\}$ in terms of the sum
\begin{equation}\label{eq:5.41}
\Lambda_1(z) := \Lambda^+_1(z) + \Lambda^-_1(z): H^1(\cC) \longrightarrow L^2(\cC), \quad z \in \rho(A_D),
\end{equation}
of the Dirichlet-to-Neumann maps in \eqref{dneinser}. Theorem~\ref{dddthm} and its proof can be viewed as a variant of Theorem~\ref{nrthm}; in the same way as in
Theorem~\ref{nrthm} it is assumed that $\alpha^{-1}\in L^\infty(\cC)$.
\begin{theorem}\label{dddthm}
Let $\Omega_\pm\subset \dR^n$, $n=2,3$, be as above, let $V\in
L^\infty(\dR^n)$ and $\alpha  \in L^\infty(\cC)$ be real valued functions such that $\alpha^{-1}\in L^\infty(\cC)$, and let
$A_{\rm free}$ and $A_{\delta,\alpha}$ be the self-adjoint realizations of the Schr\"odinger expression given by \eqref{aaa} and \eqref{adelta}, respectively.
Moreover, let $\Lambda_1(\cdot)$ be as in \eqref{eq:5.41}.

Then $\{A_{\rm free},A_{\delta,\alpha}\}$ is a complete scattering system.  If the symmetric operator $S := A_{\rm free}\cap A_{\delta,\alpha}$ has no eigenvalues then
\begin{equation*}
 L^2(\dR,d\lambda,\cH_\lambda),\qquad \cH_\lambda:=\overline{\ran (\Im (\Lambda_1(\lambda+i0))^{-1})},
\end{equation*}
forms a spectral representation of $A_{\rm free}^{ac}$ such that for a.e. $\lambda \in \dR$
the scattering matrix $\{S(A_{\rm free},A_{\delta,\alpha};\lambda)\}_{\lambda\in\dR}$ of the scattering system $\{A_{\rm free},A_{\delta,\alpha}\}$ admits
the representation
\begin{equation*}
\begin{split}
&S(A_{\rm free},A_{\delta,\alpha};\lambda)\\
&\quad=I_{\cH_\lambda}+2i\sqrt{\Im
\Lambda_1(\lambda + i0)^{-1}}\,\bigl(I-\alpha\Lambda_1(\lambda+i0)^{-1}\bigr)^{-1}\alpha  \sqrt{\Im\Lambda_1(\lambda+i0)^{-1}}.
\end{split}
\end{equation*}
\end{theorem}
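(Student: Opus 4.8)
The plan is to follow the proof of Theorem~\ref{nrthm} almost verbatim, with the single Neumann-to-Dirichlet map $\cN(\cdot)$ replaced by the inverse $\Lambda_1(\cdot)^{-1}$ of the sum of Dirichlet-to-Neumann maps in \eqref{eq:5.41}. First I would identify the underlying symmetric operator $S=A_{\rm free}\cap A_{\delta,\alpha}$. Since $\alpha^{-1}\in L^\infty(\cC)$, the interface conditions in \eqref{adelta} together with $f\in H^2(\dR^n)$ force $\gamma_D^+f^+=\gamma_D^-f^-=0$, so that $S$ coincides with the symmetric operator $S=A_D\cap A_{\rm free}$ already analysed in the proof of Theorem~\ref{Th_Scat_Mat_for_Dir-Free}. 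In particular $S$ is closed and densely defined, under the no-eigenvalue assumption it is simple by \cite[Corollary 4.4]{BR15}, and by the same computation as in Theorem~\ref{Th_Scat_Mat_for_Dir-Free} its adjoint acts as $S^*f=\mathscr{L}f$ on $\dom(S^*)=\{f=\{f^+,f^-\}\in H^0_\Delta(\dR^n\setminus\cC):\gamma_D^+f^+=\gamma_D^-f^-\}$.

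Next I would introduce $T$ as the restriction of $S^*$ to $\dom(T)=\{f\in H^{3/2}_\Delta(\dR^n\setminus\cC):\gamma_D^+f^+=\gamma_D^-f^-\}$, write $\gamma_Df:=\gamma_D^+f^+=\gamma_D^-f^-$, and set $\Gamma_0 f:=\gamma_N^+f^++\gamma_N^-f^-$ and $\Gamma_1 f:=\gamma_D f-\frac{1}{\alpha}(\gamma_N^+f^++\gamma_N^-f^-)$. Green's identity \eqref{green32} on $\Omega_+$ and $\Omega_-$ together with the reality of $\alpha$ gives the abstract Green identity \eqref{Green_formula_for_BGBtrip} (the $\frac{1}{\alpha}$-terms cancel), surjectivity of $\gamma_N^\pm$ in \eqref{gammancont} yields $\ran(\Gamma_0)=L^2(\cC)$, and since $\dom(T)\supset\dom(A_{\rm free})+\dom(A_{\delta,\alpha})$ is dense in $\dom(S^*)$ by Proposition~\ref{prop:2.5} we get $\overline T=S^*$. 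Symmetry of $T\upharpoonright\ker(\Gamma_0)$ and $T\upharpoonright\ker(\Gamma_1)$ together with the inclusions $A_{\rm free}\subset T\upharpoonright\ker(\Gamma_0)$, $A_{\delta,\alpha}\subset T\upharpoonright\ker(\Gamma_1)$ then forces equality in \eqref{aaa}, \eqref{adelta}. Using \eqref{dnpl} and \eqref{eq:5.41}, for $f_z\in\ker(T-z)$ one has $\Gamma_0 f_z=(\Lambda_1^+(z)+\Lambda_1^-(z))\gamma_D f_z=\Lambda_1(z)\gamma_D f_z$, whence $\gamma_D f_z=\Lambda_1(z)^{-1}\Gamma_0 f_z$ and therefore the Weyl function is $M(z)=\Lambda_1(z)^{-1}-\frac{1}{\alpha}$, $z\in\rho(A_{\rm free})$.

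The main work, exactly as in Theorem~\ref{nrthm}, is to verify that $M(\cdot)$ is $\sS_1$-regular. I would fix a real point $\xi\in\rho(A_{\rm free})\cap\rho(A_{\delta,\alpha})\cap\dR$ below the spectra of both operators (possible since both are semibounded). For $h\in L^2(\dR^n)$ the element $(A_{\rm free}-\bar\xi)^{-1}h$ lies in $H^2(\dR^n)$, so its Neumann jump vanishes and \eqref{gstar} gives $\gamma(\xi)^*h=\Gamma_1(A_{\rm free}-\bar\xi)^{-1}h=\gamma_D(A_{\rm free}-\bar\xi)^{-1}h\in H^{3/2}(\cC)$ by \eqref{gammdgammn}. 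Lemma~\ref{hurra} then yields $\gamma(\xi)^*\in\cS_{2(n-1)/3}(L^2(\dR^n),L^2(\cC))$, hence $\gamma(z)\in\cS_{2(n-1)/3}(L^2(\cC),L^2(\dR^n))$ via \eqref{gform1}, and since $M(\xi)=M(\xi)^*$, the increment $K(z):=M(z)-M(\xi)=(z-\bar\xi)\gamma(\xi)^*\gamma(z)$ from \eqref{gutgut} lies in $\cS_{(n-1)/3}\subset\sS_1$ for $n=2,3$ by \eqref{idealipp}; it is a strict Nevanlinna function because $M(\cdot)$ is. To see that $C:=M(\xi)=\Lambda_1(\xi)^{-1}-\frac{1}{\alpha}$ is boundedly invertible I would argue as at the end of the proof of Theorem~\ref{nrthm}: $\Lambda_1(\xi)^{-1}$ is compact in $L^2(\cC)$ and $\frac{1}{\alpha}$ is boundedly invertible, so $C$ is Fredholm, and $\ker C=\{0\}$ because a nontrivial kernel element would produce $f_\xi\in\ker(A_{\delta,\alpha}-\xi)$, contradicting $\xi\in\rho(A_{\delta,\alpha})$.

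Finally I would apply Theorem~\ref{th:3.1}. Since $\frac{1}{\alpha}$ is real one has $\Im M(z)=\Im\Lambda_1(z)^{-1}$, and the algebraic identity $M(z)^{-1}=-(I-\alpha\Lambda_1(z)^{-1})^{-1}\alpha$ turns the abstract representation $S(A_{\rm free},A_{\delta,\alpha};\lambda)=I_{\cH_\lambda}-2i\sqrt{\Im M}\,M^{-1}\sqrt{\Im M}$ into the asserted formula; the boundary values $\Lambda_1(\lambda+i0)^{-1}$, $\Im\Lambda_1(\lambda+i0)^{-1}$, and $(I-\alpha\Lambda_1(\lambda+i0)^{-1})^{-1}$ exist for a.e.\ $\lambda$ by Lemma~\ref{lemmilein} applied to $M(\cdot)$ and $-M(\cdot)^{-1}$, and completeness of the scattering system is already contained in Theorem~\ref{th:3.1}. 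The only genuinely delicate point is the $\sS_1$-regularity, that is, the Schatten-class bookkeeping keeping $K(z)$ in the trace class; everything else is a routine transcription of Theorems~\ref{nrthm} and \ref{Th_Scat_Mat_for_Dir-Free}.
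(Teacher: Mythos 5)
Your proposal follows the paper's route step for step (same triple, same Weyl function $\Lambda_1(z)^{-1}-\tfrac{1}{\alpha}$, same $\cS_{2(n-1)/3}$ bookkeeping via Lemma~\ref{hurra}, same Fredholm argument for the invertibility of $M(\xi)$, same final algebra), but it contains one genuine gap: the claim that ``surjectivity of $\gamma_N^\pm$ in \eqref{gammancont} yields $\ran(\Gamma_0)=L^2(\cC)$.'' This inference is exactly what breaks down when one passes from Theorem~\ref{nrthm} to the coupled situation. In Theorem~\ref{nrthm} the map $\Gamma_0=\gamma_N$ is defined on all of $H^{3/2}_\Delta(\Omega)$, so \eqref{gammancont} applies directly; here, however, $\Gamma_0 f=\gamma_N^+f^++\gamma_N^-f^-$ is only defined on the subspace of $H^{3/2}_\Delta(\dR^n\setminus\cC)$ constrained by $\gamma_D^+f^+=\gamma_D^-f^-$, and the constraint is fatal to the naive argument. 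Indeed, if you try to decouple by taking $f^-=0$ and choosing $f^+$ with $\gamma_N^+f^+=\psi$, the interface condition forces $\gamma_D^+f^+=0$, and the set $\{f^+\in H^{3/2}_\Delta(\Omega_+):\gamma_D^+f^+=0\}$ coincides with $\dom(A_D^+)\subset H^2(\Omega_+)$, whose Neumann traces fill only $H^{1/2}(\cC)$, a proper dense subspace of $L^2(\cC)$. More precisely, decomposing an arbitrary $f\in\dom(T)$ as $f=f_D+f_0$ with $f_D\in\dom(A_D)$ and $f_0\in\ker(T-\lambda_0)$, $\lambda_0\in\rho(A_D)\cap\dR$, gives $\ran(\Gamma_0)=H^{1/2}(\cC)+\ran(\Lambda_1(\lambda_0))$, so the surjectivity of $\Gamma_0$ hinges on the surjectivity of $\Lambda_1(\lambda_0)=\Lambda_1^+(\lambda_0)+\Lambda_1^-(\lambda_0):H^1(\cC)\to L^2(\cC)$, which is not a consequence of \eqref{gammancont}.

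The paper closes this gap with a separate argument: for $\lambda_0<\min\{\sigma(A_D),\sigma(A_N)\}$ each $\Lambda_1^\pm(\lambda_0)=(\cN^\pm(\lambda_0))^{-1}$ is a uniformly positive self-adjoint operator in $L^2(\cC)$, hence so is the sum $\Lambda_1(\lambda_0)$; then, given $\psi\in L^2(\cC)$, one sets $\varphi:=\Lambda_1(\lambda_0)^{-1}\psi\in H^1(\cC)$, picks $f_{\lambda_0}\in\ker(T-\lambda_0)$ with $\gamma_D f_{\lambda_0}=\varphi$ (possible since $\gamma_D$ maps $\ker(T-\lambda_0)$ onto $H^1(\cC)$), and obtains $\Gamma_0 f_{\lambda_0}=\Lambda_1(\lambda_0)\varphi=\psi$. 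You should incorporate this (or an equivalent solvability statement for the Neumann-jump problem) into your proof. A second, minor omission: your formula $\gamma_D f_z=\Lambda_1(z)^{-1}\Gamma_0 f_z$ presupposes $\ker(\Lambda_1(z))=\{0\}$, which needs a one-line justification — if $\Lambda_1(z)\varphi=0$, the function built from $\cP_D^\pm(z)\varphi$ lies in $\ker(T-z)\cap\ker(\Gamma_0)=\ker(A_{\rm free}-z)=\{0\}$ for $z\in\rho(A_{\rm free})$, whence $\varphi=0$. With these two points repaired, your argument coincides with the paper's proof.
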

\begin{proof}
Note first that the assumptions $\alpha^{-1}\in L^\infty(\cC)$ implies that the closed symmetric operator $S=A_{\rm free}\cap A_{\delta,\alpha}$ is given by
\begin{equation*}
 \begin{split}
  Sf&=\mathscr{L}f,\\
  \dom (S)&=\bigl\{f=\{f^+,f^-\}\in H^2(\dR^n):\gamma_D^+f^+=\gamma_D^-f^-=0\bigr\}
 \end{split}
\end{equation*}
and hence coincides with the symmetric operator $A_D\cap A_{\rm free}$ in \eqref{ssi} (in the case $n=2$). It follows from \cite[Corollary 4.4]{BR15} that
the operator $S$ is simple and as in the proof of Theorem~\ref{Th_Scat_Mat_for_Dir-Free} one verifies that
its adjoint $S^*$ is given by
\begin{equation*}
 \begin{split}
  S^*f&=\mathscr{L}f,\\
  \dom (S^*)&=\bigl\{f=\{f^+,f^-\}\in H^0_\Delta(\dR^n\setminus\cC):\gamma_D^+f^+=\gamma_D^-f^-\bigr\}.
 \end{split}
\end{equation*}

Next we define the operator $T$ by
\begin{equation}\label{tli}
 \begin{split}
  Tf&=\mathscr{L}f,\\
  \dom (T)&=\bigl\{f=\{f^+,f^-\}\in H^{3/2}_\Delta(\dR^n\setminus\cC):\gamma_D^+f^+=\gamma_D^-f^-\bigr\}
 \end{split}
\end{equation}
and for $f=\{f^+,f^-\}\in\dom (T)$ we write $\gamma_Df := \gamma^+_Df^+ = \gamma^-_Df^-$ as in \eqref{eq:5.18}.
We will show that $\Pi^{\rm free}_{\delta,\alpha}   = \{L^2(\cC),\Gamma_0,\Gamma_1\}$, where
\begin{equation*}
 \Gamma_0 f = \gamma_N^+f^++\gamma_N^-f^-,  \qquad f\in \dom (T),
\end{equation*}
 and
\begin{equation*}
 \Gamma_1 f=\gamma_D f-\frac{1}{\alpha}\bigl(\gamma_N^+f^++\gamma_N^-f^-\bigr),\qquad f \in\dom (T),
\end{equation*}
is a $B$-generalized boundary triple such that
\begin{equation}\label{condikk}
 A_{\rm free}=T\upharpoonright\ker(\Gamma_0)\quad\text{and}\quad A_{\delta,\alpha}=T\upharpoonright\ker(\Gamma_1),
\end{equation}
and the corresponding Weyl function
\begin{equation}\label{e}
M^{\rm free}_{\delta,\alpha}(z) :=  \Lambda_1(z)^{-1} - \frac{1}{\alpha}, \qquad z \in \dC_+,
\end{equation}
is $\sS_1$-regular.

In fact, for $f=\{f^+,f^-\}$, $g=\{g^+,g^-\}\in\dom (T)$ we compute with the help of Green's identity \eqref{green32} and the interface conditions
$\gamma_D^+ f^+=\gamma_D^-f^-$ and $\gamma_D^+ g^+=\gamma_D^-g^-$
that
\begin{equation*}
 \begin{split}
 (\Gamma_1 f,\Gamma_0 &g)-(\Gamma_0 f,\Gamma_1 g)\\
 &=\bigl(\gamma_D f- \alpha^{-1}(\gamma_N^+f^++\gamma_N^-f^-), \gamma_N^+g^++\gamma_N^-g^-\bigr) \\
 &\qquad\quad - \bigl(\gamma_N^+f^++\gamma_N^-f^-,\gamma_D g- \alpha^{-1}(\gamma_N^+g^++\gamma_N^-g^-)\bigr)\\
 &=\bigl(\gamma_D f, \gamma_N^+g^++\gamma_N^-g^-\bigr)  - \bigl(\gamma_N^+f^++\gamma_N^-f^-,\gamma_D g\bigr)\\
 &=(\gamma_D^+f^+,\gamma_N^+g^+)-(\gamma_N^+f^+,\gamma_D^+g^+)+(\gamma_D^-f^-,\gamma_N^-g^-)-(\gamma_N^-f^-,\gamma_D^-g^-)\\
 &=(-\Delta f^+,g^+)-(f^+,-\Delta g^+)+(-\Delta f^-,g^-)-(f^-,-\Delta g^-)\\
 &=(Tf,g)-(f,Tg),
 \end{split}
\end{equation*}
which shows \eqref{Green_formula_for_BGBtrip}. In order to show that $\Gamma_0$ is surjective we fix some $\lambda_0\in\dR$ such that $\lambda_0<\min\{\sigma(A_D),\sigma(A_N)\}$
and we note that the direct sum decomposition
\begin{equation*}
 \dom (T)=\dom (A_D)\,\dot +\,\ker(T-\lambda_0)
\end{equation*}
holds since $\lambda_0\in\rho(A_D)$. It follows from \eqref{tli} and \eqref{gammadcont} that $\gamma_D$ maps $\ker(T-\lambda_0)$ onto $H^1(\cC)$. As
$\Lambda_1^\pm(\lambda_0)=(\cN^\pm(\lambda_0))^{-1}$ (cf. \eqref{ndmap}) are uniformly positive self-adjoint operators in $L^2(\cC)$
it follows that also $\Lambda_1(\lambda_0)=\Lambda_1^+(\lambda_0)+\Lambda_1^-(\lambda_0)$ is a uniformly positive
self-adjoint operator in $L^2(\cC)$. Let $\psi\in L^2(\cC)$, choose
$\varphi\in H^1(\cC)$ and $f_{\lambda_0}=\{f_{\lambda_0}^+,f_{\lambda_0}^-\}\in\ker(T-\lambda_0)$ such that $\Lambda_1(\lambda_0)\varphi=\psi$ and $\gamma_D f_{\lambda_0}=\varphi$.
Then we have
\begin{equation*}
 \Gamma_0 f_{\lambda_0}=\gamma_N^+f^+_{\lambda_0}+\gamma_N^-f^-_{\lambda_0}=\Lambda_1(\lambda_0)\gamma_D f_{\lambda_0}=\Lambda_1(\lambda_0)\varphi=\psi
\end{equation*}
and this implies $\ran(\Gamma_0)=L^2(\cC)$.

It is not difficult to check that
$\dom (A_{\rm free})$ and $\dom (A_{\delta,\alpha})$ are contained in $\ker(\Gamma_0)$ and $\ker(\Gamma_1)$, respectively, and since
$A_{\rm free}$ and $A_{\delta,\alpha}$ are self-adjoint and $T\upharpoonright\ker(\Gamma_0)$ and $T\upharpoonright\ker(\Gamma_1)$
are symmetric by Green's identity \eqref{Green_formula_for_BGBtrip} it follows that \eqref{condikk} holds. From $S=A_{\rm free}\cap A_{\delta,\alpha}$ and
\begin{equation*}
 \dom(A_{\rm free}) + \dom(A_{\delta,\alpha}) \subset\dom (T) \subset\dom (S^*)
\end{equation*}
we conclude with the help of Proposition~\ref{prop:2.5}  that $\overline T=S^*$.
Hence $\Pi^{\rm free}_{\delta,\alpha}$ is a $B$-generalized boundary triple such that \eqref{condikk} is satisfied.

In order to show that the corresponding Weyl function is given by \eqref{e} let $f_z=\{f_z^+,f_z^-\}\in\ker(T-z)$ and $z\in\dC_+$. Then we have
\begin{equation*}
 \Lambda_1(z)\gamma_D f_z=\Lambda_1^+(z)\gamma_D^+ f_z^++\Lambda_1^-(z)^-\gamma_D^- f_z^-=\gamma_N^+f_z^++\gamma_N^-f_z^-=\Gamma_0 f_z
\end{equation*}
and since $\ker(\Lambda_1(z))=\{0\}$
we conclude
\begin{equation*}
 \left(\Lambda_1(z)^{-1}-\frac{1}{\alpha}\right)\Gamma_0 f_z=\gamma_D f_z-\frac{1}{\alpha}\bigl(\gamma_N^+f_z^+-\gamma_N^-f_z^-\bigr)=\Gamma_1 f_z.
\end{equation*}
This proves the representation \eqref{e}. In order to see that the Weyl function
$M^{\rm free}_{\delta,\alpha}(\cdot)$ is $\sS_1$-regular we argue in the same way as in the previous proofs.
Denote the $\gamma$-field corresponding to
the $B$-generalized boundary triple  $\Pi^{\rm free}_{\delta,\alpha}$ by $\gamma^{\rm free}_{\delta,\alpha}(\cdot)$ and use
\begin{equation}\label{lplp4422}
 M^{\rm free}_{\delta,\alpha}(z)=M^{\rm free}_{\delta,\alpha}(\xi)^*+(z-\bar\xi)\gamma^{\rm free}_{\delta,\alpha}(\xi)^* \gamma^{\rm free}_{\delta,\alpha}(z)
\end{equation}
with some $\xi\in\rho(A_{\rm free})\cap\rho(A_{\delta,\alpha})\cap\dR$ and all $z\in\rho(A_{\rm free})$.
For $h=\{h^+,h^-\}\in L^2(\dR^n)$ we have
\begin{equation*}
 \gamma_{\delta,\alpha}^{\rm free}(\xi)^*h=\Gamma_1(A_{\rm free}-\bar \xi)^{-1}h=\gamma_D(A_{\rm free}-\bar \xi)^{-1}h\in H^{3/2}(\cC)
\end{equation*}
and hence Lemma~\ref{hurra} yields
\begin{equation}\label{gammafreespqq}
\gamma^{\rm free}_{\delta,\alpha}(\xi)^*\in\cS_\frac{2(n-1)}{3}\bigl(L^2(\dR^n),L^2(\cC)\bigr).
\end{equation}
As before we conclude
\begin{equation}\label{okok}
\gamma^{\rm free}_{\delta,\alpha}(z)\in\cS_\frac{2(n-1)}{3}\bigl(L^2(\cC),L^2(\dR^n)\bigr),\qquad  z\in\rho(A_{\rm free}).
\end{equation}
It follows from \eqref{lplp4422} that 
$$K(z) := M^{\rm free}_{\delta,\alpha}(z) - M^{\rm free}_{\delta,\alpha}(\xi) \in \sS_1\bigl(L^2(\cC)\bigr),\quad z \in \dC_+,$$
where $M^{\rm free}_{\delta,\alpha}(\xi) = M^{\rm free}_{\delta,\alpha}(\xi)^*$ was used. Since the operator $\frac{1}{\alpha}$ is boundedly invertible and
$\ran(\Lambda_1(\xi)^{-1}) \subseteq H^1(\cC)$ the operator $M^{\rm free}_{\delta,\alpha}(\xi)$ is a Fredholm operator. Furthemore, 
$\ker(M^{\rm free}_{\delta,\alpha}(\xi))$ is trivial as otherwise there is an element $f_\xi \in \ker(T-\xi)$ with 
$\Gamma_1f_\xi = 0$, that is, $f_\xi \in \ker(A_{\delta,\alpha} - \xi)$. But $\xi \in \rho(A_{\delta,\alpha})$ implies $f_\xi = 0$; a contradiction. 
Hence $M^{\rm free}_{\delta,\alpha}(\xi)$ is invertible
and it follows that $M^{\rm free}_{\delta,\alpha}(\cdot)$ is $\sS_1$-regular for $n=2,3$.

The assertions in Theorem~\ref{dddthm} follow from
Theorem~\ref{th:3.1} and
\begin{equation*}
 \Im M_{\delta,\alpha}^{\rm free}(z)=\Im\Lambda_1(z),\quad M^{\rm free}_{\delta,\alpha}(z)^{-1}=-\bigl(I-\alpha \Lambda_1(z)^{-1}\bigr)^{-1}\alpha,\quad z\in\dC_+,
\end{equation*}
and
\begin{equation*}
\begin{split}
 \Im M_{\delta,\alpha}^{\rm free}(\lambda+i0)&=\Im\Lambda_1(\lambda+i0),\\
 M^{\rm free}_{\delta,\alpha}(\lambda+i0)^{-1}&=-\bigl(I-\alpha \Lambda_1(\lambda+i0)^{-1}\bigr)^{-1}
 \alpha,
 \end{split}
\end{equation*}
for a.e. $\lambda\in\dR$. 
\end{proof}

\begin{remark}\label{remarkv}
{\rm
As in previous remarks it follows from \eqref{gammafreespqq}--\eqref{okok} and Krein's formula that
\begin{equation*}
 (A_{\delta,\alpha}-z)^{-1}-(A_{\rm free}-z)^{-1}=-\gamma^{\rm free}_{\delta,\alpha}(z) M^{\rm free}_{\delta,\alpha}(z)^{-1}\gamma^{\rm free}_{\delta,\alpha}(\bar z)^*
 \in\cS_{\frac{n-1}{3}}(L^2(\dR^n))
\end{equation*}
for all $z\in\rho(A_{\rm free})\cap\rho(A_{\delta,\alpha})$; cf. \cite{BLL13}.
}
\end{remark}

\appendix

\section{Spectral representation and scattering matrix}

\subsection{Spectral representations and operator spectral integrals}\label{A.I}

Let $E(\cdot)$ be a spectral measure in the separable Hilbert space
$\sH$ defined on the Borel sets $\mathfrak B(\dR)$ of the real axis $\dR$.
Further, let $C$ be a Hilbert-Schmidt operator in $\sH$. Obviously,
$\Sigma(\delta) := C^*E(\delta)C$, $\delta \in \mathfrak B(\dR)$
defines a trace class valued measure on $\mathfrak B(\dR)$ of finite variation;
cf. \cite[Lemma 3.11]{BW83}
The measure admits a unique decomposition
\begin{equation*}
\Sigma(\cdot) =  \Sigma^{s}(\cdot) + \Sigma^{ac}(\cdot)
\end{equation*}
into a singular measure $\Sigma^{s}(\cdot) = C^*E^s(\cdot)C$ and an
absolutely continuous measure $\Sigma^{ac}(\cdot) = C^*E^{ac}(\cdot)C$.
From \cite[Proposition 3.13]{BW83} it follows that the trace class valued function $\Sigma(\lambda) :=
C^*E((-\infty,\lambda))C$ admits a derivative $K(\lambda) :=
\frac{d}{d\lambda}\Sigma(\lambda) \ge 0$ in the trace class norm
for a.e. $\lambda \in \dR$ with respect  the Lebesgue measure $d\lambda$ such that
\begin{equation*}
\Sigma^{ac}(\delta) = \int_\delta K(\lambda) d\lambda, \quad \delta \in \mathfrak B(\dR).
\end{equation*}
By $\cH_\lambda := \overline{\ran(K(\lambda))} \subseteq \sH$ we define a
measurable family of subspaces in $\sH$. The
orthogonal projection $P(\lambda)$ from $\sH$ onto $\cH_\lambda$ form a
measurable family of projections which defines by
\begin{equation*}
(Pf)(\lambda) := P(\lambda)f(\lambda), \quad f \in L^2(\dR,d\lambda,\sH),
\end{equation*}
an orthogonal projection from $L^2(\dR,d\lambda,\sH)$ onto a subspace which is denoted
by $L^2(\dR,d\lambda,\cH_\lambda)$. Let us assume that the closed linear
span of the sets $E^{ac}(\delta)\ran(C)$, $\delta \in \mathfrak B(\dR)$, coincides
with $\sH^{ac} = E^{ac}(\dR)\sH$. Let
\begin{equation*}
(\Phi E^{ac}(\delta)Cf)(\lambda) := \chi_{\delta}(\lambda)\sqrt{K(\lambda)}f, \quad \delta \in
\mathfrak B(\dR), \quad f \in \sH,
\end{equation*}
where $\chi_\delta(\cdot)$ denotes the characteristic function of $\delta \in \mathfrak B(\dR)$.
Obviously, we have
\begin{equation*}
\int \|(\Phi E^{ac}(\delta)Cf)(\lambda)\|^2_\sH d\lambda =
\int_\delta \|\sqrt{K(\lambda)}f\|^2_\sH d\lambda = \|E^{ac}(\delta)Cf\|^2_\sH.
\end{equation*}
Hence $\Phi: \sH^{ac} \longrightarrow L^2(\dR,d\lambda,\cH_\lambda)$ defines an
isometry from $\sH^{ac}$ into $L^2(\dR,d\lambda,\cH_\lambda)$. Let us show that
$\Phi$ is onto $L^2(\dR,d\lambda,\cH_\lambda)$. Let $g \in L^2(\dR,d\lambda,\cH_\lambda)$ such that
\begin{equation*}
0 = (\Phi E^{ac}(\delta)Cf,g) = \int_\delta\;(\sqrt{K(\lambda)}f,g(\lambda))_\sH d\lambda
\end{equation*}
for $f \in \sH^{ac}$, $\delta \in \mathfrak B(\dR)$. Since $\delta$ is arbitrary we find
$(\sqrt{K(\lambda)}f,g(\lambda))_\sH =0$ for a.e. $\lambda \in \dR$. Hence $g(\lambda) \perp
\cH_\lambda$ for a.e. $\lambda \in \dR$ which shows $g(\lambda) = 0$ for a.e. $\lambda \in
\dR$. Hence $\Phi$ is an isometry form $\sH^{ac}$ onto the subspace $L^2(\dR,d\lambda,\cH_\lambda)$.

Obviously, we have
\begin{equation*}
(\Phi E^{ac}(\delta) f)(\lambda)  = \chi_\delta(\lambda)(\Phi f)(\lambda), \quad \delta \in
\mathfrak B(\dR), \quad f \in \sH^{ac}.
\end{equation*}
Let $A$ be  a self-adjoint operator in $\sH$ and let $E_A(\cdot)$ be the corresponding spectral measure, i.e.
$A = \int_\dR \lambda \,dE_A(\lambda)$. Then $M\Phi = \Phi A^{ac}$ where $M$ is
the natural multiplication operator defined by
\begin{equation*}
\begin{split}
(Mf)(\lambda) & :=  \lambda f(\lambda),\\
f \in \dom(M) & :=  \{f \in
  L^2(\dR,d\lambda,\cH_\lambda: \lambda f(\lambda) \in L^2(\dR,d\lambda,\cH_\lambda\}.
\end{split}
\end{equation*}
If $\varphi(\cdot): \dR \longrightarrow \dR$ is a bounded Borel
function then $\varphi(M)\Phi = \Phi \varphi(A^{ac})$.

\begin{lemma}
 Let $A$, $E_A(\cdot)$, $C$ and $K(\lambda)$ be as above and assume that the absolutely continuous subspace $\sH^{ac}(A)$ satisfies the condition
 \begin{equation*}
 \sH^{ac}(A) =\text{\rm clsp}\,\bigl\{E_A^{ac}(\delta)\ran(C):\delta \in \mathfrak B(\dR)\bigr\}.
 \end{equation*}
Then the mapping
 \begin{equation*}
  E^{ac}(\delta)Cf\mapsto\chi_\delta(\lambda)\sqrt{K(\lambda)}f\quad\text{for a.e.}\,\,\,\lambda\in\dR,\quad f\in\sH,
 \end{equation*}
 onto the dense subspace $\text{\rm span}\,\{E_A^{ac}(\delta)\ran(C):\delta \in \mathfrak B(\dR)\}$ of $\sH^{ac}(A)$
admits a unique continuation to an isometric isomorphism from $\Phi:\sH^{ac}(A)\rightarrow L^2(\dR,d\lambda,\cH_\lambda)$ such that
\begin{equation*}
 (\Phi E_A^{ac}(\delta) g)(\lambda)=\chi_\delta(\lambda)(\Phi g)(\lambda),\quad g\in\sH^{ac}(A),
\end{equation*}
holds for any $\delta\in\mathfrak B(\dR)$.
\end{lemma}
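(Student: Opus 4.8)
The plan is to recognize that the lemma is the clean packaging of the construction carried out in the discussion immediately preceding it (where $E_A^{ac}$ is abbreviated to $E^{ac}$), so the real work is to check that the prescribed rule extends consistently from the generators $E_A^{ac}(\delta)Cf$ to the linear span $\mathcal{D}:=\text{\rm span}\,\{E_A^{ac}(\delta)\ran(C):\delta\in\mathfrak B(\dR)\}$ and is isometric there. First I would record the two facts that drive everything: since $K(\lambda)\ge 0$ is the $\sS_1$-density of $\Sigma^{ac}(\cdot)=C^*E^{ac}(\cdot)C$, one has $\int_{\delta\cap\delta'}(K(\lambda)f,f')_\sH\,d\lambda=(E_A^{ac}(\delta\cap\delta')Cf,Cf')$ for all $\delta,\delta'\in\mathfrak B(\dR)$ and $f,f'\in\sH$; and since $\overline{\ran\sqrt{K(\lambda)}}=\overline{\ran K(\lambda)}=\cH_\lambda$, the candidate image $\chi_\delta(\lambda)\sqrt{K(\lambda)}f$ indeed lies fiberwise in $\cH_\lambda$, hence defines an element of $L^2(\dR,d\lambda,\cH_\lambda)$.

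The key step, and the one needing care, is well-definedness together with the isometry property on $\mathcal{D}$: because the map is only prescribed on generators, two linear combinations representing the same vector of $\sH^{ac}(A)$ must receive the same image. I would settle both points simultaneously by comparing Gram matrices. Using that $\sqrt{K(\lambda)}$ is self-adjoint with $\sqrt{K(\lambda)}^{\,2}=K(\lambda)$ and that $E_A^{ac}(\delta)E_A^{ac}(\delta')=E_A^{ac}(\delta\cap\delta')$, the identity recorded above gives
\[
\int_\dR\bigl(\chi_\delta(\lambda)\sqrt{K(\lambda)}f,\,\chi_{\delta'}(\lambda)\sqrt{K(\lambda)}f'\bigr)_\sH\,d\lambda
=\bigl(E_A^{ac}(\delta\cap\delta')Cf,Cf'\bigr)
=\bigl(E_A^{ac}(\delta)Cf,E_A^{ac}(\delta')Cf'\bigr).
\]
Thus the Gram matrix of any finite family of generators equals the Gram matrix of their prescribed images; extending by sesquilinearity yields $\|\Phi u\|_{L^2}=\|u\|_{\sH}$ for every $u\in\mathcal{D}$. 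In particular, if $u=0$ then its image has zero $L^2$-norm and hence vanishes a.e., which is precisely well-definedness, while the same computation is the isometry statement. (The diagonal case $\delta=\delta'$, $f=f'$ is exactly the single-generator isometry already verified above the lemma.)

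Finally I would invoke the hypothesis $\sH^{ac}(A)=\text{\rm clsp}\,\{E_A^{ac}(\delta)\ran(C):\delta\in\mathfrak B(\dR)\}$, which says $\mathcal{D}$ is dense, so the bounded-linear-extension theorem produces a unique isometric $\Phi:\sH^{ac}(A)\rightarrow L^2(\dR,d\lambda,\cH_\lambda)$ continuing the rule; uniqueness is automatic since a continuous map is determined on a dense set. Surjectivity of $\Phi$ is the orthogonality argument already carried out above the lemma: any $g$ orthogonal to every $\chi_\delta\sqrt{K}f$ satisfies $g(\lambda)\perp\cH_\lambda$ a.e. and so vanishes, whence $\ran\Phi=L^2(\dR,d\lambda,\cH_\lambda)$ and $\Phi$ is an isometric isomorphism. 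For the intertwining relation I would first check it on generators, where $E_A^{ac}(\delta)\,E_A^{ac}(\delta')Cf=E_A^{ac}(\delta\cap\delta')Cf$ gives $(\Phi E_A^{ac}(\delta)g)(\lambda)=\chi_{\delta\cap\delta'}(\lambda)\sqrt{K(\lambda)}f=\chi_\delta(\lambda)(\Phi g)(\lambda)$, and then pass to arbitrary $g\in\sH^{ac}(A)$ using continuity of $\Phi$ and the boundedness of multiplication by $\chi_\delta$ on $L^2(\dR,d\lambda,\cH_\lambda)$. I do not anticipate any substantial obstacle beyond the bookkeeping in the Gram-matrix identity, which is exactly the place where the measurability of $\lambda\mapsto K(\lambda)$ and the $\sS_1$-differentiability of $\Sigma(\cdot)$ established earlier are consumed.
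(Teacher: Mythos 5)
Your proposal is correct and takes essentially the same route as the paper's own treatment, which is exactly the discussion preceding the lemma in Appendix A: the same definition of $\Phi$ on the generators $E_A^{ac}(\delta)Cf$, the same norm computation for isometry, the same orthogonality argument for surjectivity, and the same verification of the intertwining relation on generators followed by continuous extension. The only difference is that your Gram-matrix identity makes explicit the well-definedness and isometry of $\Phi$ on the whole linear span (the paper only records the single-generator computation and asserts the extension), which is a careful completion of the same argument rather than a different method.
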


Let us consider operator spectral integrals of the form
$\int_\dR dE^{ac}(\mu)Cf(\lambda)$, which are defined whenever $f(\cdot): \dR
\longrightarrow \sH$ is a Borel measurable function, cf. \cite[Section 5.2]{BW83}.
From \cite[Proposition 5.13]{BW83} we find that this integral exists if
and only if $\int_\dR \|\sqrt{K(\mu)}f(\mu)\|^2_\sH d\mu$ exists and is finite.
One verifies that
\begin{equation}\label{A.1}
\left(\Phi\int_\dR dE^{ac}(\mu)Cf(\mu)\right)(\lambda) =
\sqrt{K(\lambda)}f(\lambda).
\end{equation}

\subsection{Scattering}\label{A.III}

In the following let $A$ and $B$ be self-adjoint operators in $\sH$, let $J\in\cL(\sH)$ be a
bounded operator such that $J\dom A \subseteq \dom B$. If
$$V:= BJ - JA,\qquad \dom V := \dom A,$$
is closable and its closure
is a trace class operator then the wave operators
\begin{equation*}
W_\pm(A,B;J) := s-\lim_{t\to\pm\infty}e^{itB}Je^{-itA}P^{ac}(A)
\end{equation*}
exist, see \cite{BW83,1978Neidhardt,1978Pearson}. The scattering operator $S_J$ is defined by
\begin{equation*}
S_J(A,B) := W_+(A,B;J)^*W_-(A,B;J).
\end{equation*}
Usually the wave operators $W_\pm(B,A;J)$ and the scattering operator $S_J$ are not the quantities of main interest.
The objects one is more interested in are the wave operators $W_\pm(A,B) := W_\pm(A,B;I)$ and $S(A,B) := S_I(A,B)$. However, if the resolvent difference of $A$ and $B$ is compact, then the existence of
$W_\pm(B,A;J)$ yields the existence of $W_\pm(B,A)$ and both operators are related by
\begin{equation*}
W_\pm(A,B;J) = -W_\pm(A,B)(A-i)^{-2}.
\end{equation*}
In particular, this yields
\begin{equation}\label{eq:A3a}
S_J(A,B) = S(A,B)(I + A^2)^{-2}.
\end{equation}
The following theorem was announced in \cite[ Appendix A]{BMN09} but not proved there. Below the complete proof of theorem is given.
\begin{theorem}\label{att}
Let $A$ and $B$ be self-adjoint operators in the separable Hilbert space $\sH$ and suppose that the resolvent difference admits the factorization
\begin{equation}\label{eq:A3}
\sS_1(\sH) \ni (B-i)^{-1}-(A-i)^{-1}=\phi(A)CGC^*=QC^*,
\end{equation}
where $C\in\sS_2(\cH,\sH)$, $G\in\cL(\cH)$,  $\phi(\cdot): \dR \rightarrow \dR$ is a bounded continuous function and $Q = \phi(A)CG$. Assume that the condition
\begin{equation}
\sH^{ac}(A) = \text{\rm clsp}\,\bigl\{E_A^{ac}(\delta)\ran(C):\delta \in \mathfrak B(\dR)\bigr\}
\end{equation}
is satisfied and let $K(\lambda) =\frac{d}{d\lambda}C^*E_A((-\infty,\lambda))C$ and $\cH_\lambda = \overline{\ran(K(\lambda))}$ for a.e. $\lambda\in\dR$.
Then $L^2(\dR,d\lambda,\cH_\lambda)$ is a  spectral representation of $A^{ac}$ and the scattering matrix $\{S(A,B;\lambda)\}_{\lambda\in \dR}$ of the scattering system
$\{A,B\}$ has the representation
\begin{equation}\label{A.3}
S(A,B;\lambda) = I_{\cH_\lambda} + 2\pi i (1 + \lambda^2)^2\sqrt{K(\lambda)}Z(\lambda)\sqrt{K(\lambda)}
\end{equation}
for a.e. $\lambda \in \dR$, where
\begin{equation}\label{WScatA8}
Z(\lambda) = \frac{1}{\lambda+i}Q^*Q +\frac{\phi(\lambda)}{(\lambda+i)^2}G + \lim_{\varepsilon\to+0}Q^*(B-(\lambda+i\varepsilon))^{-1}Q
\end{equation}
and the limit of the last term on the right hand side exists in the Hilbert-Schmidt norm.
\end{theorem}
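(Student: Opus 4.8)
The plan is to prove Theorem~\ref{att} by the stationary method of scattering theory, reducing $\{A,B\}$ to the sandwiched wave operators of Section~\ref{A.III} and then diagonalizing the resulting operator spectral integral by means of \eqref{A.1}. First I would fix the identification operator $J:=-(B-i)^{-1}(A-i)^{-1}\in\cL(\sH)$, which satisfies $J\dom(A)\subseteq\dom(B)$. Using $B(B-i)^{-1}=I+i(B-i)^{-1}$ and $(A-i)^{-1}A=I+i(A-i)^{-1}$ one computes $V:=BJ-JA=(B-i)^{-1}-(A-i)^{-1}=QC^*$, which is trace class by hypothesis and factors through $\sS_2$, since $C\in\sS_2(\cH,\sH)$ and $Q=\phi(A)CG\in\sS_2(\cH,\sH)$. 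As $(B-i)^{-1}-(A-i)^{-1}$ is compact and $e^{-itA}P^{ac}(A)\to 0$ weakly, the multiplicative factor $\slim_{t\to\pm\infty}e^{itA}Je^{-itA}P^{ac}(A)=-(A-i)^{-2}P^{ac}(A)$ is obtained, so $W_\pm(A,B;J)$ exist and the relations of Section~\ref{A.III} apply, in particular $W_\pm(A,B;J)=-W_\pm(A,B)(A-i)^{-2}$ and therefore \eqref{eq:A3a}, that is $S_J(A,B)=S(A,B)(I+A^2)^{-2}$. Under the isomorphism $\Phi$ of Section~\ref{A.I} this reads fiberwise as $S(A,B;\lambda)=(1+\lambda^2)^2\,S_J(A,B;\lambda)$ for a.e.\ $\lambda\in\dR$, so it suffices to compute the scattering matrix of the sandwiched system.

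Next I would derive a stationary representation of $S_J(A,B)$. Splitting $S_J(A,B)=W_+^*W_++W_+^*(W_--W_+)$, the first summand equals $J_\infty^*J_\infty=(I+A^2)^{-2}P^{ac}(A)$, whose fiber is $(1+\lambda^2)^{-2}I_{\cH_\lambda}$; after multiplication by $(1+\lambda^2)^2$ this yields exactly the term $I_{\cH_\lambda}$ in \eqref{A.3}. For the second summand I would insert $W_--W_+=-i\int_\dR e^{itB}Ve^{-itA}P^{ac}(A)\,dt$ and use the intertwining relation $W_+^*e^{itB}=e^{itA}W_+^*$ to rewrite $W_+^*(W_--W_+)$ as an operator spectral integral over $E_A^{ac}$. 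Inserting the spectral resolution of $A$ and performing the $t$-integration replaces $\int_\dR e^{it(\nu-\mu)}\,dt$ by $2\pi\delta(\nu-\mu)$, i.e.\ restricts to the energy shell and generates the factor $2\pi i$. The diagonalization formula \eqref{A.1}, which turns $\int_\dR dE_A^{ac}(\mu)C\,(\cdot)$ into multiplication by $\sqrt{K(\lambda)}$, then gives $S_J(A,B;\lambda)-(1+\lambda^2)^{-2}I_{\cH_\lambda}=2\pi i\,\sqrt{K(\lambda)}\,Z(\lambda)\,\sqrt{K(\lambda)}$ with a reduced on-shell operator $Z(\lambda)\in\cB(\cH)$.

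It then remains to identify $Z(\lambda)$ with \eqref{WScatA8}. Here I would expand $W_+^*V=W_+^*QC^*$ through the stationary form of $W_+^*$, in which the full resolvent $(B-\lambda-i\varepsilon)^{-1}$ of $B$ appears, together with the explicit factorization $V=\phi(A)CGC^*$, $Q=\phi(A)CG$. The $A$-side multipliers $\phi(A)$ and the resolvent factors contained in $J_\infty$ and in the factorization are diagonal and, evaluated on the shell, produce the two free terms $\tfrac{1}{\lambda+i}Q^*Q$ and $\tfrac{\phi(\lambda)}{(\lambda+i)^2}G$, while the genuinely interacting contribution is $\lim_{\varepsilon\to+0}Q^*(B-(\lambda+i\varepsilon))^{-1}Q$. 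The existence of this last limit in the Hilbert-Schmidt norm for a.e.\ $\lambda$ follows from $Q\in\sS_2(\cH,\sH)$ together with the boundary value results of Birman and {\`E}ntina, de Branges and Naboko that were already used in Lemma~\ref{lemmilein}. Collecting the three contributions yields \eqref{WScatA8}, and together with the identity part this is precisely \eqref{A.3}.

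The main obstacle is the rigorous passage from the time-dependent to the stationary formula: justifying the interchange of the $t$-integration with the spectral integration and the on-shell limit (the limiting absorption principle). This rests on the a.e.\ $\sS_1$-differentiability of $\lambda\mapsto C^*E_A((-\infty,\lambda))C$ established in Section~\ref{A.I} (cf.\ \cite{BirEnt67,BW83}) and on the $\sS_2$-factorization $V=QC^*$, which makes all sandwiched resolvents admit Hilbert-Schmidt boundary values. A subtler point is the exact bookkeeping of the free contributions, so that the scalar prefactors $\tfrac{1}{\lambda+i}$, $\tfrac{\phi(\lambda)}{(\lambda+i)^2}$ and the constant $I_{\cH_\lambda}$ appear with the correct normalization once the overall factor $(1+\lambda^2)^2$ has been restored.
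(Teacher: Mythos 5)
Your proposal follows the same skeleton as the paper's proof --- the same identification operator $J=-R_B(i)R_A(i)$ (writing $R_A(\xi)=(A-\xi)^{-1}$, $R_B(\xi)=(B-\xi)^{-1}$), the same perturbation $V=BJ-JA=(B-i)^{-1}-(A-i)^{-1}=QC^*$, the reduction via \eqref{eq:A3a} to the sandwiched scattering operator $S_J(A,B)$, and the diagonalization through $\Phi$ and \eqref{A.1} --- but it has a genuine gap exactly where the analytic substance of Theorem~\ref{att} lies. The paper never derives the stationary representation from the time-dependent formula; it invokes \cite[Theorem 18.4]{BW83}, which supplies, ready-made,
\begin{equation*}
S_J(A,B)-W_+(A,B;J)^*W_+(A,B;J)
=\slim_{\epsilon\to+0}\,\wlim_{\tau\to+0}\Big\{-2\pi i\int_\dR dE^{ac}_A(\lambda)\,T(\tau;\lambda)\,\delta_\epsilon(A;\lambda)P^{ac}(A)\Big\}
\end{equation*}
with $T(\tau;\lambda)=J^*V-V^*R_B(\lambda+i\tau)V$, after which only resolvent algebra and limit arguments remain. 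You propose instead to reconstruct such a formula from $W_--W_+=-i\int_\dR e^{itB}Ve^{-itA}P^{ac}(A)\,dt$ together with the substitution $\int_\dR e^{it(\nu-\mu)}\,dt=2\pi\delta(\nu-\mu)$, and you yourself label the justification of this passage (interchange of the $t$-integration with the spectral integrals, the on-shell limit) as ``the main obstacle''. That obstacle \emph{is} the theorem you would have to prove; leaving it as an acknowledged difficulty means the proof is not there. Note also that $W_+^*(W_--W_+)$ does not by itself produce the combination $J^*V-V^*R_BV$: you additionally need a stationary representation of $W_+^*$ (you allude to this), which is more unproven machinery of the same kind.

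The second gap is the identification of $Z(\lambda)$. You assert that the $A$-side factors, ``evaluated on the shell'', produce the two free terms, but this bookkeeping --- which you again flag as ``subtler'' --- is precisely where the coefficients in \eqref{WScatA8} come from, and it must be computed, not described. In the paper it is short explicit algebra: from $R_B(-i)=R_A(-i)+CQ^*$ one gets $J^*V=-R_A(-i)CQ^*V-R_A(-i)^2\phi(A)CGC^*$, hence
\begin{equation*}
T(\tau;\lambda)=-\bigl(R_A(-i)CQ^*Q+R_A(-i)^2\phi(A)CG+CQ^*R_B(\lambda+i\tau)Q\bigr)C^*,
\end{equation*}
and under $dE^{ac}_A(\mu)$ the operators $R_A(-i)$ and $R_A(-i)^2\phi(A)$ become the scalars $\frac{1}{\mu+i}$ and $\frac{\phi(\mu)}{(\mu+i)^2}$, which gives \eqref{WScatA8} with the correct normalization. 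The sound parts of your argument are the computation of $V$, the existence of the Hilbert--Schmidt boundary values of $Q^*R_B(\lambda+i0)Q$ from $Q\in\sS_2(\cH,\sH)$ via Birman--{\`E}ntina, de Branges, and Naboko, and the identification of $W_+^*W_+$ with $(I+A^2)^{-2}$ on $\sH^{ac}(A)$. To repair the proof, either cite the stationary formula of \cite[Theorem 18.4]{BW83} as the paper does, or genuinely prove the time-dependent-to-stationary passage; in either case, carry out the resolvent expansion of $T(\tau;\lambda)$ explicitly.
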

\begin{proof}
Consider the scattering operator $S_J(A,B) := W_+(A,B;J)^*W_-(A,B;J) : \sH^{ac}(A)
\longrightarrow \sH^{ac}(A)$,
where $J := -R_B(i)R_A(i)$ and
\begin{equation*}
R_B(\xi) := (B-\xi)^{-1}, \quad\quad R_A(\xi) :=
(A-\xi)^{-1}.
\end{equation*}
One easily checks that
\begin{equation*}
V := BJ - JA = (B-i)^{-1} - (A-i)^{-1} = \phi(A)CGC^*
\end{equation*}
where we have used the assumption \eqref{eq:A3}.
We note that the scattering operator commutes with $A$. From
\cite[Theorem 18.4]{BW83} one gets the representation
\begin{equation*}
 \begin{split}
 & S_J(A,B) - W_+(A,B;J)^*W_+(A,B;J) =\\
 &\qquad\qquad s-\lim_{\epsilon\to+0}w-\lim_{\tau\to+0}
\left\{-2\pi i\int_\dR dE^{ac}_A(\lambda)\,T(\tau;\lambda)\delta_\epsilon(A;\lambda)P^{ac}(A)\right\}
\end{split}
\end{equation*}
where
\begin{equation*}
T(\tau;\lambda) := J^*V -V^*R_B(\lambda+i\tau)V.
\end{equation*}
and
\begin{equation*}
\delta_\epsilon(A;\lambda) := \frac{1}{2\pi i}(R_A(\lambda+i\epsilon) -
R_A(\lambda-i\epsilon)) = \frac{1}{\pi}\frac{\epsilon}{(A-\lambda)^2 + \epsilon^2}.
\end{equation*}
If condition \eqref{eq:A3} is satisfied, then
\begin{equation*}
R_B(i) = R_A(i) + \phi(A)CGC^* = R_A(i) + QC^*
\end{equation*}
and we get
\begin{equation*}
\begin{split}
J^*V &= -R_A(-i)R_B(-i)V\\
& = -R_A(-i)CQ^*V - R_A(-i)^2V \\
& = -R_A(-i)CQ^*V - R_A(-i)^2\phi(A)CGC^*.
\end{split}
\end{equation*}
Hence we find
\begin{equation*}
T(\tau;\lambda) =
-\left(R_A(-i)CQ^*Q + R_A(-i)^2\phi(A)CG +
CQ^*R_B(\lambda+i\tau)Q\right)C^*.
\end{equation*}
Using \eqref{A.1} we get
\begin{equation*}
 \begin{split}
&  \left(\Phi\int_\dR dE^{ac}_A(\mu)T(\tau;\mu)\delta_\epsilon(A;\mu)P^{ac}(A)Ch\right)(\lambda) =\\
&\qquad\qquad\qquad -\sqrt{K(\lambda)}Z(\tau;\lambda)C^*\delta_\epsilon(A;\lambda)P^{ac}(A)Ch
 \end{split}
\end{equation*}
where
\begin{equation*}
Z(\tau;\lambda) := \frac{1}{\lambda+i}Q^*Q +
\frac{\phi(\lambda)}{(\lambda+i)^2}G + Q^*R_B(\lambda+i\tau)Q.
\end{equation*}
We note that the limit $Q^*R_B(\lambda+i0)Q := \lim_{\tau\to+0}Q^*R_B(\lambda+i\tau)Q$ exists in
the Hilbert-Schmidt norm. Hence the limit $Z(\lambda) :=
\lim_{\tau\to+0}Z(\tau;\lambda)$ exists in the operator norm and is given by
\begin{equation*}
Z(\lambda) = \frac{1}{\lambda+i}Q^*Q +
\frac{\phi(\lambda)}{(\lambda+i)^2}G + Q^*R_B(\lambda+i0)Q.
\end{equation*}
This gives
\begin{equation*}
\begin{split}
\Big(\Phi\Big\{\slim_{\epsilon\to+0}\wlim_{\tau\to+0}
\int_\dR dE^{ac}_A(\mu)&T(\tau;\mu)\delta_\epsilon(A;\mu)P^{ac}(A)Ch\Big\}\Big)(\lambda)\\
&= -\sqrt{K(\lambda)}Z(\lambda)K(\lambda)h\; .
\end{split}
\end{equation*}
By the compactness of $V$ we get that $W_+(B,A;J)^*W_+(B,A;J) = (I +
A^2)^{-2}$. Therefore we have
\begin{equation*}
\bigl(\Phi(W_+(A,B;J)^*W_+(A,B;J)\Phi^*f\bigr)(\lambda) = (1 + \lambda^2)^{-2}f(\lambda).
\end{equation*}
Hence $\Phi S_J(A,B) \Phi^*$ is equal to  a
multiplication operator with a measurable function $S_J(A,B;\lambda): \cH_\lambda
\longrightarrow \cH_\lambda$ given by
\begin{equation*}
S_J(A,B;\lambda) := (1 + \lambda^2)^{-2}I_{\cH_\lambda} +
2\pi i \sqrt{K(\lambda)}Z(\lambda)\sqrt{K(\lambda)}.
\end{equation*}
Using \eqref{eq:A3a} we find that
 $\Phi S(A,B) \Phi^*$ is a multiplication operator
induced by the measurable function $S(A,B;\lambda): \cH_\lambda \longrightarrow
\cH_\lambda$. Both functions $S_J(A,B;\lambda)$ and $S(A,B;\lambda)$ are related by
\begin{equation*}
S_J(A,B;\lambda) = S(A,B;\lambda)(1 + \lambda^2)^{-2}
\end{equation*}
which yields the representation \eqref{A.3}.
\end{proof}

\noindent
{\bf Acknowledgements.}
Jussi Behrndt gratefully acknowledges financial support by the Austrian
Science Fund (FWF), project P 25162-N26.
We are indebted  to  Ludvig Faddeev, Boris Pavlov, and Andrea Posilicano for useful discussions and remarks.
The preparation of the paper was supported by the European Research Council via ERC-2010-AdG no 267802 (``Analysis of Multiscale Systems Driven by Functionals'').


\begin{thebibliography}{10}

\bibitem{AGW14}
Abels, H., Grubb, G., Wood, I.G.:
Extension theory and Kre\u\i n-type resolvent formulas for nonsmooth boundary value problems.
J. Funct. Anal. {\bf 266}, 4037--4100 (2014)

\bibitem{AP86} Adamyan, V.M., Pavlov, B.S.: 
Zero-radius potentials and M. G. Krein's formula for generalized resolvents. 
Zap. Nauchn. Semin. Leningr. Otd. Mat. Inst. Steklova {\bf 149}, 7--23 (1986); translation in 
J. Sov. Math. {\bf 42}, 1537--1550 (1988) 1537–1550

\bibitem{A65} Agmon, S.: Lectures on Elliptic Boundary Value Problems.
D. Van Nostrand Co., Princeton, N.J.-Toronto-London (1965)

\bibitem{Ag90} Agranovich, M.S.: Elliptic operators on closed manifolds. Encyclopaedia Math. Sci. {\bf 63},
Partial differential equations, VI, 1--130 Springer, Berlin, (1990)

\bibitem{Agr97}
Agranovich, M.S.: Elliptic boundary problems.
Encyclopaedia Math. Sci. {\bf 79}, Partial differential equations, IX, 1--144, 275--281, Springer, Berlin, (1997)

\bibitem{AGHH05}
Albeverio, A., Gesztesy, F., H{\o}egh-Krohn, R., Holden, H.:
Solvable Models in Quantum Mechanics. Second edition. With an appendix by Pavel Exner.
AMS Chelsea Publishing, Providence, RI (2005)

\bibitem{AKMN13}
Albeverio, S., Kostenko, A.S., Malamud, M.M., Neidhardt, H.:
Spherical Schr\"odinger operators with $\delta$-type interactions.
J. Math. Phys. {\bf 54}, 052103 24pp. (2013)

\bibitem{AK99}
Albeverio, S., Kurasov, P.:
Singular Perturbations of Differential Operators. Solvable Schr\"{o}dinger Type Operators.
London Mathematical Society Lecture Note Series {\bf 271}, Cambridge University Press, Cambridge, (2000)

\bibitem{AP04}
Amrein, W.O., Pearson, D.B.:
$M$ operators: a generalisation of Weyl-Titchmarsh theory.
J. Comput. Appl. Math. {\bf 171}, 1--26 (2004)

\bibitem{AGS87}
Antoine, J.-P., Gesztesy, F., Shabani, J.:
Exactly solvable models of sphere interactions in quantum mechanics.
J. Phys. A {\bf 20}, 3687--3712 (1987)

\bibitem{BMPPY05}
Bagraev, N.T., Mikhailova, A.B., Pavlov, B.S., Prokhorov, L.V., Yafyasov, A.M.:
Parameter regime of a resonance quantum switch.
Phys. Rev. B {\bf 71}, 165308 (2005)

\bibitem{BW83}
Baumg{\"a}rtel, H., Wollenberg, M.:
Mathematical Scattering Theory.
Operator Theory: Advances and Applications {\bf 9}, Birkh\"{a}user, Basel, (1983)

\bibitem{BL07}
Behrndt, J., Langer, M.:
Boundary value problems for elliptic partial differential operators on bounded domains.
J. Funct. Anal. {\bf 243}, 536--565 (2007)

\bibitem{BL12}
Behrndt, J., Langer, M.:
Elliptic operators, Dirichlet-to-Neumann maps and quasi boundary triples.
In {\em Operator methods for boundary value problems},
London Math. Soc. Lecture Note Ser. {\bf 404}, 121--160 Cambridge Univ. Press, Cambridge, (2012)

\bibitem{BLLLP10}
Behrndt, J., Langer, M., Lobanov, I., Lotoreichik, V., Popov, I.Yu.:
A remark on Schatten-von Neumann properties of resolvent differences of generalized Robin Laplacians on bounded domains.
J. Math. Anal. Appl. {\bf 371}, 750--758 (2010)

\bibitem{BLL13}
Behrndt, J., Langer, M., Lotoreichik, V.:
Schr\"odinger operators with $\delta$ and $\delta'$-potentials supported on hypersurfaces.
Ann. Henri Poincar\'e {\bf 14}, 385--423 (2013)

\bibitem{BLL13-2}
Behrndt, J., Langer, M., Lotoreichik, V.:
Spectral estimates for resolvent differences of self-adjoint elliptic operators.
Integral Equations Operator Theory {\bf 77}, 1--37 (2013)

\bibitem{BLLR16}
Behrndt, J., Langer, M.,  Lotoreichik, V., Rohleder, J.:
Quasi boundary triples and semibounded self-adjoint extensions.
Proc. Roy. Soc. Edinburgh Sect. A, to appear

\bibitem{BMN08}
Behrndt, J., Malamud, M.M., Neidhardt, H.:
Scattering matrices and Weyl functions.
Proc. Lond. Math. Soc. {\bf 97}, 568--598 (2008)

\bibitem{BMN09}
Behrndt, J., Malamud, M.M., Neidhardt, H.:
Finite rank perturbations, scattering matrices and inverse problems.
In {\em Recent advances in operator theory in Hilbert and Krein spaces},
Operator Theory: Advances and Applications {\bf 198}, 61--85 Birkh\"auser Verlag, Basel, (2010)

\bibitem{BR15}
Behrndt, J., Rohleder, J.:
Spectral analysis of self-adjoint elliptic differential operators, Dirichlet-to-Neumann maps, and abstract Weyl functions.
Adv. Math. {\bf 285} 1301--1338, (2015)

\bibitem{BR16}
Behrndt, J., Rohleder, J.:
Titchmarsh-Weyl theory for Schr{\"o}dinger operators on unbounded domains.
J. Spectr. Theory {\bf 6}, 67--87 (2016)

\bibitem{Ber78}
Berezanski{\u\i}, Yu.M.:
Selfadjoint Operators in Spaces of Functions of Infinitely many Variables.
Translations of Mathematical Monographs {\bf 63}, American Mathematical Society, Providence, RI, (1986)

\bibitem{Bir62}
Birman, M.{\v{S}}.:
Perturbations of the continuous spectrum of a singular elliptic operator by varying the boundary and the boundary conditions.
Vestnik Leningrad. Univ. {\bf 17} 22--55 (1962)

\bibitem{BirEnt67}
Birman, M.{\v{S}}., {\`E}ntina, S.B.:
Stationary approach in abstract scattering theory.
Izv. Akad. Nauk SSSR Ser. Mat., {\bf 31} 401--430, (1967)

\bibitem{dB62} de Branges, L.:
Perturbations of self-adjoint transformations.
Amer. J. Math. {\bf 84}, 543--560 (1962)

\bibitem{BEKS94}
Brasche, J.F., Exner, P., Kuperin, Yu.A., {\v{S}}eba, P.:
Schr\"odinger operators with singular interactions.
J. Math. Anal. Appl. {\bf 184}, 112--139 (1994)

\bibitem{BMN02}
Brasche, J.F., Malamud, M.M., Neidhardt, H.:
Weyl function and spectral properties of self-adjoint extensions.
Integral Equations Operator Theory {\bf 43}, 264--289 (2002)

\bibitem{BGW09}
Brown, B.M., Grubb, G., Wood, I.G.:
$M$-functions for closed extensions of adjoint pairs of operators with applications to elliptic boundary problems.
Math. Nachr. {\bf 282}, 314--347 (2009)

\bibitem{BMNW08}
Brown, B.M., Marletta, M., Naboko, S.N., Wood, I.G.:
Boundary triplets and $M$-functions for non-self-adjoint operators, with applications to elliptic PDEs and block operator matrices.
J. Lond. Math. Soc. {\bf 77}, 700--718 (2008)

\bibitem{BGP08}
Br\"uning, J., Geyler, V., Pankrashkin, K.,
Spectra of self-adjoint extensions and applications to solvable Schr\"odinger operators.
Rev. Math. Phys. {\bf 20} 1--70 (2008)

\bibitem{Cal39}
Calkin, J.W.:
Abstract symmetric boundary conditions.
Trans. Amer. Math. Soc. {\bf 45}, 369--442 (1939)

\bibitem{DeiSim76}
Deift, P., Simon, B.:
On the decoupling of finite singularities from the question of asymptotic completeness in two body quantum systems.
J. Funct. Anal. {\bf 23}, 218--238 (1976)

\bibitem{DHMS06}
Derkach, V.A., Hassi, S., Malamud, M.M., de~Snoo, H.S.V.:
Boundary relations and their Weyl families.
Trans. Amer. Math. Soc. {\bf 358}, 5351--5400 (2006)

\bibitem{DHMS12}
Derkach, V.A., Hassi, S., Malamud, M.M., de~Snoo, H.S.V.:
Boundary triplets and Weyl functions. Recent developments.
In {\em Operator methods for boundary value problems},
London Math. Soc. Lecture Note Ser. {\bf 404}, 161--220 Cambridge Univ. Press, Cambridge, (2012)

\bibitem{DM85}
Derkach, V.A., Malamud, M.M.: Weyl function of a Hermitian operator and its connection
with characteristic function. Preprint (1985); (see also arXiv:1503.08956)

\bibitem{DM91}
Derkach, V.A., Malamud, M.M.:
Generalized resolvents and the boundary value problems for Hermitian operators with gaps.
J. Funct. Anal. {\bf 95}, 1--95 (1991)

\bibitem{DM95}
Derkach, V.A., Malamud, M.M.:
The extension theory of Hermitian operators and the moment problem.
J. Math. Sci. {\bf 73}, 141--242 (1995)

\bibitem{E08}
Exner, P.:
Leaky quantum graphs: a review.
In {\em Analysis on graphs and its applications},
Proc. Sympos. Pure Math. {\bf 77}, 523--564 Amer. Math. Soc., Providence, RI, (2008)

\bibitem{EI01}
Exner, P., Ichinose, T.:
Geometrically induced spectrum in curved leaky wires.
J. Phys. A {\bf 34}, 1439--1450 (2001)

\bibitem{EK03}
Exner, P., Kondej, S.:
Bound states due to a strong $\delta$ interaction supported by a curved surface.
J. Phys. A {\bf 36}, 443--457 (2003)

\bibitem{EK05}
Exner, P., Kondej, S.:
Scattering by local deformations of a straight leaky wire.
J. Phys. A {\bf 38}, 4865--4874 (2005)

\bibitem{EY02}
Exner, P., Yoshitomi, K.:
Asymptotics of eigenvalues of the Schr\"odinger operator with a strong $\delta$-interaction on a loop.
J. Geom. Phys. {\bf 41}, 344--358 (2002)

\bibitem{GMN99}
Gesztesy, F., Makarov, K.A., Naboko, S.N.:
The spectral shift operator.
Operator Theory: Advances and Applications {\bf 108}, 59--90 (1999)

\bibitem{GeMa08}
Gesztesy, F., Malamud, M.M.:
Spectral theory of elliptic operators in exterior domains.
arXiv:0810.1789

\bibitem{GM08}
Gesztesy, F., Mitrea, M.:
Generalized Robin boundary conditions, Robin-to-Dirichlet maps, and Krein-type resolvent formulas for Schr\"odinger operators on bounded Lipschitz domains.
In {\em Perspectives in partial differential equations, harmonic analysis and applications}, Proc. Sympos. Pure Math. {\bf 79},
105--173 Amer. Math. Soc., Providence, RI, (2008)

\bibitem{GM09a}
Gesztesy, F., Mitrea, M.:
Nonlocal Robin Laplacians and some remarks on a paper by Filonov on eigenvalue inequalities.
J. Differential Equations {\bf 247}, 2871--2896, (2009)

\bibitem{GM09}
Gesztesy, F., Mitrea, M.:
Robin-to-Robin maps and Krein-type resolvent formulas for Schr\"odinger operators on bounded Lipschitz domains.
Operator Theory: Advances and Applications {\bf 191}, 81--113 (2009)

\bibitem{GM11}
Gesztesy, F., Mitrea, M.:
A description of all self-adjoint extensions of the Laplacian and Kre\u\i n-type resolvent formulas on non-smooth domains.
J. Anal. Math. {\bf 113}, 53--172 (2011)

\bibitem{GMZ07}
Gesztesy, F., Mitrea, M., Zinchenko. M.:
Variations on a theme of Jost and Pais.
J. Funct. Anal. {\bf 253} 399--448, (2007)

\bibitem{GG}
Gorbachuk, V.I., Gorbachuk, M.L.:
Boundary Value Problems for Operator Differential Equations.
Mathematics and its Applications (Soviet Series) {\bf 48}, Kluwer Academic Publishers Group, Dordrecht, (1991)

\bibitem{Gru68}
Grubb, G.:
A characterization of the non-local boundary value problems associated with an elliptic operator.
Ann. Scuola Norm. Sup. Pisa {\bf 22}, 425--513 (1968)

\bibitem{Gru74}
Grubb, G.:
Properties of normal boundary problems for elliptic even-order systems.
Ann. Scuola Norm. Sup. Pisa Cl. Sci. {\bf 1}, 1--61 (1974)

\bibitem{Gru84}
Grubb, G.:
Singular Green operators and their spectral asymptotics.
Duke Math. J. {\bf 51}, 477--528 (1984)

\bibitem{G08}
Grubb, G.:
Krein resolvent formulas for elliptic boundary problems in nonsmooth domains.
Rend. Semin. Mat. Univ. Politec. Torino {\bf 66}, 271--297 (2008)

\bibitem{G11a}
Grubb, G.:
The mixed boundary value problem, Krein resolvent formulas and spectral asymptotic estimates.
J. Math. Anal. Appl. {\bf 382}, 339--363 (2011)

\bibitem{G11b}
Grubb, G.:
Perturbation of essential spectra of exterior elliptic problems.
Appl. Anal. {\bf 90}, 103--123 (2011)

\bibitem{G11}
Grubb, G.:
Spectral asymptotics for Robin problems with a discontinuous coefficient.
J. Spectr. Theory {\bf 1}, 155--177 (2011)

\bibitem{K76}
Kato, T.:
Perturbation Theory for Linear Operators.
Springer, Berlin (1976)

\bibitem{K49}
Krein, M.G.:
Basic propositions of the theory of representation of Hermitian operators with deficiency index $(m,m)$.
Ukrain. Mat. Zh. {\bf 1}, 3--66 (1949)

\bibitem{LU68}
Ladyzhenskaya, O.A., Uralʹtseva, N.N.:
Linear and Quasilinear Elliptic Equations.
Academic Press, New York-London (1968)

\bibitem{LioMag71}
Lions, J.-L., Magenes, E.:
Non-homogeneous Boundary Value Problems and Applications. Volume I.
Springer, New York, (1972)

\bibitem{M10}
Malamud, M.M.:
Spectral theory of elliptic operators in exterior domains.
Russ. J. Math. Phys. {\bf 17}, 96--125 (2010)

\bibitem{MalNei14}
Malamud, M.M., Neidhardt, H.:
Perturbation determinants for singular perturbations.
Russ. J. Math. Phys. {\bf 21}, 55--98 (2014)

\bibitem{MPS15}
Mantile, A., Posilicano, A., Sini, M.:
Self-adjoint elliptic operators with boundary conditions on not closed hypersurfaces.
J. Differential Equations {\bf 261}, 1--55 (2016) 

\bibitem{MPS16}
Mantile, A., Posilicano, A., Sini, M.:
Limiting absorption principle, generalized eigenfunction expansions and scattering matrix for Laplace operators with boundary conditions on hypersurfaces.
arXiv:1605.03240

\bibitem{M04}
Marletta, M.: Eigenvalue problems on exterior domains and Dirichlet to Neumann maps.
J. Comput. Appl. Math. {\bf 171}, 367--391 (2004).

\bibitem{McL00} 
McLean, W.:
Strongly Elliptic Systems and Boundary Integral Equations, 
Cambridge University Press, 2000.

\bibitem{MPP07}
Mikhailova, A.B., Pavlov, B.S., Prokhorov, L.V.:
Intermediate Hamiltonian via Glazman's splitting and analytic perturbation for meromorphic matrix-functions.
Math. Nachr. {\bf 280}, 1376--1416 (2007)

\bibitem{1987Naboko}
Naboko, S.N.:
On the boundary values of analytic operator-valued functions with a positive imaginary part.
Zap. Nauchn. Sem. Leningrad. Otdel. Mat. Inst. Steklov. (LOMI) {\bf 157}, 55--69 (1987)

\bibitem{1978Neidhardt}
Neidhardt, H.:
Zwei-Raum-Verallgemeinerung des Theorems von Rosenblum und Kato.
Math. Nachr. {\bf 84}, 195--211 (1978)

\bibitem{PA05}
Pavlov, B.S., Antoniou, I.:
Jump-start in the analytic perturbation procedure for the Friedrichs model.
J. Phys. A {\bf 38} 4811--4823 (2005)

\bibitem{1978Pearson}
Pearson, D.B.:
A generalization of the Birman trace theorem.
J. Funct. Anal. {\bf 28}, 182--186 (1978)

\bibitem{P08}
Posilicano, A.:
Self-adjoint extensions of restrictions.
Oper. Matrices {\bf 2}, 483--506 (2008)

\bibitem{P16}
Posilicano, A.:
Markovian extensions of symmetric second order elliptic differential operators.
Math. Nachr. {\bf 287}, 1848--1885 (2014)

\bibitem{PR09}
Posilicano, A., Raimondi, L.:
Krein's resolvent formula for self-adjoint extensions of symmetric second-order elliptic differential operators.
J. Phys. A {\bf 42}, 015204, 11 pages (2009)

\bibitem{P15}
Post, O.:
Boundary pairs associated with quadratic forms.
Math. Nachr. {\bf 289}, 1052--1099 (2016)

\bibitem{RS72}
Reed, M., Simon, B.:
Methods of Modern Mathematical Physics I. Functional Analysis.
Academic Press, New York-London (1972)

\bibitem{RS79}
Reed, M., Simon, B.:
Methods of Modern Mathematical Physics III. Scattering Theory.
Academic Press, New York-London (1979)

\bibitem{Sch12}
Schm{\"u}dgen, K.:
Unbounded Self-adjoint Operators on Hilbert Space.
Graduate Texts in Mathematics {\bf 265}, Springer, Dordrecht (2012)

\bibitem{W03}
Weidmann, J.:
Lineare Operatoren in Hilbertr\"aumen, Teil II.
B.G. Teubner, Stuttgart, (2003)

\bibitem{Y92}
Yafaev, D.R.:
Mathematical Scattering Theory.
Translations of Mathematical Monographs {\bf 105},
American Mathematical Society, Providence, RI, (1992)

\end{thebibliography}
\end{document}